\setlist{% general list settings (enumitem's itemize, enumerate, and description)
  align=left,% left-aligned enumerate
  labelsep=*,% align all item bodies vertically
  leftmargin=*,% begin item content at a variable place depending on the item; use \parindent to set it exactly to \parindent
  topsep=1mm,% space before enumerate
  itemsep=0mm% space between enumerate items
}
\newcommand*{\mysquare}{\rule[0.18em]{0.36em}{0.36em}}
\newcommand*{\mytriangle}{\raisebox{0.12em}{\resizebox{0.48em}{0.48em}{$\blacktriangleright$}}}
\newcommand*{\mybar}{\rule[0.32em]{0.62em}{0.08em}}
\setlist[itemize,1]{label={\mysquare\ }}% itemize label on level 1
\setlist[itemize,2]{label={\mytriangle\ }}% itemize label on level 2
\setlist[itemize,3]{label={\mybar\ }}% itemize label on level 3
\setlist[enumerate,1]{label=\arabic*)}% enumerate label on level 1
\setlist[enumerate,2]{label=\arabic{enumi}.\arabic*)}% enumerate label on level 2
\setlist[enumerate,3]{label=\arabic{enumi}.\arabic{enumii}.\arabic*)}% enumerate label on level 3
\lstdefinestyle{input}{
  backgroundcolor=\color{semilightgray},% background color
  commentstyle=\itshape\color{chocolate},% comment style
  keywordstyle=\color{blue},% keyword style
  stringstyle=\color{deepskyblue},% string style
  numbers=left,% display line numbers on the left side
  numberstyle=\color{middlegray}\tiny% use small line numbers
}
\lstdefinestyle{output}{
  backgroundcolor=\color{lightgray}% background color
}
\lstdefinestyle{Lstyle}{
  language=[LaTeX]TeX,% set programming language
  texcs={},% texcs
  otherkeywords={}% undefine otherkeywords
}
\lstdefinestyle{Rstyle}{
  language=R,% set programming language
  literate={<-}{{$\bm\leftarrow$}}2{<<-}{{$\bm{\mathrel{\bm\leftarrow\mkern-14mu\leftarrow}$}}}2{<=}{{$\bm\le$}}2{>=}{{$\bm\ge$}}2{!=}{{$\bm\neq$}}2,% item to replace, text, length of chars
  keywords={if, else, repeat, while, function, for, in, next, break},% keywords; see R language manual, /usr/local/texlive/2012/texmf-dist/tex/latex/listings/lstlang3.sty
  otherkeywords={}% undefine otherkeywords to remove !,!=,~,$,*,\&,\%/\%,\%*\%,\%\%,<-,<<-,_,/
}
\newcommand{\T}{^{\top}}
\newcommand*{\I}{\mathds{1}}
\newcommand*{\IN}{\mathbb{N}}
\newcommand*{\IR}{\mathbb{R}}
\newcommand*{\IP}{\mathbb{P}}
\newcommand*{\IE}{\mathbb{E}}
\newcommand*{\psii}{{\psi^{-1}}}
\newcommand*{\LS}{\mathcal{LS}}
\newcommand*{\LSi}{\LS^{-1}}
\newcommand*{\D}{\operatorname{D}}
\newcommand*{\Exp}{\operatorname{Exp}}
\newcommand*{\LN}{\operatorname{LN}}
\newcommand*{\E}{\operatorname{E}}
\newcommand*{\U}{\operatorname{U}}
\newcommand*{\VaR}{\operatorname{VaR}}
\newcommand*{\ES}{\operatorname{ES}}
\newcommand*{\R}{\textsf{R}}
\begin{document}
% watermark
% \AddToShipoutPicture{% set up watermark on every page
%   \begin{tikzpicture}[remember picture, overlay]
%     \node[scale=9,rotate=54.74,color=black!18] at (current page.center){\normalfont\sffamily Draft};
%   \end{tikzpicture}%
% }
% ruler
%\setvruler[10pt][1][1][4][1][0pt][0pt][-30pt][\textheight]%

\title{Quasi-random numbers for copula models}

\author{Mathieu\ Cambou \and Marius\ Hofert \and Christiane\ Lemieux}
\institute{Mathieu\ Cambou \at Institute\ of\ Mathematics,\ Station\ 8,\ EPFL,\
  1015\ Lausanne,\ Switzerland, \email{mathieu.cambou@epfl.ch} \and Marius\ Hofert \at
Department\ of\ Statistics\ and Actuarial\ Science,\ University of Waterloo,\ Canada,\ \email{marius.hofert@uwaterloo.ca} \and Christiane\ Lemieux \at
Department\ of\ Statistics\ and Actuarial\ Science,\ University of Waterloo,\ Canada,\ \email{clemieux@uwaterloo.ca}}

\maketitle

\begin{abstract}
The present work addresses the question how sampling algorithms
for commonly applied copula models can be adapted to account for quasi-random
numbers. Besides sampling methods such as
the conditional distribution method (based on a one-to-one
transformation), it is also shown that typically faster sampling
methods (based on stochastic representations) can be used to improve
upon classical Monte Carlo methods when pseudo-random number
generators are replaced by quasi-random number generators. This opens
the door to quasi-random numbers for models well beyond independent margins or the
multivariate normal distribution. Detailed examples (in the context of finance and insurance),
illustrations and simulations are given and software has been developed and
provided in the \R\ packages \texttt{copula} and \texttt{qrng}.
\keywords{Quasi-random numbers \and copulas \and conditional distribution method \and Marshall--Olkin algorithm \and tail events \and risk measures}
\subclass{62H99 \and  65C60}
\end{abstract}

\section{Introduction}
In many applications, in particular in finance and insurance, the quantities of
interest can be written as $\IE[\Psi_0(\bm{X})]$, where
$\bm{X}=(X_1,\dots,X_d):\Omega\rightarrow\IR^d$ is a random vector with
distribution function $H$ on a
probability space $(\Omega,\mathcal{F},\IP)$ and
$\Psi_0:\IR^d\rightarrow\IR$ is a measurable function. The components of
$\bm{X}$ are typically dependent. To account for this dependence, the distribution
of $\bm{X}$ can be modeled by
\begin{align}
  H(\bm{x}) = C(F_1(x_1),\dots,F_d(x_d)),\quad \bm{x}\in\IR^d,\label{cop:mod}
\end{align}
where $F_j(x)=\IP(X_j\le x)$, $j\in\{1,\dots,d\}$, are the marginal distribution
functions of $H$ and $C:[0,1]^d\rightarrow[0,1]$ is a \emph{copula}, i.e., a
distribution function with standard uniform univariate margins; for an
introduction to copulas, see \cite[Chapter~5]{qrm}, \cite{nelsen} or \cite{joe2014}. A copula
model such as \eqref{cop:mod} allows one to separate the dependence structure
from the marginal distributions. This is especially useful in the context of
model building and sampling in the case where $\IE[\Psi_0(\bm{X})]$ mainly depends on the
dependence between the components of $\bm{X}$, so on $C$; for examples of this
type, see Section~\ref{sec:num}.

In terms of copula model \eqref{cop:mod}, we may then write
\begin{align*}
  \IE[\Psi_0(\bm{X})] = \IE[\Psi(\bm{U})]
\end{align*}
where $\bm{U}=(U_1,\dots,U_d):\Omega\rightarrow\IR^d$ is a random vector with
distribution function $C$, $\Psi:[0,1]^d\rightarrow \IR$ is defined as
\begin{align*}
  \Psi(u_1,\dots,u_d)=\Psi_0(F_1^-(u_1),\dots,F_d^-(u_d)),
\end{align*}
and $F_j^-(p)=\inf\{x\in\IR : F_j(x)\ge p\}$, $j\in\{1,\dots,d\}$, are the marginal
quantile functions. If $C$ and the margins $F_j$, $j\in\{1,\dots,d\}$, are known, we
can use Monte Carlo simulation to estimate $\IE[\Psi(\bm{U})]$. For a (pseudo-)random
sample $\{\bm{U}_i:i=1,\dots,n\}$ from $C$, the (classical) Monte Carlo
estimator of $\IE[\Psi(\bm{U})]$ is given by
\begin{align*}
    \frac{1}{n} \sum_{i=1}^n \Psi(\bm{U}_i) &\approx    \IE[\Psi(\bm{U})].
\end{align*}
The main challenge of a Monte Carlo simulation is thus the sampling of the
copula. This challenge also holds for quasi-Monte Carlo (QMC) methods, and is
actually amplified by the fact that these methods are more sensitive to certain
properties of the function $\Psi$. Thus the choice of the construction method of
a stochastic representation for $C$ can have complex effects on the performance
of QMC methods, a feature that does not show up when using Monte Carlo
methods. The present work includes a careful analysis of these effects, as they
must be thoroughly understood in order to successfully replace pseudorandom
numbers by quasi-random numbers into different copula sampling algorithms.

Let us briefly summarize the idea behind QMC methods and how they can be used
for copula models; more precise definitions on some of the concepts used here
will be given later. The idea is to start with a so-called \emph{low-discrepancy
  point set} $P_n =\{\bm{v}_1,\ldots,\bm{v}_n\} \subseteq [0,1)^{k}$, with
$k \ge d$, which is designed so that its empirical distribution over $[0,1)^k$
is closer (in a sense to be defined later) to the uniform distribution
$\U[0,1)^k$ than a set of independent and identically (i.i.d.) random points
is. Assuming that for $\bm{U}' \sim \U[0,1]^k$ we have a transformation $\phi_C$ such
that $\phi_C(\bm{U}')\sim C$, we can then construct the approximation
\begin{align}
 \frac{1}{n} \sum_{i=1}^n \Psi(\phi_C(\bm{v}_i))  \approx   \IE[\Psi(\bm{U})] .\label{eq:CopQmc1}
\end{align}
Figure \ref{fig:qrng:C} shows the points $\phi_C(\bm{v}_i)$ obtained using either pseudo-random or quasi-random numbers, for a transformation $\phi_C$ designed for the {\em Clayton copula}.
\begin{figure}[htbp]
  \centering
  \includegraphics[width=0.425\textwidth]{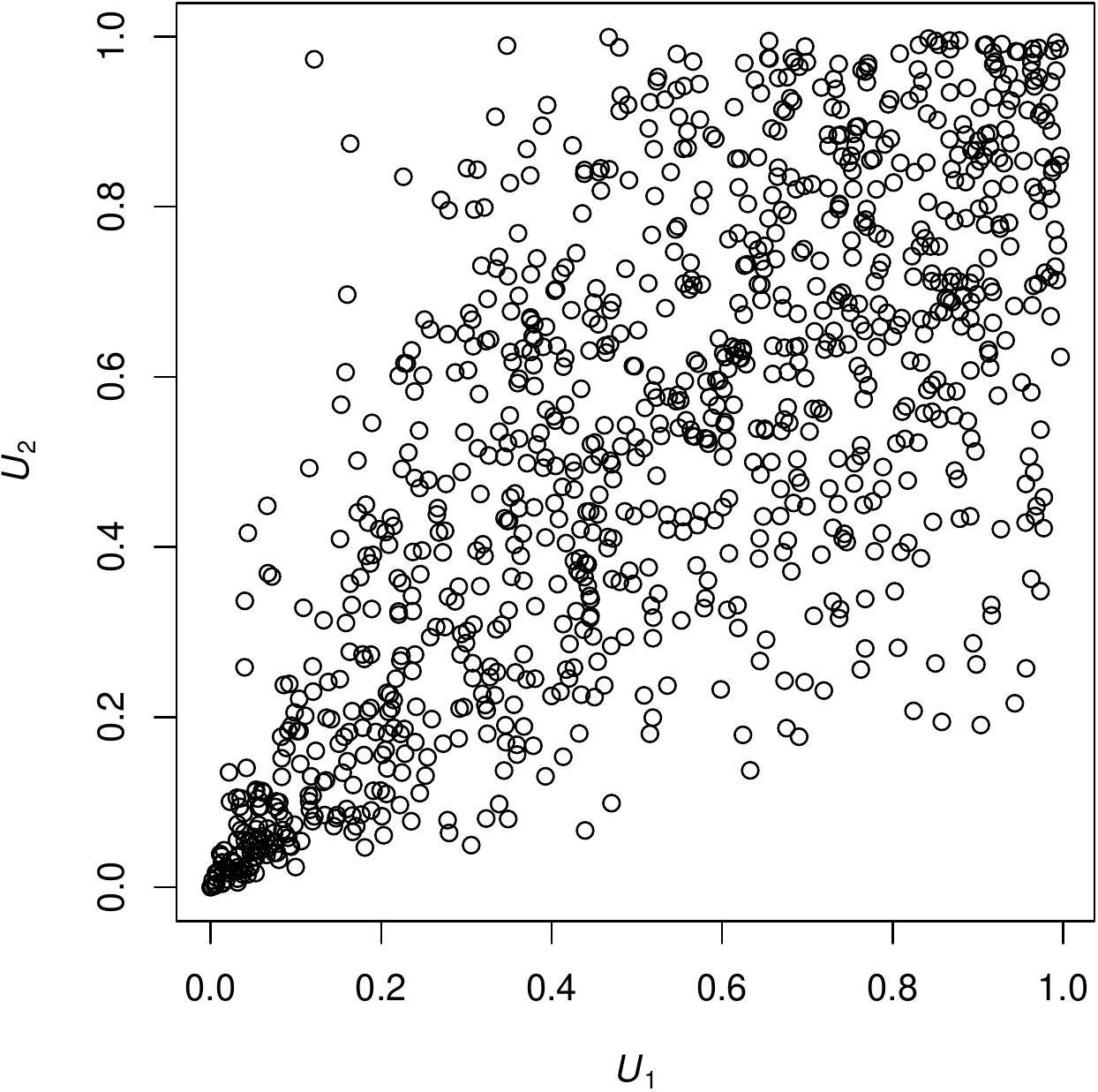}\\[2mm]
  \includegraphics[width=0.425\textwidth]{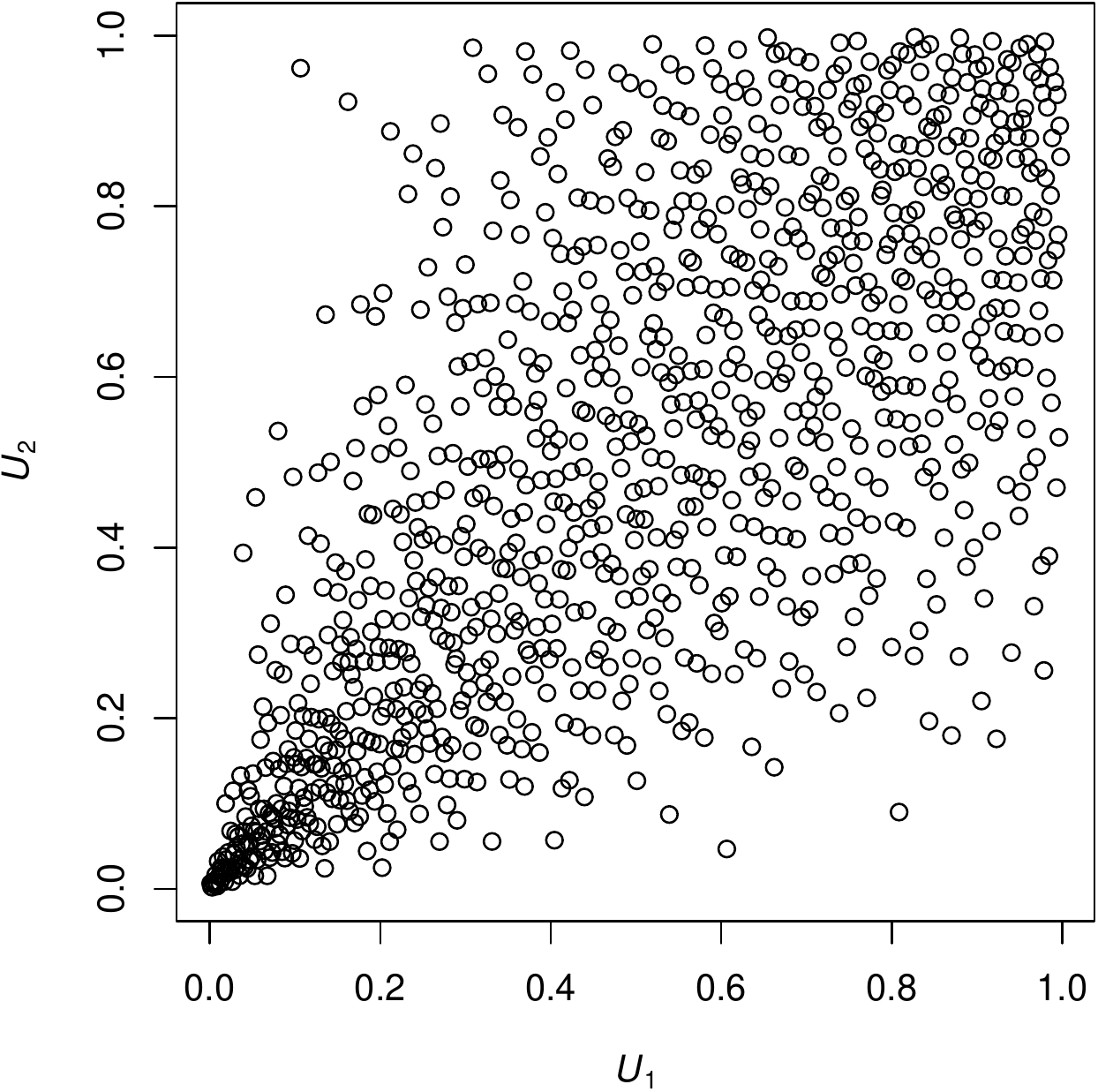}%
  \caption{1000 realizations of a bivariate Clayton copula with $\theta=2$ (Kendall's
    tau equals 0.5), generated by a
    pseudo-random number generator (top) and by a quasi-random number generator (bottom).}
  \label{fig:qrng:C}
\end{figure}

QMC methods are typically used to approximate integrals
of functions over the unit cube via
\begin{align}
\label{eq:I:Q:Intro}
Q_n=\frac{1}{n} \sum_{i=1}^n f(\bm{v}_i)\approx\int_{[0,1)^k} f(\bm{v})d\bm{v}=I(f).
\end{align}
A widely used upper bound for the integration error $|I(f)-Q_n|$ is the Koksma--Hlawka inequality
(see, e.g., \cite{Niederreiter1992}), which
is of the form $V(f)D^*(P_n)$, where $V(f)$ measures the variation of $f$ in the sense of Hardy and Krause, while $D^*(P_n)$ measures the discrepancy of $P_n$, i.e., how far $P_n$ is from $\U[0,1)^k$.

To analyze the properties of the QMC approximation (\ref{eq:CopQmc1}) for
$\IE[\Psi(\bm{U})]$, there are two possible approaches. The first one is to
define $f = \Psi \circ \phi_C$ and work within the traditional framework given
by \eqref{eq:I:Q:Intro}, the Koksma--Hlawka inequality with this composed
function and the low-discrepancy point set $P_n$. The second one is to think of
(\ref{eq:CopQmc1}) as approximating
\begin{align*}
\IE[\Psi(\bm{U})]= \int_{[0,1)^d} \Psi(\bm{u})dC(\bm{u})
\end{align*}
and work with generalizations of the Koskma--Hlawka inequality that apply to
measures other than the Lebesgue measure; see
\cite{hlawkamuck1972} or \cite{dick2014}. In the latter case, we work with the function
$\Psi$ and view the transformation $\phi_C$ as one that is applied to the
low-discrepancy point set $P_n$ rather than to $\Psi$. That is, here we work
with the transformed point set
$\tilde{P}_n = \{\phi_C(\bm{v}_1),\ldots,\phi_C(\bm{v}_n)\}$ and analyze its
quality via measures of discrepancy that quantify its distance to $C$ rather
than comparing $P_n$ to $\U[0,1)^k$.

%%%Feb. 12 2016
%\blue{
QMC methods have been used in a variety of applications, but so far most of the problems considered have been such that the stochastic models used can be formulated using independent random variables (e.g., a vector of dependent normal variates can be written as a linear transformation of independent normal variates).  In such cases, the transformation of the uniform vector $\bm{v}$ into observations from the desired stochastic model can be easily obtained by transforming each component $v_j$ of $\bm{v}$ using the inverse transform method, which is deemed to work well with QMC in part because of its monotonicity, and also because it corresponds to an overall one-to-one transformation from $[0,1)^d$ to $\mathbb{R}^d$. %}

%  \blue{
In the more general copula setting considered in this paper,
at first sight %, and whether we use the first or the second approach,
the so-called conditional distribution method (CDM) (which is the inverse of the
copula-based version of the Rosenblatt transform) appears to be a good choice to
use with quasi-random numbers, as it is the direct multivariate extension of the inverse transform
mentioned in the previous paragraph. Namely, it is a one-to-one transformation
%%% Feb. 12 2016
that maps $[0,1)^d$ to $[0,1)^d$  %(i.e., $k$ takes the minimum value $d$ in this case)
and it  is monotone in each variable.
%%%
A transformation with $k=d$ is certainly desirable (and preferable to a many-to-one transformation with $k>d$) when used in conjunction with QMC methods, since these methods do better on problems of lower dimension. Also, intuitively the monotonicity
should be helpful to preserve the smoothness of $\Psi$ (for the
first approach) and the low-discrepancy of $P_n$ (for the second approach). %, e.g., in order for convex sets to be preserved under $\phi_C$).
%%%
An additional advantage of
the CDM is that it is applicable to any copula $C$ (and the only known algorithm
such general). However, the involved (inverses of the) conditional copulas are
often challenging to evaluate which has led to other sampling algorithms being
more frequently used. An example is the Marshall--Olkin algorithm for sampling
Archimedean copulas, which we also address in this work. %}

The paper is organized as follows. Section~\ref{sec:qrng} provides a short
introduction to quasi-random numbers. Section~\ref{sec:qrng:cop} shows how
quasi-random samples from various copulas (and thus multivariate models with
these dependence structures) can be obtained using different sampling
algorithms. Detailed examples are given. In Section~\ref{sec:discrep} we discuss
the theoretical background supporting each of the two approaches mentioned
earlier to analyze the use of low-discrepancy sequences for copula
sampling. Numerical results are provided in Section \ref{sec:num}. Finally,
Section~\ref{sec:conclusion} includes concluding remarks and a discussion of
future work. Note that most results and figures presented in this paper (as well
as additional experiments conducted) can be found in the \R\ packages
\texttt{copula} (see the vignette \texttt{qrng}) and \texttt{qrng} (see
\texttt{demo(basket\_options)} and \texttt{demo(test\_functions)}).

\section{Quasi-random numbers}\label{sec:qrng}
Here we assume that a random sample $\{\bm{U}_i:i=1,\dots,n\}$ from a copula $C$
can be generated by transforming a random sample $\{\bm{U}_i':i=1,\dots,n\}$
from $\U[0,1]^k$ with $k\ge d$; several algorithms
for copula models fall under this setup. Due to the independence of the vectors
$\bm{U}_i'$, realizations of the sample $\{\bm{U}_i':i=1,\dots,n\}$ (obtained by
so-called pseudo-random number generators (PRNGs)) will inevitably show regions
of $[0,1]^k$ which are lacking points and other areas of $[0,1]^k$ which contain more
samples than expected by the uniform distribution. %; see the left-hand side of
%Figure~\ref{fig:qrng} around $(0.55,0.6)$ and $(0.6,0.5)$.
To reduce this problem of an inhomogeneous concentration of samples, quasi-random
number generators (QRNGs) do not aim at mimicking i.i.d.\ samples but instead at
producing a homogeneous coverage of $[0,1]^k$. %; see the right-hand side of
% Figure~\ref{fig:qrng}.
% \begin{figure}[htbp]
%   \centering
%   \includegraphics[width=0.48\textwidth]{./Figures/fig_prng}\\[2mm]
%   \includegraphics[width=0.48\textwidth]{./Figures/fig_qrng}%
%   \caption{1000 realizations of independent uniform random variates in $[0,1]^2$
%     generated by a PRNG (left) and a QRNG (right). Samples from the PRNG are
%     lacking, e.g., around the point $(0.55,0.6)$, whereas samples are
%     overly concentrated around $(0.6,0.5)$. In contrast, samples of the QRNG are
%     more homogeneously distributed.}
%   \label{fig:qrng}
% \end{figure}

%C -- start
The homogeneity of a sequence of points over the unit hypercube can be measured by its
discrepancy, which relates to the error incurred by representing the (Lebesgue-)measure of
subsets of the unit hypercube by the fraction of points in these
subsets. Quasi-random sequences aim at achieving smaller discrepancy than
pseudo-random number sequences and are thus also called
\emph{low-discrepancy sequences}. %Such sequences typically achieve a discrepancy
%of $O(n^{-1} \log^d n)$. %; see \cite{Niederreiter1992} and \cite{DiPi10}. More details on the notion of discrepancy are
%given in Section~\ref{sub:discrepancy} below. \blue{A review of the low-discrepancy sequences
%used in this paper is given in Section~\ref{sub:low-discrepancy_seq}}; see \cite{Niederreiter1992},
%\cite{morokoffqmc},  \cite[Chapter~5]{pG04}, and \cite{DiPi10} for more details.
%%
%\blue{
The rest of this section reviews concepts related to QRNG that are used in this paper. The reader is referred to \cite{Niederreiter1992},
\cite{morokoffqmc},  \cite[Chapter~5]{pG04}, and \cite{DiPi10} for more details. %}

\subsection{Discrepancy}\label{sub:discrepancy}
The notion of discrepancy  applies to sequences of points $X=\{\bm{v}_1,\bm{v}_2,\ldots\}$ in the
unit hypercube $[0,1)^k$.  Denote by
$P_n =\{\bm{v}_1,\ldots,\bm{v}_n\}\subseteq [0,1)^k$ the first $n$ points of the
sequence. Let $\mathcal{J^*}$ be the set of intervals of $[0,1)^k$ of the form
$[\bm{0},\bm{z})=\prod_{j=1}^k [0,z_j)$, where $0 <z_j \le 1, j=1,\dots,k$. Then the {\em
  discrepancy function} of $P_n$ on an interval $[\bm{0},\bm{z})$ is the difference
\begin{align*}
E([\bm{0},\bm{z});P_n)=\frac{A([\bm{0},\bm{z});P_n)}{n}-\lambda([\bm{0},\bm{z})),
\end{align*}
where $A([\bm{0},\bm{z});P_n)=\#\{i;\ 1\leq i \leq n,\bm{v}_i \in [\bm{0},\bm{z})\}$ is the number of points from $P_n$ that fall in $[\bm{0},\bm{z})$ and $\lambda([\bm{0},\bm{z})) = \prod_{j=1}^k z_j$ is the Lebesgue measure of $[\bm{0},\bm{z})$.

The \emph{star discrepancy} $D^*$ of $P_n$ is defined by
\begin{align*}
D^*(P_n)=\sup_{[\bm{0},\bm{z}) \in \mathcal{J^*}} \vert E([\bm{0},\bm{z});P_n) \vert.
\end{align*}
An infinite sequence $X$ satisfying $D^*(P_n)\in O(n^{-1}\log^k n)$ is
said to be a {\em low-discrepancy sequence}. %, where $P_n$ consists of the first $n$ points of $X$. %\footnote{MH: The notation introduced here does not seem to be  used elsewhere. Can we remove it?}
%If instead of restricting the intervals $[\bm{0},\bm{z})$ to be anchored at the origin, we allow them to take the form $\bm{z} = \prod_{j=1}^d [y_j,z_j)$ with $0 \le y_j < z_j \le 1$, then the corresponding discrepancy is denoted $D(P_n)$.
%\christiane{I added this because we probably will need it when discussing Hlawka and Muck.}
%

% Given $\mathcal{J}$ the set of all intervals of $[0,1)^d,$ of type $[\bm{a},\bm{b}]^d$, with $0\leq a_j\leq b_j< 1$, $j\in\{1,\dots,d\}$, the \textit{discrepancy} $D$ of $P_n$ is defined as
% \begin{align}\label{eq:Lebesgue_discrepancy}
% 	D(P_n)=\sup_{B \in \mathcal{J}} \vert E(B;P_n) \vert.
% \end{align}

For a function $\Psi:[0,1)^k \rightarrow \mathbb{R}$, we have the well-known Koksma--Hlawka error bound  given by
\begin{align}\label{eq:Koksma-Hlawka}
  \biggl|\frac{1}{n} \sum_{i=1}^n \Psi(\bm{v}_i) -\IE[\Psi(\bm{U}')]\biggr|\le V(\Psi)D^*(P_n),
\end{align}
where $\bm{U}'\sim \U[0,1]^k$ and $V(\Psi)$ denotes the variation of the function
$\Psi$ in the sense of Hardy and Krause. %; see \cite[Section 2.2]{Niederreiter1992} or \cite[Section 5.1.3]{pG04}.
See \cite{Owen04} for a
detailed account of the notion of variation and its applicability in practice. We also refer the interested reader to \cite{HartKeinTichy2004} and \cite{Sobol73} for results handling unbounded functions (and thus of unbounded variation).
\iffalse
The Koksma--Hlawka error bound is an instance of a more general error bound of
the form
\begin{align*}
 \biggl|\frac{1}{n} \sum_{i=1}^n \Psi(\bm{v}_i) -\IE[\Psi(\bm{U})]\biggr|\le V^q(\Psi)D^p(P_n),
\end{align*}
provided in \cite{Hickernell98}, where
\begin{align*}
  V^q(\Psi)&= \left[\sum_J \int\limits_{[0,1)^{|J|}}\!\!\left(
              \left|\frac{\partial^{|J|} \Psi}{\partial \bm{v}_J}
              \right|_{v_j=1\,\forall j\notin
  J}\right)^qd\bm{v}_J\right]^{1/q}\text{and}\\
  D^p(P_n) &= \left[\sum_J \int\limits_{[0,1)^{|J|}}\!\!\! E^p(\bm{z}_J;P_n(J)) d\bm{z}_J\right]^{1/p};
\end{align*}
here, $\emptyset \neq J \subseteq \{1,\ldots,d\}$, $P_n(J)$ is the projection of
$P_n$ over the subspace indexed by the variables in $J$,
$d\bm{z}_J = \prod_{j \in J} dz_j$, $p,q\ge 1$ are such that
$\frac{1}{p}+\frac{1}{q} = 1$, and it is assumed that $\Psi$ has mixed partial
derivatives that are in $L_q$, i.e., the space of all functions whose $q$th
power of their absolute value is Lebesgue integrable. In addition to the choice
$p=\infty,q=1$ that leads to the Koksma--Hlawka inequality, another common
choice is to take $p=q=2$, which has the advantage of making the corresponding
$L_2$-discrepancy $D^2(P_n)$ computationally tractable, unlike the
star-discrepancy $D^*(P_n)$. The case $p=q=2$ was originally derived in
\cite{Zaremba68}; a formula for computing $D^2(P_n)$ is provided in
\cite[Eq. (5.1c)]{Hickernell98}.
\fi

\subsection{Low-discrepancy sequences}\label{sub:low-discrepancy_seq}
There are two main approaches for constructing low-discre\-pancy sequences:
integration lattices and digital sequences. %The family of lattices is more often used to
%construct point sets of a fixed size $n$ (as opposed to sequences), but can also
%be used to construct point sets $P_n$ extensible in $n$ (see \cite{HiNi03}, for
%example).
%A brief overview of lattices is provided in the appendix. Although we do not use this construction in our numerical experiments of Section \ref{sec:num}, in Section \ref{sec:discrep} we provide some results on the variance of a lattice-based estimators in the context of copula models.
Only the latter are used in this paper, so our discussion will focus on those.

Digital sequences contain the well-known constructions of \cite{Sobol67},
\cite{Faure82}, and \cite{Niederreiter87}, and  are also closely related to the
sequence proposed in \cite{Halton60}. The basic building block for this
construction is the  {\em van der Corput sequence in base $b \ge 2$}, defined as
\begin{align}
  S_b(i) = \sum_{r=0}^\infty a_r(i) b^{-r-1},\quad i\in\IN,\label{eq:vdc}
\end{align}
where $a_r(i)$ is the $r$th digit of the $b$-adic expansion of $i-1=\sum_{r=0}^\infty a_r(i)\ b^r$.
To construct a sequence of points in $[0,1)^k$  from this one-dimensional sequence, one approach is the one proposed by \cite{Halton60}, which consists of choosing $k$ pairwise relatively prime integers $b_1,\ldots,b_k$ and defining the $i$th point of the sequence as
\begin{align*}
\bm{v}_i = (S_{b_1}(i),\ldots,S_{b_k}(i)), \quad i\in\IN.
\end{align*}

Another possibility is to fix the base $b$, and choose $k$ linear
transformations that are then applied to the digits $a_r(i)$ from the expansion
of $i-1$ before being used in (\ref{eq:vdc}) to define a real number between 0
and 1. More precisely, let $M_1,\ldots,M_k$ be (unbounded)
``$\infty \times \infty$'' matrices with entries in $\mathbb{Z}_b$ and let
\begin{align}
  S_b^{M_j}(i)=\sum_{r=0}^\infty \sum_{l=0}^{\infty}m_{r+1,l+1}a_l(i) b^{-r-1},\label{eq:genvdc}
\end{align}
where $m_{r,l}$ is the element in the $r$th row and $l$th column of $M_j$. Here
we assume for simplicity that $b$ is prime and all operations in
(\ref{eq:genvdc}) are performed in the finite field $\mathbb{F}_b$. One can then
define a sequence of points in $[0,1)^k$ by taking
\begin{align}
 \bm{v}_i =  (S_b^{M_1}(i),\ldots,S_b^{M_k}(i))\label{eq:genVDC}
\end{align}
as its $i$th point. Sobol' was the first to propose such a construction, working
in base $2$ and defining the matrices $M_1,\ldots,M_k$ so that he
was able to prove that the obtained sequence has
$D^*(P_n) \in O(n^{-1} \log^k n)$; see \cite{Sobol67}.
\iffalse
Next, \cite{Faure82} proposed a
construction with $b \ge d$ prime and matrices $M_j$ given by the successive powers
of the Pascal matrix over $\mathbb{Z}_b$, which also achieves the rate $O(n^{-1} \log^d n)$
for its star-discrepancy, in addition to having a certain optimality property related to
the equidistribution of its points in $[0,1)^d$. \cite{Niederreiter87} extended
these early constructions and, among other things, provided a more general framework
in which the base $b$ can be any positive integer. This led to the concept of
digital sequences, which is reviewed thoroughly in \cite{Niederreiter1992} and \cite{DiPi10}.
\fi

We also point out that Halton sequences can be generalized using the same idea
as in \eqref{eq:genVDC}. That is, one can choose matrices $M_1,\ldots,M_k$ with
the elements of $M_j$ in $\mathbb{Z}_{b_j}$, and ``scramble'' the digits of the
expansion of $i-1$ before reverting them via (\ref{eq:vdc}) to produce a number
between 0 and 1. A very simple way to achieve this is via diagonal matrices
$M_j$, each containing a well-chosen element (or factor) of $\mathbb{Z}_{b_j}$. In our
numerical experiments, we use such an approach, with the factors provided in
\cite{qFAU07a}; see the \R\ package \texttt{qrng} for an
implementation.

\subsection{Randomized quasi-Monte Carlo}
In contrast to the error rate $O(n^{-1/2})$ for
Monte Carlo methods based on PRNGs, approximations based on
QRNGs have the advantage of having an error in $O(n^{-1}\log^kn)$ when the
function of interest $\Psi$ is of bounded variation. However, %in order for a given method to be truly useful
in practice it is also
important to be able to estimate the corresponding error. While bounds such as
the Koksma--Hlawka inequality are useful to understand the behaviour of
approximations based on quasi-random sequences, they do not provide a practical
way to estimate the error. To circumvent this problem, an approach that is often
used is to \emph{randomize} a low-discrepancy point set in such a way that its
high uniformity (or low discrepancy) is preserved, but at the same time
unbiased estimators can be constructed (and sampled) from it. %\blue{
Another advantage of this approach is that variance expressions can be derived and compared with Monte Carlo sampling for wider classes of functions, i.e., not necessarily of bounded variation (see \cite{Owen97a}, \cite{Lemieux2009} and the references therein). %}
This approach gives rise to \emph{randomized quasi-Monte Carlo (RQMC)} methods.

To apply this approach, we need a randomization function
$r:[0,1)^s \times [0,1)^k \rightarrow [0,1)^k$ with $s \ge k$ such that for any
fixed $\bm{v} \in [0,1)^k$, we have that if $\bm{U}' \sim \U[0,1]^s$, then
$r(\bm{U}',\bm{v}) \sim \U[0,1]^k$.
%In other words, if we think of $\bm{v}$ as a
%point from a (deterministic) low-discrepancy point set $P_n$ and $\bm{V}$ as the
%source of randomness that will randomize $P_n$, then we create a new randomized
%point $\tilde{\bm{v}}$ which is $\U[0,1]^d$ by applying $r$ to $\bm{v}$ with the
%random vector $\bm{V}$. Note that this means that
%Hence for any transformation
%$\phi_C$ applied to $\tilde{\bm{v}}$, the distribution of
%$\phi_C(\tilde{\bm{v}})$ is the same as that of $\phi_C(\bm{W})$ with $\bm{W}\sim\U[0,1]^d$ %directly.
Hence  the individual RQMC samples have
the same properties as those from a random sample; the difference lies in the fact
that the RQMC samples are dependent.

An early randomization scheme originally proposed by \cite{CranleyPatterson76} is to take
\begin{align}
  r(\bm{U}',\bm{v}) = (\bm{v} + \bm{U}') \bmod{1}.\label{eq:RandomShift}
\end{align}
A randomized point set is then obtained by generating a uniform vector $\bm{U}'$ and letting
$\tilde{P}_n(\bm{U}') = \{\tilde{\bm{v}}_1,\ldots,\tilde{\bm{v}}_n\}$, where
%\begin{align*}
$\tilde{\bm{v}}_i = r(\bm{U}',\bm{v}_i),\quad i\in\{1,\ldots,n\}.$
%\end{align*}
%\blue{
Hence the same shift $\bm{U}'$ is applied to all points in $P_n$.  %}
If we let $\bm{U}'_1,\ldots,\bm{U}'_B$ be independent $\U[0,1]^s$ vectors, we can construct $B$ i.i.d.\ unbiased estimators
\begin{align*}
  \hat{\mu}^l_n = \frac{1}{n} \sum_{\tilde{\bm{v}}_i \in \tilde{P}_n(\bm{U}'_l)} \Psi(\phi_C(\tilde{\bm{v}}_i)), \quad l\in\{1,\ldots,B\}
\end{align*}
for $\IE[\Psi(\bm{U})]$, whose variances can be estimated via the sample variance of $\hat{\mu}^1_n,\ldots,\hat{\mu}^B_n$.

In addition to the simple random shift described in (\ref{eq:RandomShift}),
several other randomization schemes have been proposed and studied. A popular
randomization method for digital nets is to ``scramble'' them, an idea
originally proposed by \cite{Owen95} and subsequently studied by \cite{Owen97a},
\cite{Owen97b},  \cite{Owen03}, \cite{Matousek98} and \cite{HongHickernell2003}, among
others.
%An excellent review of various scrambling schemes and their associated
%variances can be found in \cite{Owen03}.

A simpler randomization for digital nets is to use the digital counterpart of (\ref{eq:RandomShift}), where instead of adding two real numbers modulo 1, we add (in $\mathbb{Z}_b$)  the digits of their base $b$ expansion. That is, for $u=\sum_{r=0}^{\infty} u_r b^{-r-1}$ and $v =\sum_{r=0}^{\infty} v_r b^{-r-1}$, we let
\[
u \oplus_b v = \sum_{r=0}^{\infty} ((u_r+v_r) \bmod{b}) b^{-r-1}
\]
and define
$r(\bm{u},\bm{v}) = \bm{u} \oplus_b \bm{v},$
where the $\oplus_b$ operation is applied component-wise to the $k$ coordinates of $\bm{u}$ and $\bm{v}$. The same idea can be applied to randomize Halton sequences (as shown, e.g., in \cite{Lemieux2009}), but where a different base $b$ is used in each of the $k$ coordinates. Digital shifts for the Sobol' and generalized Halton sequences are available in our \R\ package \texttt{qrng}.

%%% INSERT MARIUS PORTION HERE

\section{Quasi-random copula samples}\label{sec:qrng:cop}
Sampling procedures for a $d$-dimensional copula $C$ can be viewed as
transformations $\phi_C:[0,1]^k\to[0,1]^d$ for some $k\ge d$, such that, for
$\bm{U}'\sim \U[0,1]^k$, $\bm{U}:=\phi_C(\bm{U}')\sim C$; that is, $\phi_C$
transforms independent $\U[0,1]$ random variables to $d$ dependent random
variables with distribution function $C$.

The case $k=d$ is mostly known and applied as \emph{conditional distribution
  method (CDM)} and involves the inversion method for sampling univariate
conditional copulas (although, for example, for Archimedean copulas another
transformations $\phi_C$ with $k=d$ is known; see \cite{WVSsamplingcopulas}).
This approach thus naturally uses $d$ independent $\U[0,1]$ random variables as
input. The case $k\ge d$ (often: $k>d$) is typically known as \emph{stochastic
  representation} and is usually based on  sampling
$k$ univariate random variables from
elementary probability distributions, as we will see in Section \ref{subsec:StochRep}.
%that can be %; although not necessarily done, the latter
%could obviously also be
%sampled via the inversion method.
%and thus fit in our framework.

In what follows we consider the above two approaches and show how they can be
adapted to quasi-random number generation.

\subsection{Conditional distribution method and other one-to-one transformations ($k=d$)}\label{sec:qrng:cop_cd}
\subsubsection{Theoretical background}
The only known general sampling approach which works for any copula is the
CDM. For $j\in\{2,\dots,d\}$, let
\begin{align*}
  C(u_j\,|\,u_1,\dots,u_{j-1})=\IP(U_j\le u_j\,|\,U_1=u_1,\dots,U_{j-1}=u_{j-1})
\end{align*}
denote the \emph{conditional copula of $U_j$ at $u_j$ given
  $U_1=u_1,\dots$,\\$U_{j-1}=u_{j-1}$}. If $C^-(u_j\,|\,u_1,$ $\dots,u_{j-1})$ denotes
the corresponding quantile function, the CDM is given as follows; see
\cite{embrechtslindskogmcneil2003} or \cite[p.~45]{mariusphdthesis}.
\begin{theorem}[Conditional distribution method]
  Let $C$ be a $d$-dimensional copula, $\bm{U}'\sim\U[0,1]^d$, and $\phi_C^{\text{CDM}}$ be
  given by
  \begin{align*}
    U_1&=U_1',\\
    U_2&=C^{-}(U_2'\,\vert\,U_1),\\
    &\phantom{:}\vdots\\
    U_d&=C^{-}(U_d'\,\vert\,U_1,\dots,U_{d-1}).
  \end{align*}
  Then $\bm{U}=(U_1,\dots,U_d)=\phi_C^{\text{CDM}}(\bm{U}')\sim C$.
\end{theorem}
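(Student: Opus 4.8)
The plan is to prove the claim by induction on the dimension, showing that after the first $j$ assignments the subvector $(U_1,\dots,U_j)$ has distribution function equal to the $j$-dimensional margin $C_{1:j}$ of $C$; taking $j=d$ then yields $\bm{U}\sim C$. The base case $j=1$ is immediate, since $U_1=U_1'\sim\U[0,1]$ is exactly the (uniform) first margin of $C$.

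For the inductive step I would assume $(U_1,\dots,U_{j-1})\sim C_{1:j-1}$ and deduce $(U_1,\dots,U_j)\sim C_{1:j}$. The engine is the standard quantile-transform identity for generalized inverses: for any univariate distribution function $F$ with $F^-(p)=\inf\{x:F(x)\ge p\}$ one has the equivalence $F^-(p)\le x\iff p\le F(x)$, so that $F^-(V)\sim F$ whenever $V\sim\U[0,1]$. I would apply this with $F=C(\cdot\,|\,u_1,\dots,u_{j-1})$, the conditional copula, which for (almost) every fixed conditioning vector is a genuine distribution function on $[0,1]$.

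The key step is then to condition on $(U_1,\dots,U_{j-1})$. Because $U_j'$ is uniform and independent of the earlier coordinates, conditionally on $(U_1,\dots,U_{j-1})=(u_1,\dots,u_{j-1})$ the assignment $U_j=C^-(U_j'\,|\,u_1,\dots,u_{j-1})$ has conditional distribution function $C(\cdot\,|\,u_1,\dots,u_{j-1})$ by the identity above. Integrating against the law of $(U_1,\dots,U_{j-1})$ and using the inductive hypothesis, I would then compute
\begin{align*}
\IP(U_1\le u_1,\dots,U_j\le u_j)
&=\IE\bigl[\I_{\{U_1\le u_1,\dots,U_{j-1}\le u_{j-1}\}}\,C(u_j\,|\,U_1,\dots,U_{j-1})\bigr],
\end{align*}
which by the disintegration $dC_{1:j}=dC(u_j\,|\,u_1,\dots,u_{j-1})\,dC_{1:j-1}$ --- i.e.\ the very definition of the conditional copula --- equals $C_{1:j}(u_1,\dots,u_j)$, closing the induction.

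The main obstacle I anticipate is purely measure-theoretic: giving precise meaning to the conditioning and to $F^-(V)\sim F$ when $F$ may have flat stretches or jumps. The cleanest route is to invoke regular conditional distributions, so that $C(u_j\,|\,u_1,\dots,u_{j-1})$ is a bona fide version of the conditional distribution function and the disintegration above holds by construction; the equivalence $F^-(p)\le x\iff p\le F(x)$ then handles the generalized inverse without any continuity assumption. In the absolutely continuous case one can sidestep this machinery by writing the conditional copula as a ratio of mixed partial derivatives of the margins and applying Fubini, but I would present the argument via conditional distributions so that it covers \emph{arbitrary} copulas, which is precisely the generality claimed.
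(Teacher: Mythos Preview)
Your argument is correct and is precisely the standard induction proof of the conditional distribution method. Note, however, that the paper does not actually supply its own proof of this theorem: it merely states the result and refers the reader to \cite{embrechtslindskogmcneil2003} and \cite[p.~45]{mariusphdthesis}. The proof found in those references is exactly the induction you outline (chain-rule disintegration of the joint law into successive conditionals, combined with the generalized-inverse/quantile-transform identity), so your proposal matches the intended argument.
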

To find the conditional copulas $C(u_j\,\vert\,u_1,\dots,u_{j-1})$,
for $j\in\{2,\dots,d\}$, for a specific copula $C$, the following result (which
holds under mild assumptions) is often applied. A rigorous proof can be found in
\cite[p.~20]{schmitzthesis}, an implementation is provided by the function
\texttt{rtrafo()} in the \R\ package \texttt{copula}. %\blue{
  The corollary that follows is an immediate consequence of Sklar's theorem, for example. %}
\begin{theorem}[Computing conditional copulas]\label{thm:schmitz}
  Let $C$ be a $d$-dimensional copula, which, for $d\ge 3$, admits continuous partial
  derivatives with respect to the first $d-1$ arguments. For
  $j\in\{2,\dots,d\}$ and $u_l\in[0,1]$, $l\in\{1,\dots,j\}$,
  \begin{align}
    C(u_j\,|\,u_1,\dots,u_{j-1})&=\frac{\D_{j-1\dots 1}
      C(u_1,\dots,u_j)}{\D_{j-1\dots 1} C(u_1,\dots,u_{j-1})} \notag\\
      &=\frac{\D_{j-1\dots
        1} C(u_1,\dots,u_j)}{c(u_1,\dots,u_{j-1})},\label{eq:CDM}
  \end{align}
  where $\D_{j-1\dots 1}$ denotes the derivative with respect to the first $j-1$
  arguments, $C(u_1,\dots,u_j)$ denotes the marginal copula corresponding to the first $j$
  components and $c(u_1,\dots,u_{j-1})$ denotes the density of $C(u_1,\dots,u_{j-1})$. If $C$
  admits a density, then \eqref{eq:CDM} equals
  \begin{align}
    C(u_j\,|\,u_1,\dots,u_{j-1})=\frac{\int_0^{u_j}c(u_1,\dots,u_{j-1},z_j)\,dz_j}{c(u_1,\dots,u_{j-1})}.\label{eq:CDM:dens}
  \end{align}
\end{theorem}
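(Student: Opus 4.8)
The plan is to derive formula (\ref{eq:CDM}) directly from the definition of the conditional copula as a conditional probability, expressing everything in terms of partial derivatives of the marginal copula $C(u_1,\dots,u_j)$. Both the numerator and denominator are (mixed) partial derivatives of marginal copulas, so the entire argument reduces to differentiating the defining relation $C(u_j\mid u_1,\dots,u_{j-1}) = \IP(U_j \le u_j \mid U_1 = u_1,\dots,U_{j-1} = u_{j-1})$ and identifying each piece with the appropriate derivative.

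First I would set up notation by letting $C_j(u_1,\dots,u_j)$ denote the $j$-dimensional marginal copula (the joint distribution function of $(U_1,\dots,U_j)$), so that the full copula $C$ corresponds to $j=d$. The key observation is that the mixed derivative $\D_{j-1\dots 1} C_j(u_1,\dots,u_j)$ — differentiating with respect to the first $j-1$ arguments — gives the joint density of $(U_1,\dots,U_{j-1})$ evaluated at $(u_1,\dots,u_{j-1})$ multiplied by the conditional distribution function of $U_j$ at $u_j$. More precisely, by the standard disintegration of a joint distribution function, one has
\begin{align*}
  \D_{j-1\dots 1} C_j(u_1,\dots,u_j) = c(u_1,\dots,u_{j-1})\, \IP(U_j \le u_j \mid U_1 = u_1,\dots,U_{j-1} = u_{j-1}),
\end{align*}
where $c(u_1,\dots,u_{j-1}) = \D_{j-1\dots 1} C_{j-1}(u_1,\dots,u_{j-1})$ is exactly the density of the $(j-1)$-dimensional margin (obtained by setting $u_j = 1$, which removes the $j$th variable). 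Dividing the two expressions then yields (\ref{eq:CDM}), since the factor $c(u_1,\dots,u_{j-1})$ cancels appropriately and the conditional distribution function on the right is by definition $C(u_j\mid u_1,\dots,u_{j-1})$.

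Next, to obtain the density form (\ref{eq:CDM:dens}), I would assume $C$ admits a density $c$ and write the numerator $\D_{j-1\dots 1} C_j(u_1,\dots,u_j)$ by first integrating the full density in the $j$th coordinate up to $u_j$ (to get the marginal $C_j$) and then differentiating in the first $j-1$ coordinates. Under the stated smoothness assumptions, differentiation and integration can be interchanged, giving $\int_0^{u_j} c(u_1,\dots,u_{j-1},z_j)\,dz_j$ for the numerator, while the denominator is $c(u_1,\dots,u_{j-1})$ as before; this is exactly (\ref{eq:CDM:dens}).

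\emph{The hard part} is the justification that the mixed partial derivative of the marginal copula genuinely equals the product of the marginal density and the conditional distribution function — that is, making the disintegration step rigorous rather than merely formal. This requires the continuity of the partial derivatives with respect to the first $d-1$ arguments (precisely the hypothesis in the statement) so that the derivatives exist and the Leibniz-type interchange of differentiation and integration is valid; the regularity assumptions in the theorem are exactly what is needed to avoid the measure-theoretic subtleties of conditioning on a null set. The remaining algebra — the cancellation of $c(u_1,\dots,u_{j-1})$ and the identification of the denominator with the density of the lower-dimensional margin — is routine once the disintegration is established.
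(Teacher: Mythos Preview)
Your proposal is correct and follows the standard disintegration argument: identify $\D_{j-1\dots 1} C(u_1,\dots,u_j)$ as the product of the $(j-1)$-dimensional marginal density and the conditional distribution function, then divide. Note, however, that the paper does not actually prove this theorem; it merely cites \cite[p.~20]{schmitzthesis} for a rigorous proof, so there is no in-paper argument to compare against --- your outline is essentially the classical proof one would find in that reference.
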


\begin{corollary}[Conditional copulas for general multivariate distributions]\label{cor:cond_cop}
  Let $H$ be a $d$-dimensional absolutely continuous distribution function
  with margins $F_1,\dots,F_d$ and copula $C$. For $j\in\{2,\dots,d\}$ and
  $u_l\in[0,1]$, $l\in\{1,\dots,j\}$,
  \begin{align}
    C(u_j\,|\,u_1,\dots,u_{j-1})=H(F_j^-(u_j)\,|\,F_1^-(u_1),\dots,F_{j-1}^-(u_{j-1})).\label{eq:CDM:H}
    % &=\int_{-\infty}^{F_j^-(u_j)}h(z_j\,|\,F_1^-(u_1),\dots,F_{j-1}^-(u_{j-1}))\,dz_j.
  \end{align}
\end{corollary}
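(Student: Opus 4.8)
The guiding principle is the probability integral transform, which turns Sklar's theorem into an explicit change of variables. Let $\bm{X}=(X_1,\dots,X_d)\sim H$. Since $H$ is absolutely continuous, each margin $F_l$ is continuous, so $U_l:=F_l(X_l)\sim\U[0,1]$ and, by Sklar's theorem, $\bm{U}=(U_1,\dots,U_d)\sim C$; moreover, absolute continuity gives $X_l=F_l^-(U_l)$ almost surely, so that $\bm{x}\mapsto(F_1(x_1),\dots,F_d(x_d))$ is, up to a null set, a monotone bijection with inverse $\bm{u}\mapsto(F_1^-(u_1),\dots,F_d^-(u_d))$. Because each $F_l$ is increasing, the events match componentwise, $\{F_j(X_j)\le u_j\}=\{X_j\le F_j^-(u_j)\}$ and $\{F_l(X_l)=u_l\}=\{X_l=F_l^-(u_l)\}$ for $l<j$. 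Substituting these identifications into the definition of the conditional copula would then read
\begin{align*}
  C(u_j\,|\,u_1,\dots,u_{j-1})&=\IP(U_j\le u_j\,|\,U_1=u_1,\dots,U_{j-1}=u_{j-1})\\
  &=\IP(X_j\le F_j^-(u_j)\,|\,X_1=F_1^-(u_1),\dots,X_{j-1}=F_{j-1}^-(u_{j-1})),
\end{align*}
which is exactly the right-hand side of \eqref{eq:CDM:H}.

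The main obstacle is that both conditional probabilities are conditioned on null events, so the formal substitution above is not literally an equality of elementary conditional probabilities and must be interpreted through regular conditional distributions (disintegration). I would make this rigorous in one of two ways. The direct route is to invoke a change-of-variables/disintegration theorem for the measurable transformation $\bm{x}\mapsto(F_1(x_1),\dots,F_d(x_d))$, checking that the regular conditional distribution of $U_j$ given $(U_1,\dots,U_{j-1})$ pushes forward to that of $X_j$ given $(X_1,\dots,X_{j-1})$ under the componentwise monotone maps; the componentwise structure is precisely what makes the conditioning variables correspond exactly.

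The cleaner route, and the one I would actually carry out, avoids null events entirely by working with densities and Theorem~\ref{thm:schmitz}. Differentiating the Sklar representation \eqref{cop:mod} gives the factorization $h(\bm{x})=c(F_1(x_1),\dots,F_d(x_d))\prod_{l=1}^d f_l(x_l)$, and the same applied to the first $j$ components yields the marginal density $h_{1\dots j}(x_1,\dots,x_j)=c_{1\dots j}(F_1(x_1),\dots,F_j(x_j))\prod_{l=1}^j f_l(x_l)$, where $c_{1\dots j}$ is the density of the $j$-marginal copula appearing in \eqref{eq:CDM}. I would then start from the density form \eqref{eq:CDM:dens}, substitute $u_l=F_l(x_l)$ for $l<j$, and perform the change of variable $z_j=F_j(t)$, $dz_j=f_j(t)\,dt$, in the numerator integral. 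The marginal-density factors $\prod_{l<j}f_l(F_l^-(u_l))$ then appear identically in numerator and denominator and cancel, and what remains is precisely $\int_{-\infty}^{F_j^-(u_j)}h_{1\dots j}(F_1^-(u_1),\dots,F_{j-1}^-(u_{j-1}),t)\,dt$ divided by $h_{1\dots j-1}(F_1^-(u_1),\dots,F_{j-1}^-(u_{j-1}))$, i.e.\ the conditional distribution function $H(F_j^-(u_j)\,|\,F_1^-(u_1),\dots,F_{j-1}^-(u_{j-1}))$. The only points requiring care are the identity $F_l(F_l^-(u_l))=u_l$ (valid for continuous margins) and the cancellation of the common marginal factors; with those in hand the corollary follows.
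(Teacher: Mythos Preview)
Your proposal is correct, and the density route you say you would actually carry out is essentially the paper's own argument: the paper (in a proof it ultimately omits as ``immediate from Sklar's theorem'') starts from the Sklar factorization $h(x_1,\dots,x_j)=c(F_1(x_1),\dots,F_j(x_j))\prod_{k=1}^j f_k(x_k)$, writes $H(x_j\,|\,x_1,\dots,x_{j-1})$ as a ratio of integrals, performs the substitution $y_j=F_j(z_j)$, and identifies the result with \eqref{eq:CDM:dens}. The only cosmetic difference is direction---the paper works from $H$ to $C$ and then sets $x_k=F_k^-(u_k)$, whereas you start from \eqref{eq:CDM:dens} and push towards $H$---but the computation and the cancellation of the marginal-density factors are identical.
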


\subsubsection{Examples}
We now present several copula families and show how the corresponding
conditional copulas and their inverses can be computed. To the best of our
knowledge, several of these results have not appeared in the literature
before.

\subsubsection*{Elliptical copulas}
An \emph{elliptical copula} describes the dependence structure of an elliptical distribution; for the latter, see \cite{cambanis1981},
\cite{correlationanddependence}, \cite{embrechtslindskogmcneil2003}, or
\cite[Sections~3.3, 5]{qrm}. The most prominent two families in the class of
elliptical copulas are the Gauss and the $t$ copulas.

\paragraph{Gauss copulas.}
Gauss copulas are given by
\begin{align*}
  C_P^{\text{Ga}}(\bm{u})=\Phi_P(\Phi^{-1}(u_1),\dots,\Phi^{-1}(u_d)),
\end{align*}
where $\Phi_P$ denotes the $d$-variate normal
distribution function with location vector $\bm{0}$ and scale matrix $P$ (a
correlation matrix) and
$\Phi^{-1}$ is the standard normal quantile function. Consider the dimension to
be $j$ and let
$\bm{X}\sim\Phi_P$ with $\bm{X}=(\bm{X}_{1:(j-1)},X_j)$. Furthermore, assume
\begin{align*}
  P=\Bigl(\begin{smallmatrix} P_{1:(j-1),1:(j-1)} & P_{1:(j-1),j} \\ P_{j,1:(j-1)} & P_{j,j}\end{smallmatrix}\Bigr)
\end{align*}
to be positive definite. It follows from
\cite[p.~45 and 78]{fangkotzng1990} that
\begin{align*}
  X_j\,|\,\bm{X}_{1:(j-1)}=\bm{x}_{1:(j-1)}\sim\mathcal{N}(\mu_{j|1:(j-1)}(\bm{x}_{1:(j-1)}),P_{j|1:(j-1)}),
\end{align*}
where
\begin{align}
  \mu_{j|1:(j-1)}(\bm{x}_{1:(j-1)})=P_{j,1:(j-1)}\bigl(P_{1:(j-1),1:(j-1)}\bigr)^{-1}\bm{x}_{1:(j-1)},\notag\\
  P_{j|1:(j-1)}=P_{j,j}-P_{j,1:(j-1)}\bigl(P_{1:(j-1),1:(j-1)}\bigr)^{-1}P_{1:(j-1),j};\label{eq:cond:P}
\end{align}
so $H(x_j\,|\,x_1,\dots,x_{j-1})$ is again normal. With
$\Phi^{-1}(\bm{u}_{1:(j-1)})=(\Phi^{-1}(u_1),\dots,\Phi^{-1}(u_{j-1}))$, it follows from
\eqref{eq:CDM:H} that
\begin{align*}
  &\phantom{{}={}}C(u_j\,|\,u_1,\dots,u_{j-1}) =H(\Phi^{-1}(u_j)\,|\,\Phi^{-1}(\bm{u}_{1:(j-1)}))\\
  &=\Phi_{\mu_{j|1:(j-1)}(\Phi^{-1}(\bm{u}_{1:(j-1)})),P_{j|1:(j-1)}}(\Phi^{-1}(u_j))\\
  &=\Phi\biggl(\frac{\Phi^{-1}(u_j) - \mu_{j|1:(j-1)}(\Phi^{-1}(\bm{u}_{1:(j-1)}))}{\sqrt{P_{j|1:(j-1)}}}\biggr)
\end{align*}
and thus that
\begin{align*}
  &\phantom{{}={}}C^-(u_j\,|\,u_1,\dots,u_{j-1})\\
  &=\Phi\bigl(\Phi_{\mu_{j|1:(j-1)}(\Phi^{-1}(\bm{u}_{1:(j-1)})),P_{j|1:(j-1)}}^{-1}(u_j)\bigr)\\
  & =\Phi\bigl(\mu_{j|1:(j-1)}(\Phi^{-1}(\bm{u}_{1:(j-1)}))+\sqrt{P_{j|1:(j-1)}}\Phi^{-1}(u_j)\bigr).
\end{align*}
An implementation of this inverse is provided via \texttt{rtrafo(, inverse=TRUE)} in the \R\ package \texttt{copula}.

\paragraph{$t$ copulas.}
%If $\bm{X}\in\IR^d$ has the stochastic representation
%$\bm{X}=(\sqrt{\nu/S})\bm{Z},$ where $\nu>2$ and89
%$S\sim\chi_\nu^2$ and $\bm{Z}\sim\mathcal{N}(0,P)$ are independent,
%then $\bm{X}$ has a $d$-variate $t$ distribution with $\nu$ degrees of freedom
%and covariance matrix
%$(\nu/\nu-2)P.$

$t$ copulas are given by
\begin{align*}
  C_{\nu,P}^{t}(\bm{u})=t_{\nu,P}(t_\nu^{-1}(u_1),\dots,t_\nu^{-1}(u_d)),
\end{align*}
where $t_{\nu,P}$ denotes the $d$-variate $t_\nu$ distribution function
with location vector $\bm{0}$ and scale matrix $P$ (a correlation matrix) and
$t_\nu^{-1}$ is the standard $t_\nu$ quantile function. The following
proposition guarantees stability of the $t$ copula upon conditioning; see
the appendix %~\ref{sec:proof}
for its proof and \texttt{rtrafo()}
for an implementation. We are not aware that this result has appeared
before. Given the importance of $t$ copulas in practice, this is rather
remarkable.
\begin{proposition}[Conditional $t$ copulas and inverses]\label{prop:cond_t_copula}
  With the notation as in the Gauss case, the conditional $t$ copula at $u_j$,
  given $u_1,\dots,u_{j-1}$, and its inverse are given by
  \begin{align*}
    C(u_j\,|\,u_1,\dots,u_{j-1})&=t_{\nu+j-1}\Bigl(s_1\Bigl(\sqrt{P^{-1}_{j,j}} t_{\nu}^{-1}(u_j)+s_2\Bigr)\Bigr),\\
    C^-(u_j\,|\,u_1,\dots,u_{j-1})&=t_{\nu}\biggl(\frac{t_{\nu+j-1}^{-1}(u_j)/s_1-s_2}{\sqrt{P^{-1}_{j,j}}}\biggr),
  \end{align*}
  for $s_1,s_2$ as given in the proof.
\end{proposition}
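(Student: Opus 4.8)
The plan is to reduce everything to the conditional distribution of the underlying multivariate $t$ distribution and then match the resulting standardized $t$ CDF to the stated form. By Corollary~\ref{cor:cond_cop} applied to $H=t_{\nu,P}$ with margins $F_j=t_\nu$, one has
\begin{align*}
  C(u_j\,|\,u_1,\dots,u_{j-1})=t_{\nu,P}\bigl(t_\nu^{-1}(u_j)\,\big|\,t_\nu^{-1}(u_1),\dots,t_\nu^{-1}(u_{j-1})\bigr),
\end{align*}
so the whole problem comes down to identifying the conditional law of $X_j$ given $\bm{X}_{1:(j-1)}=\bm{x}_{1:(j-1)}$ when $\bm{X}\sim t_{\nu,P}$.

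First I would establish that this conditional law is again a univariate $t$ distribution, but with $\nu+j-1$ degrees of freedom, the \emph{same} conditional location $\mu_{j|1:(j-1)}(\bm{x}_{1:(j-1)})$ as in the Gauss case, and scale $\sigma^2=\tfrac{\nu+a}{\nu+j-1}P_{j|1:(j-1)}$, where $a=\bm{x}_{1:(j-1)}\T(P_{1:(j-1),1:(j-1)})^{-1}\bm{x}_{1:(j-1)}$ and $\mu_{j|1:(j-1)}$, $P_{j|1:(j-1)}$ are exactly as in \eqref{eq:cond:P}. I would do this by forming the ratio of the $j$- and $(j-1)$-dimensional marginal $t$ densities (which are again $t$ densities with scale matrices the corresponding leading principal submatrices of $P$, by closure of the multivariate $t$ under margins). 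The computation hinges on the block decomposition of the quadratic form,
\begin{align*}
  \bm{x}_{1:j}\T(P_{1:j,1:j})^{-1}\bm{x}_{1:j}=a+\frac{(x_j-\mu_{j|1:(j-1)})^2}{P_{j|1:(j-1)}},
\end{align*}
together with the Schur-complement identity $(P_{1:j,1:j}^{-1})_{j,j}=1/P_{j|1:(j-1)}$. Substituting this into the density ratio, the factor $1+a/\nu$ can be pulled out, so that the $x_j$-dependence collapses to $\bigl(1+(x_j-\mu_{j|1:(j-1)})^2/((\nu+a)P_{j|1:(j-1)})\bigr)^{-(\nu+j)/2}$, which is proportional to a univariate $t_{\nu+j-1}$ density with the claimed location and scale; since both integrate to one, they coincide.

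With this in hand, the conditional copula equals $t_{\nu+j-1}((x_j-\mu_{j|1:(j-1)})/\sigma)$ evaluated at $x_l=t_\nu^{-1}(u_l)$. Writing $1/\sigma=\sqrt{P^{-1}_{j,j}}\,\sqrt{(\nu+j-1)/(\nu+a)}$ (with $P^{-1}_{j,j}=(P_{1:j,1:j}^{-1})_{j,j}$) and factoring, I would read off
\begin{align*}
  s_1=\sqrt{\tfrac{\nu+j-1}{\nu+a}},\qquad s_2=-\sqrt{P^{-1}_{j,j}}\,\mu_{j|1:(j-1)}\bigl(t_\nu^{-1}(\bm{u}_{1:(j-1)})\bigr),
\end{align*}
with $a$ evaluated at $\bm{x}_{1:(j-1)}=t_\nu^{-1}(\bm{u}_{1:(j-1)})$, which yields the first displayed identity. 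The inverse follows by solving $C(u_j\,|\,\cdots)=p$ for $u_j$: since $t_{\nu+j-1}$ and the inner affine map are strictly increasing in $u_j$ (note $s_1>0$), inverting them in turn and finally applying $t_\nu$ produces the stated $C^-$.

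The main obstacle I anticipate is this second step, i.e.\ pinning down the conditional distribution of the multivariate $t$: one must carry the gamma-function normalizations and the $(\nu+a)$-factor through the density ratio carefully, since it is precisely this data-dependent factor (absent in the Gauss case) that produces the nontrivial multiplier $s_1$ and the degrees-of-freedom inflation $\nu\mapsto\nu+j-1$. Everything after that is bookkeeping and monotone inversion.
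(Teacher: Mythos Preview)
Your proposal is correct and follows essentially the same route as the paper: reduce to the conditional law of the multivariate $t$ via Corollary~\ref{cor:cond_cop}, decompose the quadratic form $\bm{x}_{1:j}\T(P_{1:j,1:j})^{-1}\bm{x}_{1:j}$ into a part depending only on $\bm{x}_{1:(j-1)}$ plus a squared standardized term in $x_j$, and read off a $t_{\nu+j-1}$ distribution with the data-dependent scale factor $s_1=\sqrt{(\nu+j-1)/(\nu+g)}$ (your $a$ is the paper's $g$). The only cosmetic difference is that the paper completes the square using the blocks of $P^{-1}$ directly (obtaining $s_2=\bm{x}_{1:(j-1)}\T P^{-1}_{1:(j-1),j}/\sqrt{P^{-1}_{j,j}}$), whereas you parameterize via $\mu_{j|1:(j-1)}$ and $P_{j|1:(j-1)}$ from \eqref{eq:cond:P}; the two expressions for $s_2$ agree by the Schur-complement identity $P^{-1}_{1:(j-1),j}=-(P_{1:(j-1),1:(j-1)})^{-1}P_{1:(j-1),j}\,P^{-1}_{j,j}$, and the paper likewise closes the argument with the determinant identity $|P|=|P_{1:(j-1),1:(j-1)}|/P^{-1}_{j,j}$ to confirm the normalizing constant.
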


% \subsection{Homogeneous sampling for margins}\label{sec:homogeneousmargins}
%
%One can apply inverse marginal cdf's to quasi copula samples. However, we propose a different method: sample margins from a equidistant grid on $[0,1]$, see Figure~\ref{fig:margins_sampling}. Then reorder according to ranks of quasicopula sample, see \cite{AggTree}.
%
%\begin{figure}
%        \begin{subfigure}[h]{0.5\textwidth}
%                \centering
%				\includegraphics[width =7.3cm]{random_margins.pdf}
%				\caption{Uniform sampling for margins.}
%        \end{subfigure}
%        \quad %add desired spacing between images, e. g. ~, \quad, \quad etc.
%          %(or a blank line to force the subfigure onto a new line)
%        \begin{subfigure}[h]{0.5\textwidth}
%                \centering
%				\includegraphics[width =7.3cm]{quasi_random_margins.pdf}
%				\caption{Quasi sampling for margins.}
%        \end{subfigure}
%        \caption{Margins sampling.}\label{fig:margins_sampling}
%\end{figure}

Figure~\ref{fig:qrng:t3} displays 1000 samples from a $t$ copula with three
degrees of freedom and correlation parameter $\rho=P_{1,2}=1/\sqrt{2}$
(Kendall's tau equals 0.5), once drawn with a PRNG (top) and once with a QRNG
(bottom). We can visually confirm in this case that the low discrepancy of the
latter is preserved. How this seemingly good feature translates into better
estimators of the form \eqref{eq:CopQmc1} will be studied further through the
theoretical results of Section~\ref{sec:discrep} and the numerical experiments
of Section~\ref{sec:num}.
\begin{figure}[htbp]
  \centering
  \includegraphics[width=0.425\textwidth]{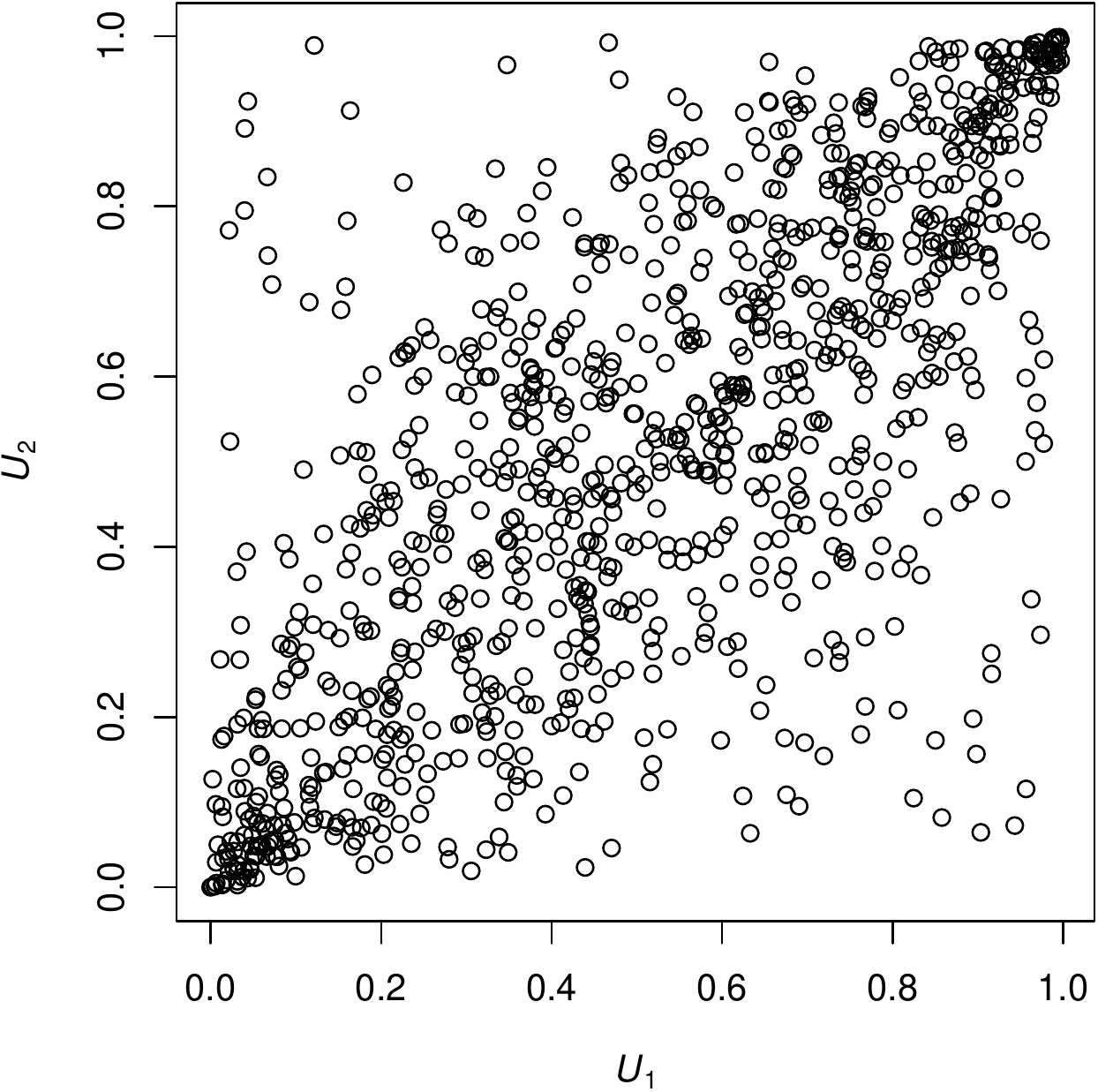}\\[2mm]
  \includegraphics[width=0.425\textwidth]{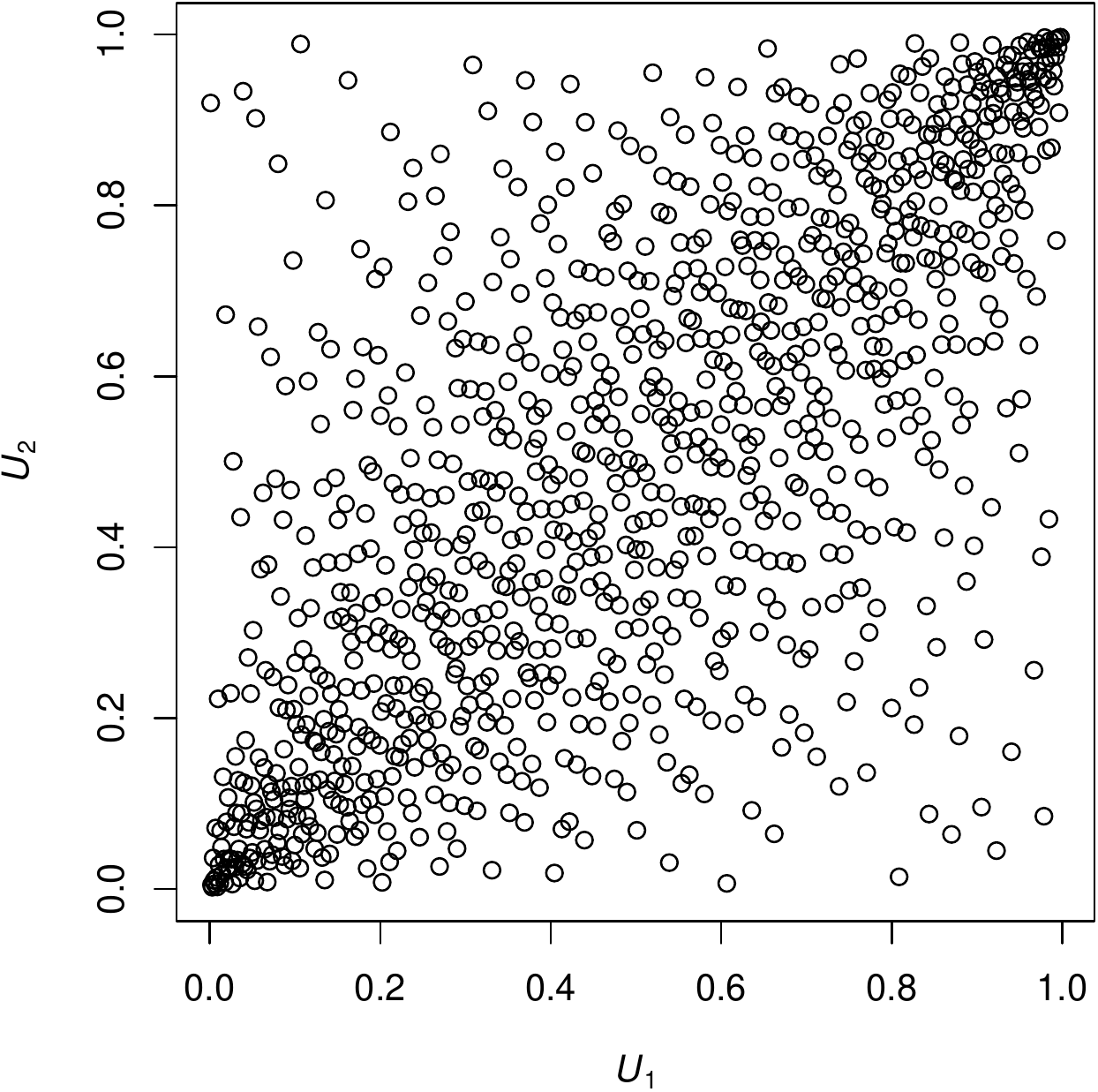}%
  \caption{1000 realizations of a $t$ copula with three degrees of freedom and
    correlation parameter $\rho=1/\sqrt{2}$ (Kendall's tau equals 0.5),
    generated by a PRNG (top) and by a QRNG (bottom).}
  \label{fig:qrng:t3}
\end{figure}

\subsubsection*{Archimedean copulas}
An \textit{(Archimedean) generator} is a continuous, decreasing function
$\psi:[0,\infty]\to[0,1]$ which satisfies $\psi(0)=1$,
$\psi(\infty)=\lim_{t\to\infty}\psi(t)=0$, and which is strictly decreasing on
$[0,\inf\{t:\psi(t)=0\}]$. A $d$-dimensional copula $C$ is called
\textit{Archimedean} if it permits the representation
  \begin{align*}
    C(\bm{u})=\psi(\psii(u_1)+\dots+\psii(u_d)),
  \end{align*}
  where $\bm{u}=(u_1,\dots,u_d)\in[0,1]^d$, and
  for some generator $\psi$ with inverse $\psii:[0,1]\to[0,\infty]$, where
  $\psii(0)=\inf\{t:\psi(t)=0\}$. For applications and the importance of
  Archimedean copulas in the realm of finance and insurance, see, e.g.,
  \cite{hofertmaechlermcneil2013}.

  \cite{mcneilneslehova2009} show that a generator defines an Archimedean copula
  if and only if $\psi$ is $d$\textit{-monotone}, meaning that $\psi$ is
  continuous on $[0,\infty]$, admits derivatives $\psi^{(l)}$ up to the order
  $l=d-2$ satisfying $(-1)^l\psi^{(l)}(t)\ge0$ for all $l\in\{0,\dots,d-2\}$,
  $t\in(0,\infty)$, and $(-1)^{d-2}\psi^{(d-2)}(t)$ is decreasing and convex on
  $(0,\infty)$.

  Assuming $\psi$ to be sufficiently often differentiable,
  conditional Archimedean copulas follow from
  Theorem~\ref{thm:schmitz} and are given by
  \begin{align}
    C(u_j\,|\,&u_1,\dots,u_{j-1})=\frac{\psi^{(j-1)}\bigl(\sum_{l=1}^{j}\psii(u_l)\bigr)}{\psi^{(j-1)}\bigl(\sum_{l=1}^{j-1}\psii(u_l)\bigr)},\label{cond:cop:Clayton}
  \end{align}
  where $u_l\in[0,1],\ l\in\{1,\dots,j\}$, and thus
  \begin{align}
    &C^-(u_j\,|\,u_1,\dots,u_{j-1}) \notag\\
    &=\psi\biggl(\!\!{\psi^{(j-1)}}^{-1}\biggl(\!u_j\psi^{(j-1)}\biggl(\,\sum_{l=1}^{j-1}\psii(u_l)\!\!\biggr)\!\biggr)
    -\sum_{l=1}^{j-1}\psii(u_l)\!\biggr).\label{CDM:Clayton}
  \end{align}
  The generator derivatives $\psi^{(j-1)}$ and their inverses
  ${\psi^{(j-1)}}^{-1}$ can be challenging to compute. The former are
  known explicitly for several Archimedean families and certain generator transformations;
  see \cite{HofertMcNeilMachler2012} for more details. To compute the inverses, one can
  use numerical root-finding on $[0,1]$; see \texttt{rtrafo(..., inverse=TRUE)} in
  the \R\ package \texttt{copula}. This can be applied, e.g., in the case of Gumbel copulas.

  The following example shows the case of a Clayton copula family, for which
  \eqref{CDM:Clayton} can be given explicitly and thus where the CDM is
  tractable; this explicit formula is also utilized by \texttt{rtrafo(,
    inverse=TRUE)}.
  \begin{example}[Clayton copulas]
    If $\psi(t)=(1+t)^{-1/\theta}$, $t\ge0$, $\theta>0$, denotes a generator of
    the Archimedean Clayton copula, then
    $\psi^{(j)}(t)=(-1)^j(1+t)^{-(j+1/\theta)}\prod_{l=0}^{j-1}(l+1/\theta)$. Therefore, \eqref{cond:cop:Clayton} equals
    \begin{align*}
      C(u_j\,|\,u_1,\dots,u_{j-1})=\Biggl(\frac{1-j+\sum_{l=1}^ju_l^{-\theta}}{2-j+\sum_{l=1}^{j-1}u_l^{-\theta}}\Biggr)^{j-1/\theta}
    \end{align*}
    and \eqref{CDM:Clayton} equals
    \begin{align*}
      &C^-(u_j\,|\,u_1,\dots,u_{j-1})=\\
      &\biggl(1+\biggl(1-(j-1)+\sum_{l=1}^{j-1}u_l^{-\theta}\biggr)\Bigl({u_j}^{-\frac{1}{j-1+1/\theta}}-1\Bigr)\biggr)^{-\frac{1}{\theta}}.
    \end{align*}
    Figure~\ref{fig:qrng:C} displays 1000 samples from a Clayton copula with
    $\theta=2$ (Kendall's tau equals 0.5), once drawn with a PRNG (top) and
    once with a QRNG (bottom).
  \end{example}

\subsubsection*{Marshall--Olkin copulas}
A class of bivariate copulas for which $C^-(u_2\,|\,u_1)$ is explicit is the
class of Marshall--Olkin copulas
$C(u_1,u_2)=\min\{u_1^{1-\alpha_1}u_2,u_1u_2^{1-\alpha_2}\}$,
$\alpha_1,\alpha_2\in(0,1)$, where one can show that
\begin{align*}
  &\phantom{{}={}}C^-(u_2\,|\,u_1)\\
  &=\begin{cases}
    \frac{u_1^{\alpha_1}}{1-\alpha_1}u_2,&\text{if}\,\
    u_2\in[0,(1-\alpha_1)u_1^{\alpha_1(1/\alpha_2-1)}],\\
    u_1^{\alpha_1/\alpha_2},&\text{if}\,\
    u_2\in((1-\alpha_1)u_1^{\alpha_1(1/\alpha_2-1)},\
    u_1^{\alpha_1(1/\alpha_2-1)}),\\
    u_2^{\frac{1}{1-\alpha_2}},&\text{if}\,\
    u_2\in[u_1^{\alpha_1(1/\alpha_2-1)},1].
  \end{cases}
\end{align*}
Figure~\ref{fig:qrng:MO} shows 1000 samples, once drawn from a PRNG (top)
and once from a QRNG (bottom). Here again we can visually confirm the low discrepancy.
\begin{figure}[htbp]
  \centering
  \includegraphics[width=0.425\textwidth]{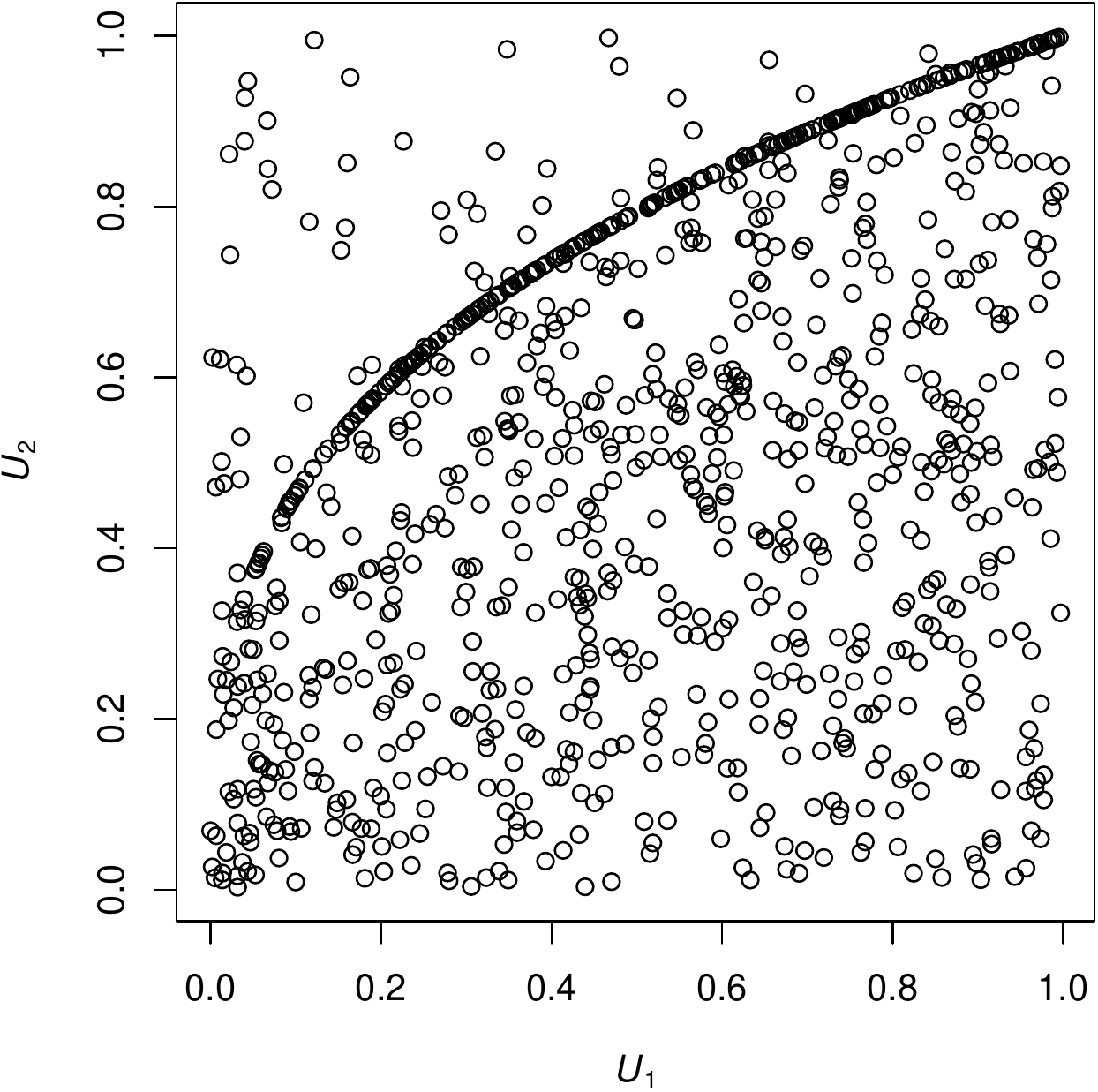}\\[2mm]
  \includegraphics[width=0.425\textwidth]{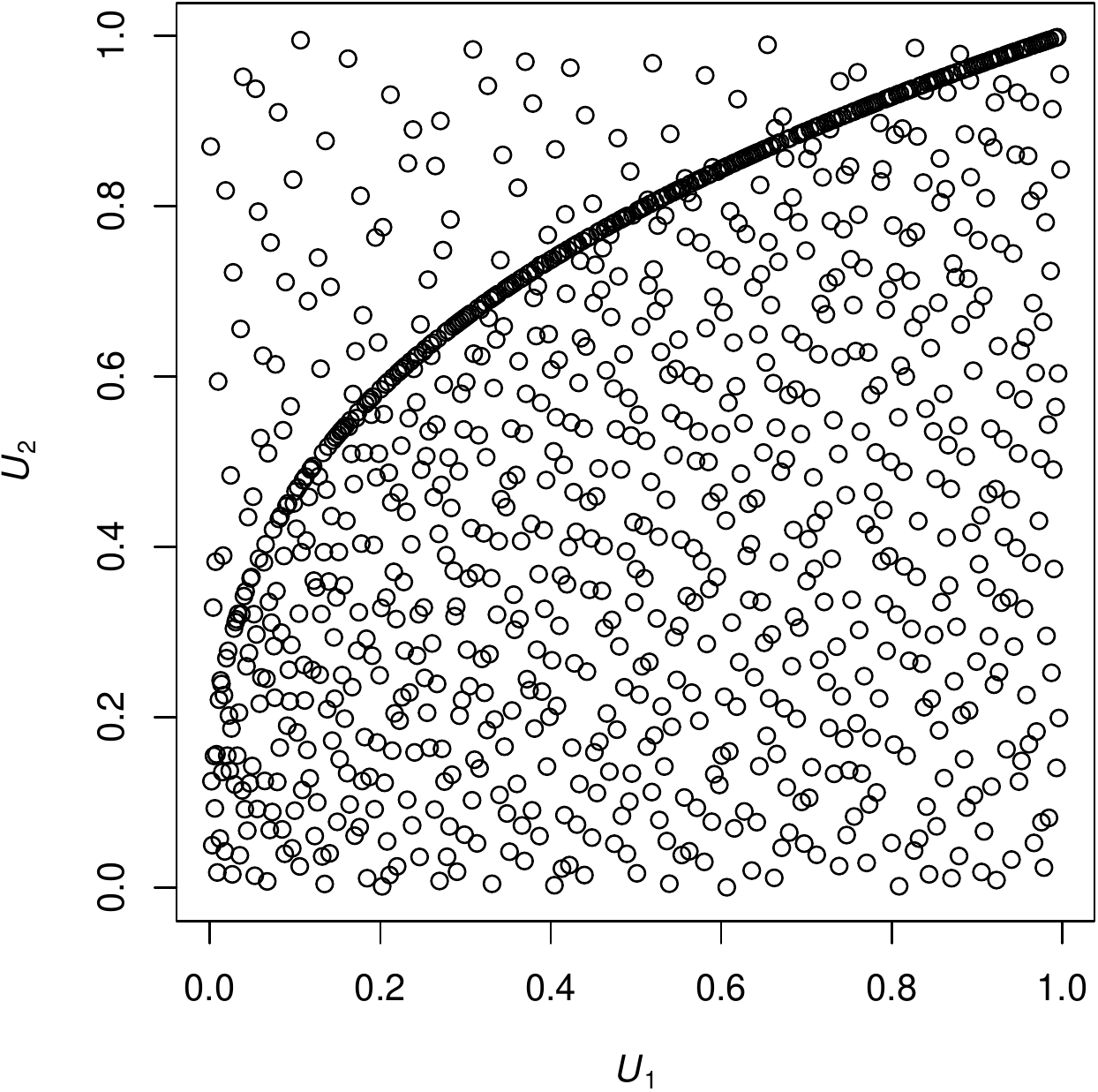}%
  \caption{1000 realizations of a Marshall--Olkin copula with
    $\alpha_1=0.25$ and $\alpha_2=0.75$ (Kendall's tau equals roughly 0.23),
    generated by a PRNG (top) and by a QRNG (bottom).}
  \label{fig:qrng:MO}
\end{figure}

Another class of copulas not discussed in this section which is naturally
sampled with the CDM and thus can easily be adapted to construct corresponding
quasi-random numbers is the class of pair copula constructions; see, e.g.,
\cite{kurowickacooke2007}. For this purpose, we modified the function
\texttt{RVineSim()} in the \R\ package \texttt{VineCopula} (version $\ge$
1.3). It now allows to pass a matrix of quasi-random numbers to be transformed
to the corresponding samples from a pair copula construction; see
the vignette \texttt{qrng} in the \R\ package \texttt{copula} for examples. Note that
if sampling of the R-vine involves numerical root-finding (required for
certain copula families), the corresponding numerical inaccuracy may have an
effect on the low discrepancy of the generated samples.
%; as far as we are aware
%such effects have not been studied yet.
% It is known that a projection of a low-discrepancy sequence to $d'<d$
% dimensions does not possess the same (desirable) properties as a
% low-discrepancy sequence in $d'$ dimensions. This needs to be kept in mind
% when looking at scatter-plot matrices (so $d'=2$) of $\phi_C$-transformed
% $d$-dimensional low-discrepancy sequences; see
% Figure~\ref{fig:qrng:t3:pairs} for a pairs plot of three-dimensional
% quasi-random samples from a $t$ copula with three degrees of freedom and
% correlation parameter $\rho=1/\sqrt{2}$ (Kendall's
% tau equals 0.5).
% \begin{figure}[htbp]
%   \centering
%   \includegraphics[width=0.8\textwidth]{fig_qrng_t3_0.5_pairs}%
%   \caption{Scatter plot matrix of 1000 realizations of three-dimensional
%   quasi-random samples from a $t$ copula with three degrees of freedom and
%   correlation parameter $\rho=1/\sqrt{2}$ (Kendall's tau equals
%   0.5).}
%   \label{fig:qrng:t3:pairs}
% \end{figure}

\subsection{Stochastic representations ($k\ge d$, typically $k>d$)}
\label{subsec:StochRep}
\subsubsection{Theoretical background}
As mentioned above, pair-copula constructions are one of the rare copula classes
for which the CDM is applied in practice. For most other copula classes and
families, faster sampling algorithms derived from stochastic representations of
$\bm{U}\sim C$ are known, especially for $d\gg2$. They are mostly class- and
family-specific, as can be seen in the examples below.
% ; see, e.g., the Marshall--Olkin sampling
% algorithm for Archimedean copulas below. However, if both
% the conditional copulas and their inverses are given explicitly, the CDM has the
% advantage that it is based on a one-to-one transformation, thus being expected
% to preserve features such as low discrepancy. The same argument also applies to
% other one-to-one transformations $\phi_C$, such as $\phi_C^{\text{WVSM}}$
% (applying to Archimedean copulas only) addressed in the appendix.
%As\footnote{MH: I would omit (or at least
%  weaken) this sentence as for most copulas, $k$ is only $d+1,\dots,d+3$, so not
%really increasing the dimension by much...} the performance of RQMC algorithms generally deteriorate when dimensions increase, algorithms that transform $k$-dimensional samples into $d$-dimensional samples may not be preferred. This will be further discussed in Section~\ref{sec:discrep}.

\subsubsection{Examples}
\label{subsubsec:ExStochRep}
\subsubsection*{Elliptical copulas}
Gauss and $t$ copulas are typically sampled via their stochastic
representations.
\paragraph{Gauss copulas.} A random vector $\bm{X}\sim\Phi_P$ admits the
stochastic representation $\bm{X}=A\bm{Z}$ where $A$ denotes the lower
triangular matrix from the Cholesky decomposition $P=AA\T$ and $\bm{Z}$ is a
vector of $d$ independent standard normal random variables. A random vector
$\bm{U}\sim C_P^{\text{Ga}}$ thus admits the stochastic representation
$\Phi(\bm{X})=\Phi(A\bm{Z})$ for
$\bm{Z}=(\Phi^{-1}(U_1'),\dots,$ $\Phi^{-1}(U_d'))$ and $\bm{U}'\sim\U[0,1]^d$;
here $\Phi$ is assumed to act on $A\bm{Z}$ component-wise. Note that for Gauss
copulas, this sampling approach is equivalent to the CDM.
% This is easily seen for d=2: The Cholesky factor is
%    A = (1,   0,
%         rho, sqrt(1-rho^2))
% and thus the stochastic representation of (U_1,U_2) satisfies
% (Phi^-(U_1), Phi^-(U_2)) = (Phi^-(U_1'), rho*Phi^-(U_1') + sqrt(1-rho^2)*Phi^-(U_2')).
% The same formula is obtained when using (U_1', U_2') in the CDM.
% For d > 2 this is less obvious, but can be shown with R (checked in three dimensions).
% It can probably be shown via linearity as the jth component of the stochastic
% representation is Phi(sum_{k=1}^j a_{jk} Phi^-(U_k')) and the jth component of
% the CDM is Phi(mu(<previous Us and thus U's>)+sqrt(P.)*Phi^-(U_j')).

\paragraph{$t$ copulas.} A random vector $\bm{X}\sim t_{\nu,P}$ admits the
stochastic representation $\bm{X}=\sqrt{W}A\bm{Z}$ where $A$ and $\bm{Z}$ are as
above and $W=1/\Gamma$ for $\Gamma$ following a Gamma distribution with shape
and rate equal to $\nu/2$. A random vector $\bm{U}\sim C_{\nu,P}^{t}$ thus
admits the stochastic representation $t_\nu(\bm{X})=t_{\nu}(\sqrt{W}A\bm{Z})$;
as before, $t_{\nu}$ is assumed to act on $\sqrt{W}A\bm{Z}$ component-wise. Note
that for $t$ copulas with finite $\nu$, this sampling approach is different from the
CDM. % as the former uses d+1 random variables and
                                % the latter only d

\subsubsection*{Archimedean copulas}
The conditional independence approach behind the Marshall--Olkin algorithm for
sampling Archimedean copulas is one example for transformations $\phi_C$ for
$k>d$; see \cite{marshallolkin1988}. For this algorithm, $k=d+1$ and one uses
the fact that for an Archimedean copula $C$ with completely monotone generator
$\psi$,
\begin{align}
  \bm{U}=(\psi(E_1/V),\dots,\psi(E_d/V))\sim C,\label{eq:stoch:rep:AC}
\end{align}
where $V\sim F=\LSi[\psi]$, independent of $E_1,\dots,E_d\sim\Exp(1)$; here,
$F=\LSi[\psi]$ denotes the distribution function corresponding to $\psi$ by
Bernstein's Theorem ($\LS^{-1}[\cdot]$ denotes the inverse Laplace--Stieltjes
transform). To give an explicit expression for the transformation
$\phi_C=\phi_C^{\text{MO}}$ in this case, we
assume that $v_1$ is used to generate $V$ via the inversion method,
%(which we denote as $G_{V,\theta}^{-1})$,
and $v_{j+1}$ is used to generate $E_j$, for $j\in\{1,\ldots,d\}$. Then we have that
$\phi_C^{\text{MO}} = (\phi^{\text{MO}}_{C,1},\ldots,\phi^{\text{MO}}_{C,d})$, where
\begin{align}
  \phi^{\text{MO}}_{C,j}=\phi^{\text{MO}}_{C,j}(v_1,v_{j+1}) = \psi \left(\frac{-\log v_{j+1}}{F^-(v_1)}\right),
  \quad j\in\{1,\ldots,d\}.\label{eq:PhiC_MO}
\end{align}

%\blue{
We can use a low-discrepancy sequence in $k=d+1$ dimensions to produce a
sample based on this method. Having $k=d+1$ instead of $k=d$ is a slight
disadvantage, since it is well known that the performance of QMC methods tends to deteriorate with
increasing dimensions. %}
%%%Feb 12
\iffalse
In addition, since the transformation used is not
one-to-one, this creates more doubt as to whether or not any smoothness of
$\Psi$ will be preserved when composed with $\phi_C^{\text{MO}}$ (following the first
approach mentioned in the introduction), or, alternatively, whether or not a
low-discrepancy point set $P_n$ with respect to the uniform distribution will
also have a low discrepancy with respect to $C$ when $\phi_C^{\text{MO}}$ is applied to each
of its points.
\fi

% It currently remains unclear whether and how transformations $\phi_C$ with $k>d$
% can be used to construct quasi-random numbers from copula models. The purposes
% of this section is to briefly show the main challenge in following such a
% route.

%%%Feb 12
%\blue{
To explore the effect of the transformation $\phi_C$ on $P_n$, we generated 1000 realizations of a
three-dimensional Halton sequence; see the top of
Figure~\ref{fig:qrng:col:U:CDM} where we colored points falling in two
non-overlapping regions in $[0,1)^2$. The first two of the three dimensions are then mapped
via $\phi_C^{\text{CDM}}$ (see the bottom of
Figure~\ref{fig:qrng:col:U:CDM}) to a Clayton copula with parameter
$\theta=2$ (such that Kendall's tau equals 0.5). As we can see, the
non-overlapping colored regions remain non-overlapping after the one-to-one
transformations have been applied.
To study the effect of the Marshall-Olkin algorithm, we look at when the first dimension of the Halton sequence is mapped to a Gamma $\Gamma(1/\theta,1)$ distribution by inversion of $v_1$(the distribution of $V$ in \eqref{eq:stoch:rep:AC} for a Clayton copula) and the last two to unit
exponential distributions (by inversion of $1-v_j$ for $j=2,3$, so that the obtained $u_j$ is increasing in each of $v_1$ and $v_{j+1}$ for $j=1,2$). The top of  Figure~\ref{fig:qrng:col:U:MO} shows the second and third coordinates of the Halton sequence, and colors the points belonging to two different three-dimensional intervals (this is why not all two-dimensional points are coloured in the two-dimensional projected regions).  We see on the bottom of Figure~\ref{fig:qrng:col:U:MO} that here again, the colored regions remain non-overlapping. However, it should also be clear that two points in a given interval defined over the second and third dimension could end up in very different locations after this transform, if the corresponding first coordinates are far apart.
Hence, the fact that the Marshall-Olkin transform uses $k=d+1$ uniforms (and thus is not one-to-one)
makes it more challenging to understand its effect when used with quasi-random numbers. On the other hand, because it is designed so that the first uniform $v_1$ is very important, it may work quite well with QMC since these methods are known to perform better when a small number of variables are important (i.e., see  \cite{DiPi10,Lemieux2009}).  %}
This combination (QRNG with the Marshall--Olkin approach) is studied further in Section \ref{sec:discrep}, with
numerical results provided in Section \ref{sec:num}. % support this claim.

\iffalse
On the contrary, on the bottom of
Figure~\ref{fig:qrng:col:U:CDM:MO} we show what happens when the first two
dimensions of the three-dimensional Halton sequence are mapped to unit
exponential distributions and the last one to a Gamma $\Gamma(1/\theta,1)$
distribution (the distribution of $V$ in \eqref{eq:stoch:rep:AC} for a Clayton
copula). After applying this transform, the two sets of colored
points ``overlap''; what we see is indeed a projection of a marginally
transformed three-dimensional quasi-random sequence (with $\Exp(1)$, $\Exp(1)$,
$\Gamma(1/\theta,1)$) to two dimensions, where the
quasi-randomness \emph{seems} to have been destroyed.
\fi
\begin{figure}[htbp]
  \centering
  \includegraphics[width=0.425\textwidth]{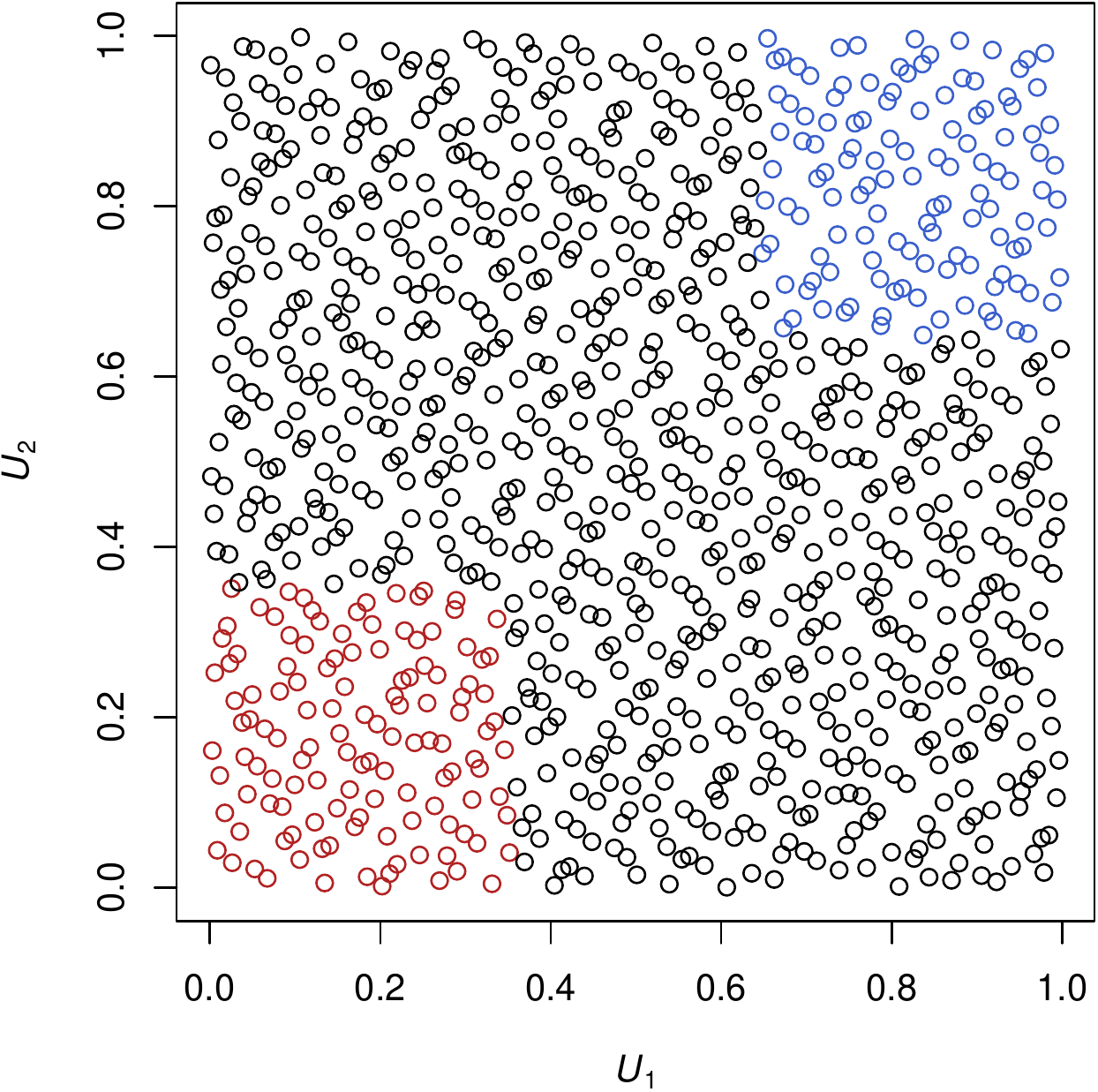}\\[2mm]
  \includegraphics[width=0.425\textwidth]{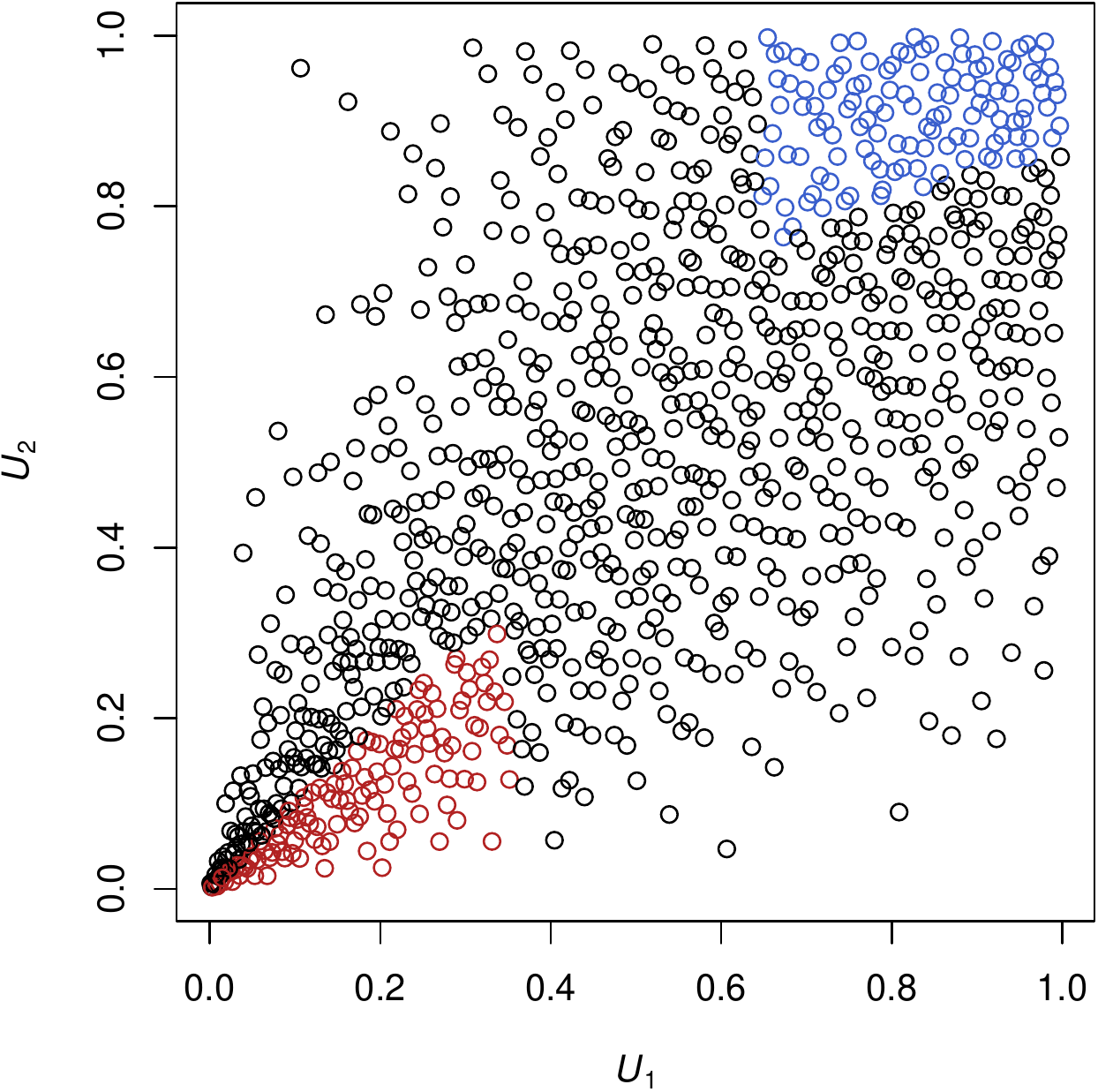}%
  \caption{1000 realizations of the first two components of a three-dimensional
    Halton sequence with colored points in the regions $[0,\sqrt{1/8}]^2$ and
    $[1-\sqrt{1/8},1]^2$ (top): corresponding $\phi_C^{\text{CDM}}$-transformed
    points (bottom) to a Clayton copula with $\theta=2$ (Kendall's tau equals
    0.5).}
  \label{fig:qrng:col:U:CDM}
\end{figure}

\begin{figure}[htbp]
  \centering
  \includegraphics[width=0.425\textwidth]{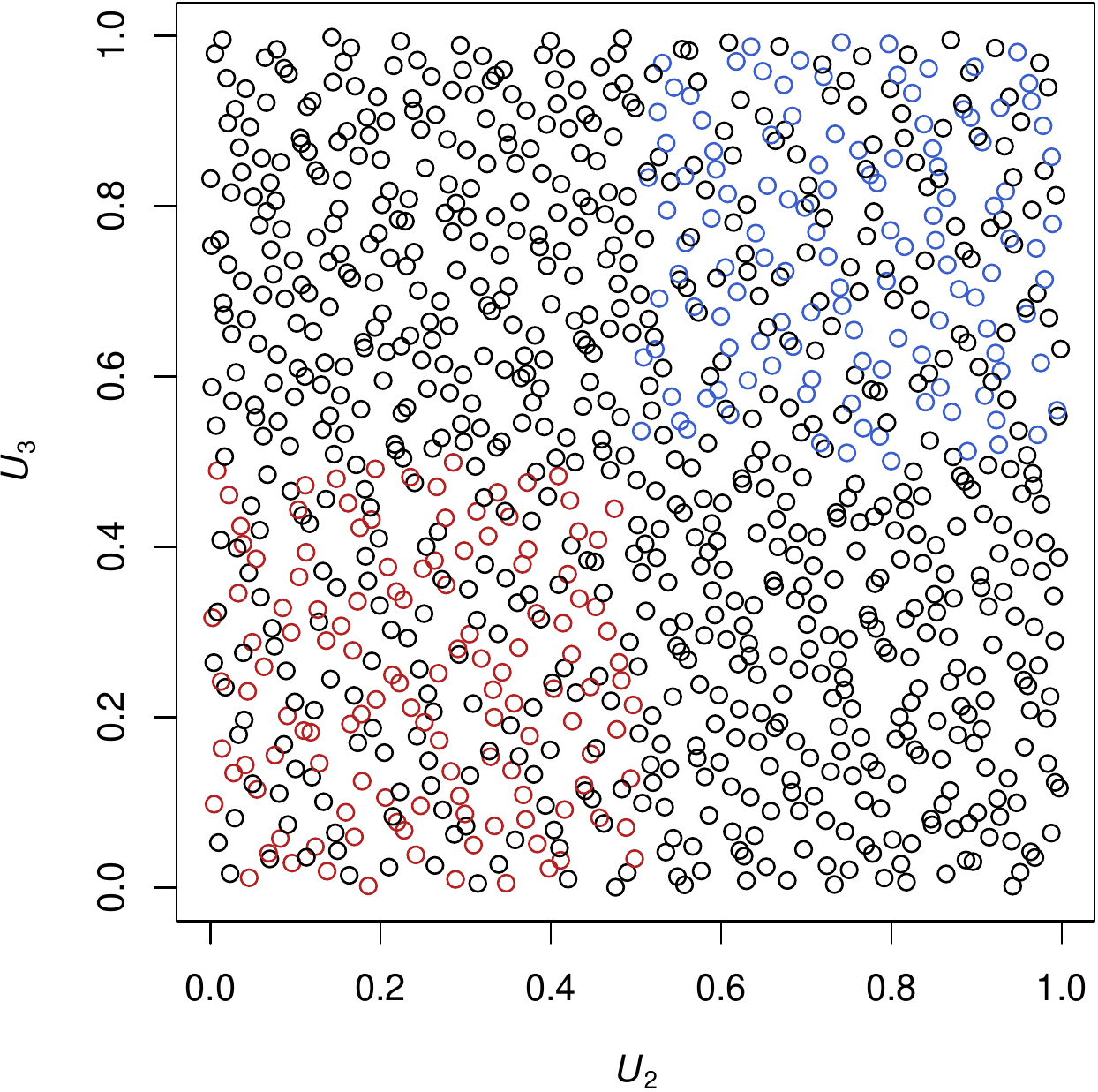}\\[2mm]
  \includegraphics[width=0.425\textwidth]{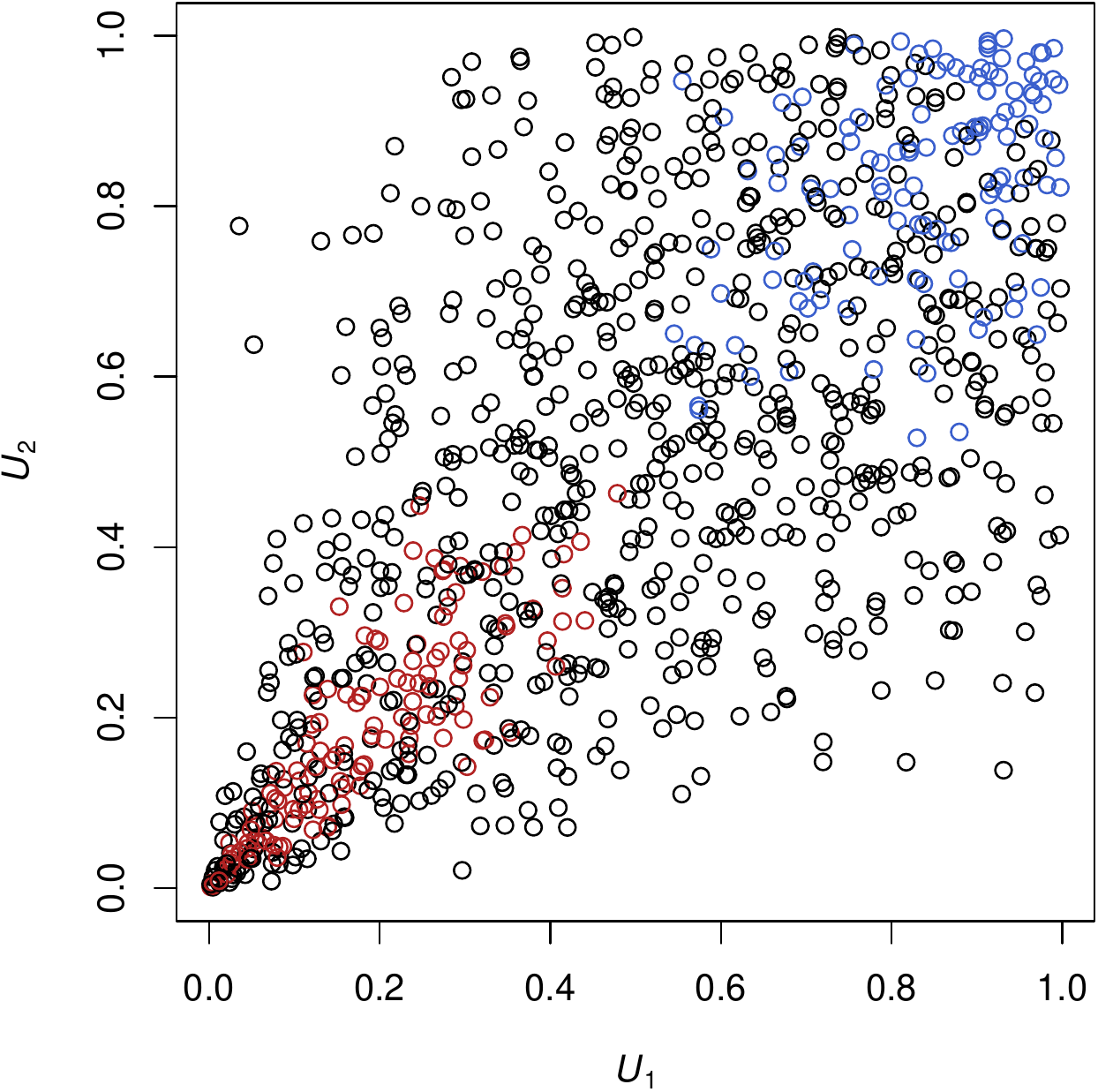}%
  \caption{1000 realizations of the second and third components of a three-dimensional
    Halton sequence with colored points corresponding to the regions $[0,0.5]^3$ and
    $[0.5,1]^3$ (top): corresponding $\phi_C^{\text{MO}}$-transformed
    points (bottom) to a Clayton copula with $\theta=2$ (Kendall's tau equals
    0.5).}
  \label{fig:qrng:col:U:MO}
\end{figure}
%A similar experiment based on how colored squares are mapped by the same
%$\phi_C$'s can be found in the \R\ package \texttt{copula}; see the vignette \texttt{qrng}.
%

\subsubsection*{Marshall--Olkin copulas}
Bivariate ($d=2$)
Marshall--Olkin copulas $C$ also allow for a stochastic representation in our
framework $\phi_C$ for $k>d$. For example, it is easy to check that
for $(U_1',U_2',U_3')\sim\U[0,1]^3$,
\begin{align*}
  (\max\{U_1'^{\frac{1}{1-\alpha_1}},\ U_3'^{\frac{1}{\alpha_1}}\},\ \max\{U_2'^{\frac{1}{1-\alpha_2}},\ U_3'^{\frac{1}{\alpha_2}}\})\sim C.
\end{align*}
This construction can be generalized to $d>2$ (but we omit further details about
Marshall--Olkin copulas in the remaining part of this paper).

\subsection{Words of caution}
The plots showing copula samples obtained from QRNGs that we have seen so far
have been promising, in that the additional uniformity (or low discrepancy)
compared to pseudo-sampling was visible. Here we want to add a word of caution
to the effect that it is crucial to work with high quality quasi-random numbers,
as defects that exist with respect to their uniformity on the unit cube will
translate into poor copula samples. Figure~\ref{fig:CDM_Hal_and_Ghal_proj=20_21}
illustrates this by showing two-dimensional copula samples obtained from
quasi-random numbers of poor quality, corresponding to the projection on
coordinates (20,21) of the first 1000 points of the Halton sequence (top) and a
similar sample obtained from a generalized Halton sequence (bottom), which was
designed to address defects of this type in the Halton sequence. More precisely,
here the problem is that this particular projection is based on the twin prime
numbers 71 and 73 for the base. Defects of this type are discussed further,
e.g., in \cite{morokoffcaflisch94}.

\begin{figure}[htbp]
  \centering
  \includegraphics[width=0.425\textwidth]{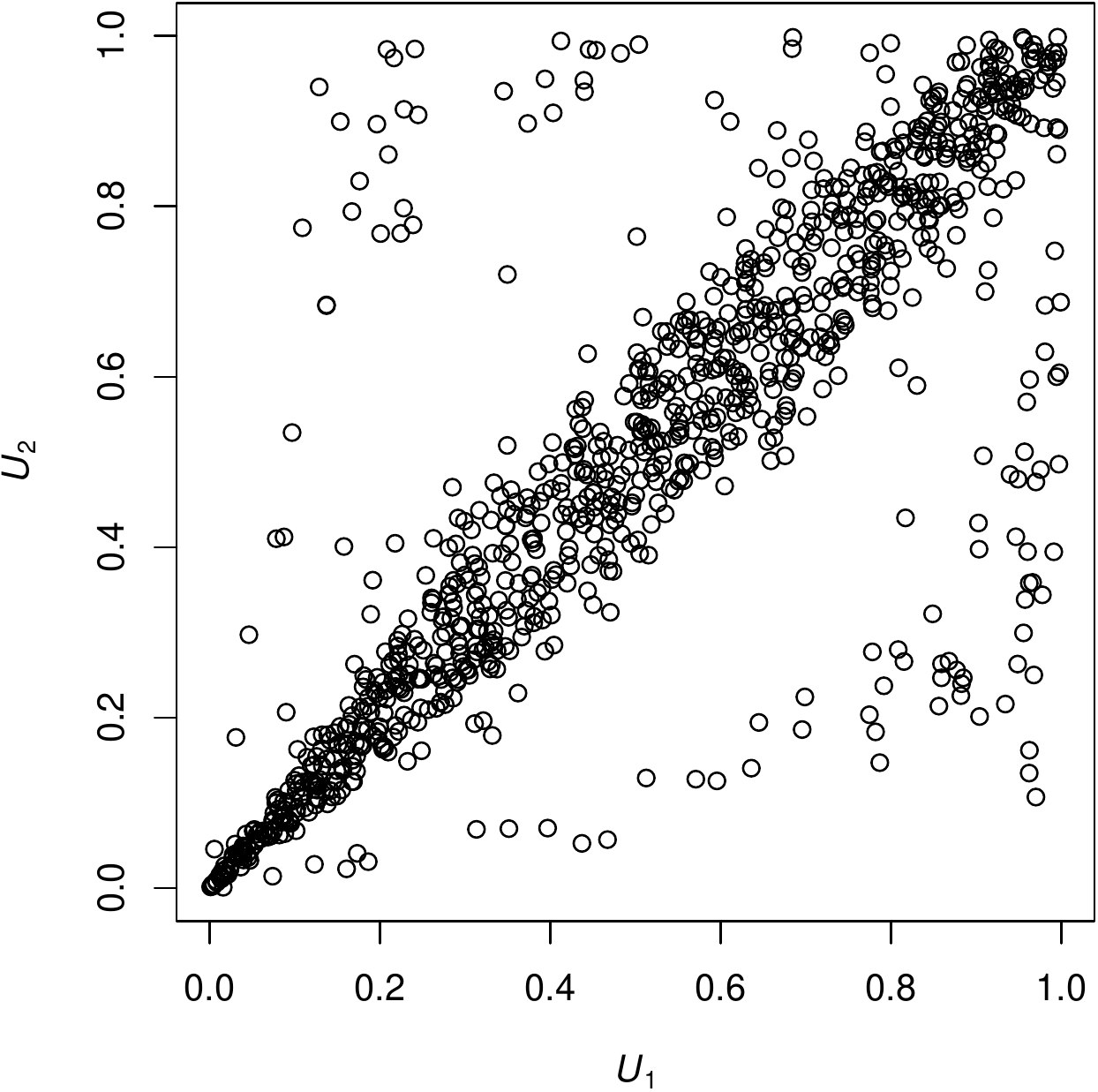}\\[2mm]
  \includegraphics[width=0.425\textwidth]{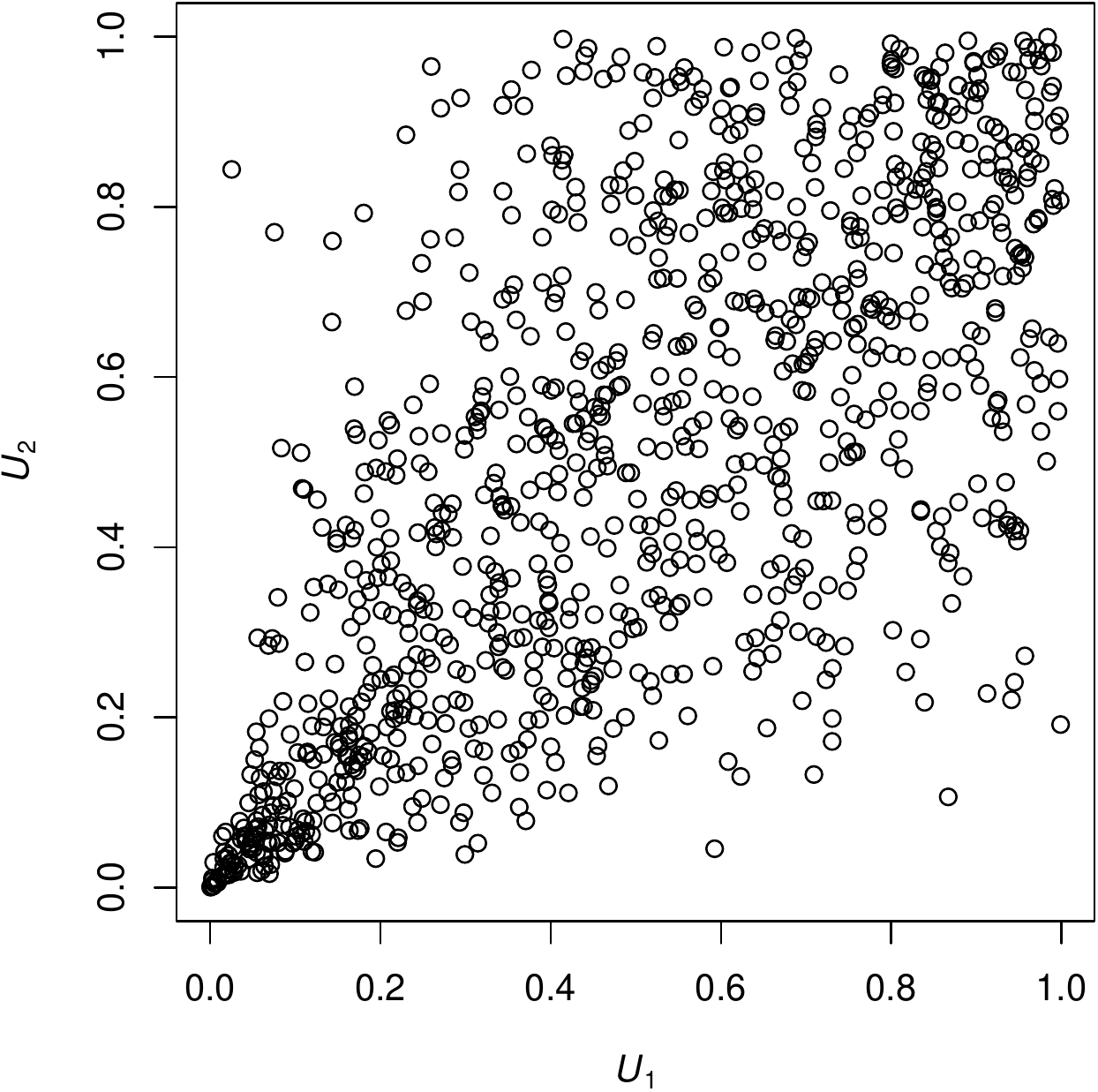}%
  \caption{Samples obtained from Clayton copula with $\theta=2$ with the CDM
    method based on coordinates 20 and 21 of the Halton sequence (top) and
    the generalized Halton sequence (bottom).}
  \label{fig:CDM_Hal_and_Ghal_proj=20_21}
\end{figure}

%%% END INSERT MARIUS PORTION

%%%%%%%%%%%%%%%%%%%%%%%%%%%%%%%%%%%%%%%%%%%%
%\section{Minimizing the discrepancy with respect to the copula measure}
\section{Analyzing the performance of copula sampling with quasi-random numbers}
\label{sec:discrep}

In this section, we discuss the two approaches outlined in the introduction to analyze the validity of sampling algorithms for copulas that are based on low-discrepancy sequences.

%\subsection{Sup-discrepancy}\label{subsec:sup-discrepancy}
\subsection{Composing the sampling method with the function of interest $\Psi$}
\label{subsec:composition}
Our goal here is to assess the quality of a quasi-random sampling method for
copula models by viewing the transformation $\phi_C$ as being composed with the
function $\Psi$ of interest, so that we can work in the usual Koksma--Hlawka
setting based on uniform discrepancy measures.

Given that a copula transform $\phi_C=(\phi_{C,1},\dots,\phi_{C,d})$ is regular enough, denote its
Jacobian by
\begin{align*}
  J_{\phi_C}=\frac{\partial\left(\phi_{C,1},\dots,\phi_{C,d}\right)}{\partial\left(u_1,\dots,u_d\right)},
\end{align*}
and write
\begin{align*}
  \IE\left[\Psi(\bm{U})\right]&=\int_{[0,1]^d}\Psi(\bm{u})c(\bm{u})d\bm{u}\\
  &=\int_{[0,1]^d}\Psi(\phi_C(\bm{v}))c(\phi_C(\bm{v}))|J_{\phi_C}(\phi_C(\bm{v}))|\,d\bm{v}.
\end{align*}

In the case of $\phi_C=\phi_C^{\text{CDM}}$, one can easily show that
$|J_{\phi_C}(\phi_C(\bm{v}))|=c(\phi_C(\bm{v}))^{-1}$, and thus
\begin{align}\label{eq:canonical_rep}
  \IE\left[\Psi(\bm{U})\right]&=\int_{[0,1]^d}\Psi(\phi_C(\bm{v}))d\bm{v}.
\end{align}

While the properties of the CDM approach allow one to directly show
\eqref{eq:canonical_rep} in its integral form as done above, this equality holds
more generally for any transformation
$\phi_C:[0,1]^k\to[0,1]^d$ such that
$\phi_C(\bm{U}) \sim C$ whenever $\bm{U} \sim\U[0,1]^k$; see also
\cite{CaflischSurvey,PillardsCools2006}.

In the case where~\eqref{eq:canonical_rep} holds, one can apply
the Koksma--Hlawka error bound~\eqref{eq:Koksma-Hlawka} to transformed samples.

%%%%%%%%% insert CL material: Feb. 13

\begin{proposition}[Koksma--Hlawka bound for a change of variables]\label{prop:Koksma-Hlawka_variable}
  Let $\bm{U}\sim C$, $\phi_C$ such that~\eqref{eq:canonical_rep} holds, and $\bm{u}_i=\phi_C(\bm{v}_i)$ for $P_n = \{\bm{v}_i,i=1,\ldots,n\}$ in $[0,1]^d$. Then
  \begin{align*}
    \biggl|\frac{1}{n} \sum_{i=1}^n \Psi(\bm{u}_i) -\IE[\Psi(\bm{U})]\biggr|\le D^*(P_n)V(\Psi\circ\phi_C).
  \end{align*}
\end{proposition}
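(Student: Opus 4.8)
The plan is to recognize the left-hand side as an ordinary quadrature error on the unit cube for the single composed integrand $\Psi\circ\phi_C$, and then to apply the standard Koksma--Hlawka inequality~\eqref{eq:Koksma-Hlawka}. First I would set $f=\Psi\circ\phi_C:[0,1]^d\to\IR$ and note that, by the definition $\bm{u}_i=\phi_C(\bm{v}_i)$, we have $\Psi(\bm{u}_i)=f(\bm{v}_i)$, so the empirical average $\frac{1}{n}\sum_{i=1}^n\Psi(\bm{u}_i)$ equals $\frac{1}{n}\sum_{i=1}^nf(\bm{v}_i)$.

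Next I would use hypothesis~\eqref{eq:canonical_rep} to replace the target: it states exactly that $\IE[\Psi(\bm{U})]=\int_{[0,1]^d}\Psi(\phi_C(\bm{v}))\,d\bm{v}=\int_{[0,1]^d}f(\bm{v})\,d\bm{v}$. Hence the quantity to be bounded is the integration error $\bigl|\frac{1}{n}\sum_{i=1}^nf(\bm{v}_i)-\int_{[0,1]^d}f(\bm{v})\,d\bm{v}\bigr|$ of $f$ over $P_n$. Applying~\eqref{eq:Koksma-Hlawka} with $k=d$ and $\bm{U}'\sim\U[0,1]^d$ (so that $\IE[f(\bm{U}')]=\int_{[0,1]^d}f\,d\bm{v}$) then yields $D^*(P_n)\,V(f)=D^*(P_n)\,V(\Psi\circ\phi_C)$, which is the claim.

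There is essentially no obstacle in the argument itself, since it is a one-line change of integrand once~\eqref{eq:canonical_rep} is in hand; the content of the proposition lies entirely in the hypothesis~\eqref{eq:canonical_rep} (already established for the CDM and, more generally, for any $\phi_C$ pushing $\U[0,1]^k$ forward to $C$). The one caveat worth flagging is that the bound is informative only when $V(\Psi\circ\phi_C)<\infty$, i.e.\ when the composition has finite variation in the sense of Hardy and Krause. This is implicit in writing $V(\Psi\circ\phi_C)$, but it is precisely where the choice of transformation $\phi_C$ matters: composing with an irregular $\phi_C$ may fail to be of bounded variation even when $\Psi$ is well behaved, which motivates the comparative analysis pursued in the rest of the section.
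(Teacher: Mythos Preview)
Your argument is correct and matches the paper's own proof, which simply notes that the result follows from the Koksma--Hlawka inequality~\eqref{eq:Koksma-Hlawka}, the representation~\eqref{eq:canonical_rep}, and the identity $\Psi(\bm{u}_i)=\Psi(\phi_C(\bm{v}_i))$. Your remark that the bound is informative only when $V(\Psi\circ\phi_C)<\infty$ is also exactly the caveat the paper raises immediately after the proposition.
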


Note that $V(\Psi)<\infty$ does not imply $V(\Psi\circ\phi_C)<\infty$ in
general. To get further insight into the conditions required to have a finite
bound on the integration error, we work with a slight variation of the
above bound that is given in \cite[pp.~19--20]{Niederreiter1992} (see also
\cite[(4')]{hlawkamuck1972} and \cite[(4)]{hlawka1961}), where the term
$V(\Psi\circ\phi_C)$ is replaced by an expression given in terms of the partial
derivatives of $\Psi\circ\phi_C$ assuming the latter exist and are
continuous. It is given by
  \begin{align*}
    \biggl|\frac{1}{n} \sum_{i=1}^n \Psi(\bm{u}_i) -\IE[\Psi(\bm{U})]\biggr|\le D^*(P_n) \|\Psi \circ \phi_C \|_{d,1},
  \end{align*}
  where
   \begin{align*}
  &\|\Psi \circ \phi_C \|_{d,1} =  \\
  & \sum_{l=1}^s\sum_{\bm{\alpha}} \int_{[0,1)^l} \left| \frac{\partial^l \Psi \circ \phi_C (v_{\alpha_1},\ldots,v_{\alpha_l},\bm{1})}{\partial v_{\alpha_1} \cdots \partial v_{\alpha_l}}\right| dv_{\alpha_1}\ldots dv_{\alpha_l}
   \end{align*}
   and the second sum is taken over all nonempty subsets
   $\bm{\alpha} = \{\alpha_1,\ldots,\alpha_l\} \subseteq
   \{1,\ldots,d\}$.
   Furthermore, the notation $\bm{1}$ in
   $\Psi \circ \phi_C (v_{\alpha_1},\ldots,v_{\alpha_l},\bm{1})$
   means that each variable $v_j$ with $j \notin \{\alpha_1,\ldots,\alpha_l\}$ is set to 1. \\

   The following proposition provides sufficient conditions on the functional
   $\Psi$ and on the copula $C$ to ensure that
   $\|\Psi \circ \phi_C \|_{d,1} < \infty $ when $\phi_C=\phi_C^{\text{CDM}}$.

\begin{proposition}[Conditions to have bounded variation with variable change in the CDM]
Assume that $\Psi$ has continuous mixed partial derivatives up to total order $d$ and there exist
$m,M,K>0$ such that for all $\bm{u}\in(0,1)^d$, $c(\bm{u})\geq m>0$ and 
\begin{align}\label{eq:bounded_cond_cop}
  \left|\frac{\partial^k C(u_i\,|\,u_1,\dots,u_{i-1})}{\partial u_{\alpha_1}\cdots\partial u_{\alpha_k}}\right|\leq M,\,\,\alpha_1,\dots,\alpha_k\in\{1,\dots,i\},
\end{align}
for each $1\leq k\leq i\leq d$. 
Furthermore, assume that for all $1\le k \le  l\le d$ and $\{\alpha_1,\ldots,$ $\alpha_l\}$ $\subseteq \{1,\ldots,d\}$, we have
\begin{align}\label{eq:AddConstraintPhiDerivatives}
  \left|\frac{\partial^k \Psi(u_{1},\ldots,u_d) }{\partial u_{\beta_1}\cdots\partial u_{\beta_k}}\right|\leq K,\quad \beta_j  \in \{\alpha_1,\ldots,\alpha_l\},\ 1 \le j \le k.
\end{align}
Then there exists a constant $C^{(d)}$ (independent of $n$ but dependent on $\Psi$) such that for the choice $\phi_C = \phi_C^{\text{CDM}}$, we have
\begin{align*}
   \biggl|\frac{1}{n} \sum_{i=1}^n \Psi(\bm{u}_i) -\IE[\Psi(\bm{U})]\biggr|\le D^*(\bm{v}_1,\dots,\bm{v}_n)  K C^{(d)} (M^d/m)^{2d-1},
\end{align*}
where $\bm{u}_i = \phi_C^{\text{CDM}}(\bm{v}_i)$, $i=1,\ldots,n$.
\end{proposition}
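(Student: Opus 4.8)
The plan is to bound the Niederreiter-type variation $\|\Psi\circ\phi_C^{\text{CDM}}\|_{d,1}$ appearing in the preceding error bound and to show it does not exceed $K\,C^{(d)}(M^d/m)^{2d-1}$; the proposition then follows at once. Each summand of $\|\Psi\circ\phi_C^{\text{CDM}}\|_{d,1}$ is the integral over a set of Lebesgue measure at most one of $|\partial^l(\Psi\circ\phi_C^{\text{CDM}})/\partial v_{\alpha_1}\cdots\partial v_{\alpha_l}|$, and there are $2^d-1$ such summands; hence it suffices to bound each mixed partial derivative of the composition uniformly on $(0,1)^d$ by $K(M^d/m)^{2d-1}$ up to a constant depending only on $d$, which is then folded into $C^{(d)}$.

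First I would record bounds on the marginal copula densities. Integrating the hypothesis $c(\bm{u})\ge m$ over the last $d-j$ coordinates shows that each marginal density $c(u_1,\dots,u_j)$ is bounded below by $m$; writing it as the telescoping product $\prod_{i=2}^{j}\partial_{u_i}C(u_i\,|\,u_1,\dots,u_{i-1})$ and applying \eqref{eq:bounded_cond_cop} with $k=1$ shows it is bounded above by $M^{j-1}$. Consequently the conditional density $c_{j|1:(j-1)}=\partial_{u_j}C(u_j\,|\,u_1,\dots,u_{j-1})=c(u_1,\dots,u_j)/c(u_1,\dots,u_{j-1})$ obeys $m/M^{j-2}\le c_{j|1:(j-1)}\le M$ on all of $(0,1)^d$.

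Next I would control the partial derivatives of the triangular map $\phi_C^{\text{CDM}}$. Its $j$th component $u_j=\Gamma_j(v_1,\dots,v_j)$ is defined implicitly by $C(u_j\,|\,u_1,\dots,u_{j-1})=v_j$, so differentiating this identity once gives $\partial u_j/\partial v_j=1/c_{j|1:(j-1)}$, whence $1/M\le\partial u_j/\partial v_j\le M^{j-2}/m$ by the previous step. Higher mixed derivatives of $\Gamma_j$ follow by repeated implicit differentiation, and the crucial structural fact is the classical one that the $k$th derivative of an inverse function carries at most $2k-1$ powers of the reciprocal of the first derivative: each stage of solving for the top-order term divides once more by $c_{j|1:(j-1)}$, while the differentiated numerators are mixed derivatives of the conditional copula bounded by $M$ through \eqref{eq:bounded_cond_cop}. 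Thus a mixed derivative of $\Gamma_j$ of order $k\le d$ is bounded, up to a combinatorial constant, by a product of at most $2k-1$ factors $1/c_{j|1:(j-1)}\le M^{j-2}/m$ and finitely many factors $\le M$.

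Finally I would assemble the estimate through the multivariate Fa\`a di Bruno formula. A mixed derivative of $\Psi\circ\phi_C^{\text{CDM}}$ is a finite sum, indexed by set partitions, of products of one partial derivative of $\Psi$ (bounded by $K$ via \eqref{eq:AddConstraintPhiDerivatives}) with mixed derivatives of the components $\Gamma_j$ whose orders sum to $l\le d$. Since the number of reciprocal factors contributed by an order-$k_i$ block is at most $2k_i-1$, the total number of factors $1/m$ in any term is at most $\sum_i(2k_i-1)\le 2l-1\le 2d-1$; bundling the accompanying powers of $M$ (which, assuming without loss of generality $M\ge1$, are dominated by $M^{d(2d-1)}$) yields the uniform bound $K\,C^{(d)}(M^d/m)^{2d-1}$, with $C^{(d)}$ collecting the number of partitions and the $2^d-1$ summands of $\|\cdot\|_{d,1}$. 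The hard part is exactly this bookkeeping: certifying through the nested implicit differentiation and the chain rule that the accumulated powers of $1/m$ never exceed $2d-1$ and that all combinatorial factors remain absorbed into a constant depending only on $d$.
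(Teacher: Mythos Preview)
Your approach is correct and is precisely the argument of Hlawka and M\"uck, to which the paper's own proof reduces entirely: the paper simply cites \cite[(11) and the remark thereafter]{hlawkamuck1972} and gives no further detail. The strategy you lay out---bound $\|\Psi\circ\phi_C^{\text{CDM}}\|_{d,1}$ via the multivariate Fa\`a di Bruno formula, control the triangular CDM components $\Gamma_j$ through implicit differentiation of the defining relations $C(u_j\,|\,u_1,\dots,u_{j-1})=v_j$, and track how many reciprocal factors of the conditional densities accumulate---is exactly the route taken there.

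One refinement is warranted in your exponent count. The heuristic that an order-$k$ mixed derivative of $\Gamma_j$ carries at most $2k-1$ factors of $1/c_{j\mid 1:(j-1)}$ is the classical univariate inverse-function estimate, but because $\Gamma_j$ depends on $v_1,\dots,v_{j-1}$ through the lower components $\Gamma_1,\dots,\Gamma_{j-1}$, differentiating in those directions also pulls in reciprocals $1/c_{i\mid 1:(i-1)}$ for $i<j$. Already $\partial u_3/\partial v_2=-\partial_{u_2}C(u_3\,|\,u_1,u_2)\big/(c_{3\mid 1,2}\,c_{2\mid 1})$ carries two reciprocal factors, not the single one predicted by $2\cdot 1-1$. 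The final exponent $2d-1$ does survive, but establishing it requires the careful inductive bookkeeping carried out in \cite{hlawkamuck1972}---which is exactly the step you correctly flag as the hard part.
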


\begin{proof}
See~\cite[(11) and the remark thereafter]{hlawkamuck1972}.
\end{proof}

\begin{remark}\label{rem:HighOrder}
  \begin{enumerate}
  \item As we will see in the discussion preceding the next proposition, in general,
    to ensure that $\|\Psi \circ \phi_C \|_{d,1} < \infty $ holds, a possible
    approach is to bound the mixed partial derivatives involving $\Psi$ and then
    to verify that the mixed partial derivatives involving $\phi_C$ are
    integrable. As explained in \cite{hlawkamuck1972}, Condition
    \eqref{eq:bounded_cond_cop} ensures that the latter condition is verified in
    the case of the CDM (or Rosenblatt) transform, and avoids having to deal with
    the function $\phi_C$ and its partial derivatives. Unfortunately (and while it
    may seem easier to work with the conditional copulas
    $C(u_j\,|\,u_1,\ldots,u_{j-1})$ than with $\phi_C$), in many cases the copulas
    involved do not have bounded mixed partial derivatives everywhere, with
    singularities appearing near the boundaries when one or more arguments are 0
    or 1.  A non-trivial case where we were able to verify
    \eqref{eq:bounded_cond_cop} is for the
    Eyraud-Farlie-Gumbel-Morgenstern copula (see \cite{jaworskidurantehaerdlerychlik2010}), assuming the parameters %$\theta_J$
    are chosen so that the density $c(\bm{u})$ and thus the denominator %$D$
    of
    $C(u_j|u_1,\ldots,u_{j-1})$ is bounded away from 0 for all
    $\bm{u} \in [0,1)^d$.
  \item\label{rem:HighOrder:2} We note that the conditions given in  (\ref{eq:AddConstraintPhiDerivatives}) are not the same as those required to prove that
    \begin{align*}
      \|\Psi\|_{d,1} =\sum_{l=1}^d\sum_{\bm{\alpha}} \int\limits_{[0,1)^l}\!\! \left| \frac{\partial^l \Psi (u_{\alpha_1},\ldots,u_{\alpha_l},\bm{1})}{\partial u_{\alpha_1} \cdots \partial u_{\alpha_l}}\right| du_{\alpha_1}\ldots du_{\alpha_l}
    \end{align*}
    is bounded; in the latter case, we only need to consider mixed partial derivatives of order at most one in each variable (since the $\alpha_j$'s are distinct). However, in \eqref{eq:AddConstraintPhiDerivatives}, the $\beta_j$'s are not necessarily distinct. In particular, this means that we need to consider the partial derivative of $\Psi$ of order $d$ with respect to each variable and make sure it is bounded.
  \end{enumerate}
\end{remark}

 Let us now move away from the CDM method and consider a general transformation $\phi_C$.
 In order to study $\|\Psi \circ \phi_C \|_{d,1}$, we first need to decompose mixed partial derivatives of the form
 \begin{align*}
   \frac{\partial^l (\Psi \circ \phi_C)(v_{\alpha_1},\ldots,v_{\alpha_l},\bm{1})}{\partial v_{\alpha_1} \cdots \partial v_{\alpha_l}}
 \end{align*}
 in terms of $\Psi$ and $\phi_C$ separately. To do so, we
 follow \cite{hlawkamuck1972}, as well as \cite[Theorem~2.1]{const1996}, and obtain
 an expression for the mixed partial derivative of a composition of functions
 via the representation
 \begin{align}
   &\phantom{{}.{}}\frac{\partial^l \Psi \circ \phi_C(v_{\alpha_1},\ldots,v_{\alpha_l},\bm{1})}{\partial v_{\alpha_1} \cdots \partial v_{\alpha_l}}\notag\\
   &=\!\!\!\!\!\!\sum_{1 \le |\bm{\beta}| \le l} \!\!\frac{\partial^{|\bm{\beta}|} \Psi }{\partial^{\beta_1} u_{1} \ldots \partial^{\beta_d} u_{d}}
     \sum_{s=1}^l \sum_{\gamma,\bm{k}}
     c_{\gamma} \prod_{j=1}^s \!\!\frac{\partial^{|\gamma_j|} \phi_{C,k_j}(v_{\alpha_1},\ldots,v_{\alpha_l},\bm{1})}{\partial^{\gamma_{j,1}} v_{\alpha_1}\ldots \partial^{\gamma_{j,l}} v_{\alpha_l}} \label{eq:FaaDiBruno}
 % \sum_{k=1}^l
 %\sum_{v_1 \in \bm{\alpha}} \ldots \sum_{v_k \in \bm{\alpha}} \left| \frac{\partial \Psi(u_{\alpha_1},\ldots,u_{\alpha_l},\bm{1}}{\partial u_{v_1}\ldots \partial u_{v_k}} \right||P_l(\partial \phi_{C,v_1}\ldots\phi_{C,v_k})|
 \end{align}
 where $\bm{\beta} \in \mathbb{N}_0^d$ and
 $|\bm{\beta}| = \sum_{j=1}^d \beta_j$. Here we do not specify over which values
 of $\gamma_j$ and $k_j$ the inner sum in the above expression is taken: details
 can be found in the proof of Proposition~\ref{prop:BdErrorArchMO}. But let us
 point out that in the product over $j$, each index $\alpha_1,\ldots,\alpha_l$
 appears exactly once. On the other hand -- and as noted in item \ref{rem:HighOrder:2} of
 Remark~\ref{rem:HighOrder} above -- in the mixed partial
 derivative of $\Psi$, a given variable can appear with order larger than 1.

 From \eqref{eq:FaaDiBruno}, we see that a sufficient condition to show that $ \|\Psi \circ \Phi_C \|_{d,1} < \infty$ is to establish that all products  of the form
 \begin{align}
 \label{eq:PartDerivL1Phi_C_Gen}
  \frac{\partial^{|\bm{\beta}|} \Psi }{\partial^{\beta_1} u_{1} \ldots \partial^{\beta_d} u_{d}}
 %\sum_{s=1}^l \sum_{(\bm{k},\bm{\gamma}) \in p_s(\bm{\beta},\bm{\alpha})}
 \prod_{j=1}^s \frac{\partial^{|\gamma_j|}
   \phi_{C,k_j}(v_{\alpha_1},\ldots,v_{\alpha_l},\bm{1})}{\partial^{\gamma_{j,1}}
   v_{\alpha_1}\ldots\partial^{\gamma_{j,l}} v_{\alpha_l}},\ s\in\{1,\ldots,l\},
  %\mbox{ and }(\bm{k},\bm{\gamma}) \in p_s(\bm{\beta},\bm{\alpha})
 % \frac{\partial^{|\bm{\beta}|} \Psi }{\partial^{\beta_1} u_{1} \ldots \partial^{\beta_d} u_{d}}
 %\prod_{j=1}^d \frac{\partial^{|\gamma_j|} \phi_{C,j}(v_{\alpha_1},\ldots,v_{\alpha_l},\bm{1})}%{\partial^{\gamma_{j,1}} v_{\alpha_1}\ldots\partial^{\gamma_{j,l}} v_{\alpha_l}}
%% \left| \frac{\partial^l \phi_{C,v}}{\partial u_{j_1} \ldots \partial u_{j_l}}\right| \le K(d,l)
% %\quad 1 \le l \le d; 1 \le j_i \le d
 \end{align}
 are in $L_1$.
% valid for all points in $[0,1)^d$ \christiane{check}, where $K(d,l)$ is a constant that depends only on $d$ and $l$.

 We note that for the MO algorithm (assuming as we did in \eqref{eq:PhiC_MO}
 that $v_1$ is used to generate $V$ and $v_{j+1}$ is used to generate $E_{j}$),
 $\phi_{C,j}$ is a function of $v_1$ and $v_{j+1}$ only, for
 $j\in\{1,\ldots,d\}$. Hence the only partial derivatives of $\phi_{C,j}$ that are
 nonzero are those with respect to variables in $\{v_1,v_{j+1}\}$. This
 observation is helpful to prove the following result, which shows that the
 error bound obtained when using the MO algorithm has the desired behavior
 induced by the low-discrepancy point set used to generate the copula
 samples; note that it does not show that $\Psi \circ \Phi_C$ has bounded
 variation. Its proof can be found in the appendix. %~\ref{sec:proof}.

\begin{proposition}[Error behaviour for MO for continuous $V$]
\label{prop:BdErrorArchMO}
 Let $\phi_C^{\text{MO}}$ be the transformation associated with the
 Mar\-shall--Olkin algorithm, as given in \eqref{eq:PhiC_MO}, and that $V\sim F$ is
 continuously distributed.  Let
$P_n=\{\bm{v}_i,i=1,\ldots,n\}$ be the point set in $[0,1)^{d+1}$ used to
produce copula samples via the transformation $\phi_C^{\text{MO}}$ and let $\bm{u}_i = \phi_C^{\text{MO}}(\bm{v}_i)$. If
 \begin{enumerate}
  \item the point set $P_n$ excludes the origin and there exists some $p \ge 1$ such that $\min_{1 \le i \le n} v_{i,1} \ge 1/pn$;
 \item\label{prop:err:2} the function $\Psi$ satisfies $|\Psi(\bm{u})| < \infty$ for all $\bm{u} \in [0,1)^{d+1}$ and
  \begin{align}
  \label{eq:CondPartPhiMO}
   \frac{\partial^{|\bm{\beta}|} \Psi }{\partial^{\beta_1} u_{1} \ldots \partial^{\beta_d} u_{d}} < \infty
   \mbox{ for all $\bm{\beta}=(\beta_1,\ldots,\beta_d)$},
 \end{align}
 with $\beta_l \in \{0,\ldots,d\}$ and $|\bm{\beta}| \le d$;
  \item\label{prop:err:3} and the generator $\psi(\cdot)$ of the Archimedean
    copula $C$ is such that
    \begin{enumerate}[label=\alph*)]
    \item\label{prop:err:3:a} $\psi'(t)+t\psi''(t)$ has at most
    % \TODO[Should this be ``at most'' or ``precisely'' instead of ``only''?]
      one zero $t^*$ in $(0,\infty)$
      and it satisfies $-t^*\psi'(t^*)<\infty$; and
    \item\label{prop:err:3:b} $F^{-1}(1-1/pn)$ is in $O(n^{a})$ for some constant $a>0$;
    \end{enumerate}
 \end{enumerate}
then there exists a constant $C^{(d)}$ (independent of $n$ but dependent of $\Psi$ and $\phi_C^{\text{MO}}$) such that
\begin{align*}
  \biggl|\frac{1}{n} \sum_{i=1}^n \Psi(\bm{u}_i) -\IE[\Psi(\bm{U})]\biggr|\le
  C^{(d)} (\log n) D^*(P_n).
\end{align*}
\end{proposition}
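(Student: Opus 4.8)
The plan is to deduce the bound from the generalized Koksma--Hlawka inequality in the seminorm form $\bigl|\frac1n\sum_i f(\bm v_i)-\int f\bigr|\le D^*(P_n)\,\|f\|_{d,1}$ applied to $f=\Psi\circ\phi_C^{\text{MO}}$, after first modifying $f$ near the boundary of the cube so that the (otherwise infinite) seminorm becomes finite. The modification exploits condition~(1): since no point of $P_n$ has first coordinate below $1/pn$ (and, for the low-discrepancy sets in use, none exceeds some $M_n$ with $1-M_n\gtrsim 1/pn$), I would replace $f$ by the function $\tilde f$ that freezes $v_1$ to the nearest endpoint of the slab $[1/pn,M_n]$ outside that slab. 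Then $\frac1n\sum_i\tilde f(\bm v_i)=\frac1n\sum_i f(\bm v_i)$ because all points lie in the slab, while $\bigl|\int\tilde f-\int f\bigr|=O(1/n)$ since $\Psi$ is bounded by condition~(2) and the frozen region has Lebesgue measure $O(1/n)$. This reduces everything to bounding $\|\tilde f\|_{d,1}$, whose defining integrals now run over $v_1\in[1/pn,M_n]$, and the $O(1/n)$ correction will be dominated by $(\log n)D^*(P_n)$.

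Next I would expand each mixed partial derivative $\partial^l\tilde f/\partial v_{\alpha_1}\cdots\partial v_{\alpha_l}$ by the Fa\`a di Bruno formula~\eqref{eq:FaaDiBruno} and use the structural fact (noted before the statement) that $\phi_{C,j}^{\text{MO}}$ depends only on $(v_1,v_{j+1})$, so that $\partial\phi_{C,m}^{\text{MO}}/\partial v_1=0$ whenever $v_{m+1}=1$ and only factors indexed by the active exponential coordinates survive. Writing $t=-\log v_{j+1}/F^-(v_1)$ and $q=F^-$, a direct computation gives the building blocks
\begin{align*}
  \frac{\partial u_j}{\partial v_{j+1}}&=-\frac{\psi'(t)}{q\,v_{j+1}}, &
  \frac{\partial u_j}{\partial v_1}&=-\,t\psi'(t)\,\frac{q'}{q}, \\
  \frac{\partial^2 u_j}{\partial v_1\,\partial v_{j+1}}&=\frac{q'}{q^2 v_{j+1}}\bigl(\psi'(t)+t\psi''(t)\bigr),
\end{align*}
and analogous higher-order expressions, all of which factor cleanly into a $\psi$-part evaluated at $t$ and a $q$-part in $v_1$; the $\Psi$-factors $\partial^{|\bm\beta|}\Psi/\partial^{\beta_1}u_1\cdots\partial^{\beta_d}u_d$ are bounded by condition~(2).

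I would then estimate the $L_1$-norm of each product term by integrating the $v_{j+1}$-variables first and $v_1$ last. The substitution $w=-\log v_{j+1}$ followed by $t=w/q$ turns each $v_{j+1}$-integration into an integral of a $\psi$-derivative over $t\in(0,\infty)$ in which the factors $1/q$ cancel against the Jacobian $dw=q\,dt$; these are finite constants by monotonicity of $\psi$ (for pure $v_{j+1}$-derivatives, $\int_0^\infty|\psi'(t)|\,dt=1$) and, for the mixed second derivative, by condition~(3a): since $\psi'+t\psi''=(t\psi')'$ has at most one zero, $t\psi'$ is unimodal, so $\int_0^\infty|\psi'(t)+t\psi''(t)|\,dt$ equals its total variation $2\bigl(-t^*\psi'(t^*)\bigr)<\infty$, and the same single-extremum property yields $\sup_t|t\psi'(t)|=-t^*\psi'(t^*)<\infty$, which controls the pure $v_1$-factor. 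After these cancellations every surviving term collapses to a finite constant times $\int_{1/pn}^{M_n}|(\log q)'(v_1)|\,dv_1=\log F^-(M_n)-\log F^-(1/pn)$. The upper endpoint is $O(\log n)$ by condition~(3b), because $F^-(M_n)\le F^{-1}(1-1/pn)\in O(n^a)$, and the lower endpoint contributes at most $O(\log n)$ thanks to $v_1\ge1/pn$. Summing the finitely many seminorm terms (their number depending only on $d$) gives $\|\tilde f\|_{d,1}\le C^{(d)}\log n$, which combined with Koksma--Hlawka and the $O(1/n)$ correction proves the claim.

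The hard part will be the $L_1$ bookkeeping of the second step: showing that each product emitted by Fa\`a di Bruno, after the exponential variables are integrated out, reduces to a \emph{single} factor $\int|(\log F^-)'|$ rather than a higher power of it or a divergent $\psi$-integral — which is precisely what condition~(3a) secures through the $(t\psi')'$ structure — and, in parallel, controlling the behaviour as $v_1\to1$ (where $F^-(v_1)\to\infty$ and the relevant integrand is genuinely non-integrable). This near-$1$ divergence is where the \emph{unbounded} variation of $\Psi\circ\phi_C^{\text{MO}}$ surfaces; it is tamed only by the boundary truncation and condition~(3b), and it is the reason one obtains the weaker $(\log n)D^*(P_n)$ rate rather than a bounded-variation rate. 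I would therefore spend the most care verifying the companion estimate $1-M_n\gtrsim 1/pn$ for the point sets in use, so that condition~(3b) can legitimately be invoked at the upper endpoint.
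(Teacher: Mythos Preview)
Your proposal follows essentially the same architecture as the paper's proof: the Fa\`a di Bruno expansion \eqref{eq:FaaDiBruno}, exploitation of the fact that $\phi_{C,j}^{\text{MO}}$ depends only on $(v_1,v_{j+1})$, explicit computation of the three building-block derivatives, use of condition~(3a) to control the sign of $\psi'(t)+t\psi''(t)$, and finally the Sobol'/Hartinger truncation to tame the unbounded variation and extract the $\log n$ factor via condition~(3b). Your framing of (3a) through the identity $(t\psi')'=\psi'+t\psi''$ and unimodality of $t\psi'$ is equivalent to, and arguably tidier than, the paper's direct sign-splitting of the inner $v_2$-integral.

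The one substantive difference is how the truncation is set up. You freeze $v_1$ outside a two-sided slab $[1/pn,M_n]$ and then need the companion estimate $1-M_n\gtrsim 1/pn$, which you correctly flag as requiring separate verification---but note that this is \emph{not} among the stated hypotheses of the proposition, so as written your argument proves a slightly weaker statement. The paper sidesteps this by reparameterizing: it replaces $V=F^{-1}(v_1)$ by $V=F^{-1}(1-v_1)$, which moves the $V\to\infty$ singularity from $v_1=1$ to $v_1=0$, exactly where condition~(1) already guarantees $v_{i,1}\ge 1/pn$. The truncation is then one-sided with $\bm{c}=(1/pn,0,\ldots,0)$, and condition~(3b) applies directly to $F^{-1}(1-1/pn)$ without any auxiliary assumption on $\max_i v_{i,1}$. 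This reparameterization is the one trick you are missing; adopting it removes the need for your $M_n$ estimate and your unjustified claim that ``the lower endpoint contributes at most $O(\log n)$ thanks to $v_1\ge 1/pn$'' (which, for the original parameterization, would require control of $|\log F^{-1}(1/pn)|$ that the hypotheses do not supply).
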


\begin{remark}
  We note that if $\IE[V]<\infty$, as is the case for Clayton's copula family,
  Condition~\ref{prop:err:3}~\ref{prop:err:3:b} can be easily checked via
  Markov's inequality. In the case of the Gumbel copula, $V$ has an $\alpha$-stable
  distribution and it can be shown that $P(V>x) \le c x^{-\alpha}$ for
  $x \ge x_0$ and for some constant $c$, where $c$ and $x_0$ depend both on the
  parameters of the distribution; see \cite[Theorem~1.12]{Nolan}. Therefore
  $F^{-1}(1-1/pn)$ can be bounded by a constant time $n^{a}$ in this case
  (namely by $\max\{x_0,(cpn)^{1/\alpha}\}$). As for Condition~\ref{prop:err:3}~\ref{prop:err:3:a},
  %\TODO[Shouldn't this be 3) a)?]
  one can show that $t^* = \theta$ and $t^*=1$ are the only zeros for the
  Clayton and Gumbel copulas, respectively.
\end{remark}

When $F$ is discrete,
%then we can use the following result, whose proof can be found in Appendix~\ref{sec:proof}. The idea here is that when $F$ is discrete,
we can split the problem
into subproblems based on the value taken by $V$. Then, in each case, the bounded variation condition is much easier to verify, because the transformation $\phi_C$ given $V$ is essentially one-dimensional as it is mapping each $v_j$ to an exponential $E_{j-1}$ for $j\in\{2,\ldots,d+1\}$.
%We note that the Frank, Joe and Ali-Mikhail-Haq copulas are such that $F$ is discrete, and thus the next result applies to them.
Its proof can be found in the appendix. %~\ref{sec:proof}.

\begin{proposition}[Error behaviour for the Marshall--Olkin algorithm for discrete $V$]
\label{prop:MOdisc}
Let $\phi_C^{\text{MO}}$ be the transformation associated with the Marshall--Olkin
algorithm, as given in \eqref{eq:PhiC_MO} and assume $C$ is an Archimedean
copula whose distribution function $F$ of $V$ is discrete. Let
$P_n=\{\bm{v}_i,i=1,\ldots,n\}$ be the point set in $[0,1)^{d+1}$ used to
produce copula samples via the transformation $\phi_C^{\text{MO}}$ and let $\bm{u}_i = \phi_C^{\text{MO}}(\bm{v}_i)$. If
\eqref{eq:CondPartPhiMO} holds and
 \begin{enumerate}
  \item  there exists some $p \ge 1$ such that the point set $P_n$ satisfies $\max_{1 \le i \le n} v_{i,1} \le 1-1/pn$;
  \item there exist constants $c>0$ and $q \in (0,1)$ such that $1-F(l) \le  cq^l$ for $l \ge 1$;
\end{enumerate}
then there exists a constant $C^{(d)}$  (independent of $n$ but dependent of $\Psi$ and $\phi_C^{\text{MO}}$) such that
\begin{align*}
\left|\frac{1}{n} \sum_{i=1}^n \Psi(\bm{u}_i) -\IE[\Psi(\bm{U})]\right|\le
C^{(d)} (\log n) D^*(P_n).
\end{align*}
\end{proposition}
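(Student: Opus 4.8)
The plan is to turn the discreteness of $F$ into a decomposition of the $(d+1)$-dimensional integration problem into a family of $d$-dimensional ones indexed by the value of $V$. Since $V$ is discrete I take its support to be $\IN$ (the case arising from Archimedean generators, and the one for which the tail condition $1-F(l)\le cq^l$ is phrased), so that $F^-(v_1)=l$ is constant on the slab $J_l=[F(l-1),F(l))$, with $F(0)=0$. On $J_l$ the map \eqref{eq:PhiC_MO} reduces, in the variables $\bm w=(v_2,\dots,v_{d+1})$, to $\bm w\mapsto(\psi(-\log w_1/l),\dots,\psi(-\log w_d/l))$; the essential point is that each output coordinate then depends on a \emph{single} input coordinate. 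I would write $g=\Psi\circ\phi_C^{\text{MO}}=\sum_l g_l$ with $g_l(\bm v)=\I\{v_1\in J_l\}\,\Psi_l(\bm w)$ and $\Psi_l(\bm w)=\Psi(\psi(-\log w_1/l),\dots,\psi(-\log w_d/l))$, so that $\int_{[0,1)^{d+1}}g_l\,d\bm v=(F(l)-F(l-1))\int_{[0,1)^d}\Psi_l$ and $\sum_l\int g_l=\IE[\Psi(\bm U)]$, while $\tfrac1n\sum_i g(\bm v_i)=\sum_l\tfrac1n\sum_i g_l(\bm v_i)$.

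Next I would bound the Hardy--Krause variation $V(g_l)$ uniformly in $l$. Because each $w_j$ enters $\Psi_l$ only through the single factor $\eta_j(w_j)=\psi(-\log w_j/l)$, the expansion \eqref{eq:FaaDiBruno} collapses to the plain chain rule: every mixed partial of $\Psi_l$ over a subset $\bm\alpha$ of distinct variables equals a partial of $\Psi$ times $\prod_{j\in\bm\alpha}\eta_j'(w_j)$, with no variable differentiated more than once. The change of variables $t=-\log w_j/l$ then yields $\int_0^1|\eta_j'(w_j)|\,dw_j=\int_0^\infty|\psi'(t)|\,dt=\psi(0)-\psi(\infty)=1$, \emph{independently of $l$}. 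With the finiteness of the partial derivatives of $\Psi$ from \eqref{eq:CondPartPhiMO} this gives $V(\Psi_l)\le\|\Psi_l\|_{d,1}\le K\,C^{(d)}$ uniformly in $l$. Since $g_l$ factors as $\I\{v_1\in J_l\}$ (a function of $v_1$ alone, of variation $2$ and sup-norm $1$) times $\Psi_l$ (a function of $\bm w$ alone), a standard product inequality for the Hardy--Krause variation over disjoint blocks of variables gives $V(g_l)\le C^{(d)}$ uniformly in $l$; the Koksma--Hlawka bound \eqref{eq:Koksma-Hlawka} applied to $g_l$ on the full point set then gives $|\tfrac1n\sum_i g_l(\bm v_i)-\int g_l|\le C^{(d)}D^*(P_n)$ for each $l$.

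Finally I would count the contributing levels. The assumption $\max_{1\le i\le n}v_{i,1}\le 1-1/pn$ forces every sample point into some $J_l$ with $l\le l_{\max}:=F^-(1-1/pn)$, and the geometric tail guarantees $1-F(l)\le cq^l\le 1/pn$ once $l\ge\log(cpn)/\log(1/q)$, so $l_{\max}\in\O(\log n)$. Summing the triangle inequality over levels, the terms with $l\le l_{\max}$ contribute at most $l_{\max}\,C^{(d)}D^*(P_n)\le C^{(d)}(\log n)D^*(P_n)$; for $l>l_{\max}$ the empirical part of $g_l$ vanishes, leaving $\sum_{l>l_{\max}}\int g_l\le\sup|\Psi|\,(1-F(l_{\max}))\le\sup|\Psi|/(pn)$ (with $\sup|\Psi|<\infty$ following from \eqref{eq:CondPartPhiMO}), which the standard lower bound $D^*(P_n)=\Omega(1/n)$ absorbs into $C^{(d)}(\log n)D^*(P_n)$. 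This yields the claimed inequality.

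The main obstacle is the uniform-in-$l$ control of $V(g_l)$: a priori the reduced map $\eta_j$ depends on $l$, which grows without bound, and one must rule out a blow-up of the variation as $l\to\infty$. The resolution is the $l$-free identity $\int_0^\infty|\psi'(t)|\,dt=1$, available precisely because the level-set reduction makes each output depend on a single input. This is also exactly why the discrete case is so much easier than Proposition~\ref{prop:BdErrorArchMO}: there $\partial\phi_C^{\text{MO}}/\partial v_1\neq0$ and one is forced into the delicate Conditions~\ref{prop:err:3}, whereas here that derivative vanishes almost everywhere and the only remaining subtlety, the discreteness jumps, is handled cheaply by the $\O(\log n)$ level count.
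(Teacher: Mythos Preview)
Your proof is correct and takes a genuinely different route from the paper's. Both arguments decompose by level $l$ of the discrete variable $V$, both rest on the same crucial observation that the variation of the level-$l$ function $\Psi_l$ is bounded uniformly in $l$ (your $\int_0^1|\eta_j'|=1$; equivalently, $V(\Psi_l)=V(\Psi)$ since $\eta_j$ is a monotone bijection of $[0,1]$ and Hardy--Krause variation is invariant under coordinatewise monotone reparametrization), and both conclude by counting $O(\log n)$ active levels via the geometric tail. The difference is in how the slab restriction is handled. The paper restricts the point set to $P_n^l=\{\bm v_i:v_{i,1}\in J_l\}$, projects to $d$ dimensions, proves a separate lemma bounding $D^*(P_n^l)\le 4D^*(P_n)\,n/\tilde n_l$ (by writing counts in $P_n^l$ as differences of anchored-box counts in $P_n$), and then applies Koksma--Hlawka to $\Psi_l$ on $P_n^l$; the weights $\tilde p_l=\tilde n_l/n$ cancel the $n/\tilde n_l$ and the three residual terms $A,B,C$ assemble the bound. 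You instead keep the full $(d{+}1)$-dimensional point set, fold the indicator $\I\{v_1\in J_l\}$ into the integrand $g_l$, and apply Koksma--Hlawka once per level to $g_l$ on $P_n$, invoking the product formula $V(g_l)\le 2(|\Psi(\bm 1)|+V(\Psi_l))$ for HK variation over disjoint variable blocks. Your route is shorter and avoids the sub-point-set discrepancy analysis; the paper's route works in one dimension fewer but needs that extra lemma. For the tail, the paper bounds $\sum_{l>L(n)}p_l\le D^*(P_n)$ directly from the definition of discrepancy, whereas you use $1-F(l_{\max})\le 1/(pn)$ together with $D^*(P_n)\ge c/n$; both are fine.
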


%new Feb 2016
\begin{remark}
We note that the Frank, Joe and Ali-Mikhail-Haq copulas are such that $F$ is discrete. The condition on the tail of $F$ stated in the proposition can be shown to hold  for the Frank and Ali-Mikhail-Haq copulas, but not for the Joe copula (the distribution of $V$ in this case has a Sibuya distribution, for which no moments exists, i.e., it has a very fat tail).
\end{remark}

Let us now move on to RQMC methods. We already mentioned that an advantage they have over their
deterministic counterparts is that much weaker conditions are required to provide variance expressions for their corresponding estimators.
The following result shows that this also holds after composing $\Psi$ with
$\phi_C$.
%that the expressions for the variance of RQMC estimators given in Theorem
%\ref{thm:VarFourierWalsh} (see the appendix) remain valid after composing $\Psi$ with
%$\phi_C$, compared to what is needed to make sure the error remains bounded in
%the deterministic case. 
 \begin{proposition}[Variance expression with a change of variables]
 \label{prop:VarFourComp}
 If $\tilde{P}_n = \{\widetilde{\bm{v}}_1,\ldots,\widetilde{\bm{v}}_n\}$ is a randomly digitally shifted net with corresponding RQMC estimator $\widehat{\mu}_n=\frac{1}{n}\sum_{i=1}^n \Psi(\phi_C(\widetilde{\bm{v}}_i))$ and if
 ${\rm Var}(\Psi(\bm{U})) < \infty$ with $\bm{U}\sim C$, then we have that
 \begin{align}
 \label{eq:VarCompFctFourWalsh}
   {\rm Var}(\widehat{\mu}_n) = \sum_{\bm{0} \neq \bm{h} \in {\cal L}_d^*} |\widehat{\Psi \circ \phi_C}(\bm{h})|^2,
 \end{align}
 where
 ${\cal L}_d^*$ is the dual net of the deterministic net that has been shifted to get $\tilde{P}_n$, and $\hat{f}(\bm{h})$ is the Walsh coefficient of $f$ at $\bm{h}$, while
 \begin{align*}
  (\widehat{\Psi\circ\phi_C})(\bm{h})&=\sum_{\bm{k} \in \mathbb{Z}^d} \widehat{\Psi}(\bm{k})P(\bm{h},\bm{k}),\\
   P(\bm{h},\bm{k})&=
 \int_{[0,1)^d}
  e^{2\pi i (\bm{k}\cdot \phi_C(\bm{w})-\bm{h} \cdot \bm{w})}d\bm{w}\\
  &=
  \IE\left[e^{2\pi i (\bm{k}\cdot \phi_C(\bm{W})-\bm{h} \cdot
                     \bm{W})}\right],\quad \bm{W}\sim \U[0,1]^d.
 \end{align*}
\end{proposition}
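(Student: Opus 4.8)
The plan is to derive \eqref{eq:VarCompFctFourWalsh} in two stages: apply the known variance identity for randomly shifted nets to the single integrand $f=\Psi\circ\phi_C$, and then re-express its coefficients through the Fourier coefficients of $\Psi$. The first thing I would record is that the hypothesis is precisely the square-integrability needed for both stages: because $\phi_C$ pushes $\U[0,1)^d$ forward to $C$, the change of variables gives
\begin{align*}
  \int_{[0,1)^d}|\Psi(\phi_C(\bm w))|^2\,d\bm w=\IE[|\Psi(\bm U)|^2],\quad\bm U\sim C,
\end{align*}
so ${\rm Var}(\Psi(\bm U))<\infty$ is equivalent to $f\in L^2([0,1)^d)$ (up to the additive constant $\IE[\Psi(\bm U)]$, which is irrelevant for the variance).

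For the first stage I would invoke the variance expression for randomly shifted nets (see \cite{Lemieux2009} and the references therein). Let $\{\chi_{\bm h}\}$ be the orthonormal character system attached to the shift --- Walsh functions for a digital shift --- and expand $f(\bm w)=\sum_{\bm h}\hat f(\bm h)\chi_{\bm h}(\bm w)$. The defining property $\chi_{\bm h}(\bm v\oplus_b\bm\Delta)=\chi_{\bm h}(\bm v)\,\chi_{\bm h}(\bm\Delta)$ turns the net average into $\hat\mu_n=\sum_{\bm h}\hat f(\bm h)\,\chi_{\bm h}(\bm\Delta)\,\tfrac1n\sum_{i=1}^n\chi_{\bm h}(\bm v_i)$; the inner average equals $1$ for $\bm h\in{\cal L}_d^*$ and $0$ otherwise. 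Subtracting the $\bm h=\bm 0$ term (the mean) and using $\IE[\chi_{\bm h}(\bm\Delta)\overline{\chi_{\bm h'}(\bm\Delta)}]=\delta_{\bm h,\bm h'}$ for $\bm\Delta\sim\U[0,1)^d$ then yields ${\rm Var}(\hat\mu_n)=\sum_{\bm 0\neq\bm h\in{\cal L}_d^*}|\hat f(\bm h)|^2$ with $\hat f(\bm h)=\widehat{\Psi\circ\phi_C}(\bm h)$; square-integrability of $f$ is what makes this sum converge (Parseval).

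For the second stage I would compute the coefficient $\widehat{\Psi\circ\phi_C}(\bm h)=\int_{[0,1)^d}\Psi(\phi_C(\bm w))\overline{\chi_{\bm h}(\bm w)}\,d\bm w$ by inserting the Fourier series $\Psi(\bm u)=\sum_{\bm k\in\IZ^d}\widehat\Psi(\bm k)e^{2\pi i\bm k\cdot\bm u}$ at $\bm u=\phi_C(\bm w)$ and integrating term by term:
\begin{align*}
  \widehat{\Psi\circ\phi_C}(\bm h)&=\sum_{\bm k\in\IZ^d}\widehat\Psi(\bm k)\int_{[0,1)^d}e^{2\pi i\bm k\cdot\phi_C(\bm w)}\overline{\chi_{\bm h}(\bm w)}\,d\bm w\\
  &=\sum_{\bm k\in\IZ^d}\widehat\Psi(\bm k)P(\bm h,\bm k).
\end{align*}
Taking $\chi_{\bm h}(\bm w)=e^{2\pi i\bm h\cdot\bm w}$ recovers the stated integral for $P(\bm h,\bm k)$ verbatim (the digital-shift version is identical with Walsh functions in place of the exponentials), and reading that integral as an average over $\bm W\sim\U[0,1)^d$ gives its second form.

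The hard part is justifying the term-by-term integration, i.e.\ the interchange of $\sum_{\bm k}$ with $\int d\bm w$. The Fourier partial sums $S_N$ converge to $\Psi$ only in $L^2$ with respect to \emph{Lebesgue} measure, whereas the change of variables weights the error by the copula density:
\begin{align*}
  \int_{[0,1)^d}\!|S_N(\phi_C(\bm w))-\Psi(\phi_C(\bm w))|\,d\bm w=\int_{[0,1)^d}\!|S_N(\bm u)-\Psi(\bm u)|\,c(\bm u)\,d\bm u.
\end{align*}
Passing to the limit thus requires controlling $c$ against $|S_N-\Psi|$; it suffices, e.g., that $c$ be bounded (or lie in $L^2$), or that the Fourier series of $\Psi$ converge absolutely, after which dominated convergence closes the gap. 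A cleaner route that sidesteps any pointwise interchange is to push the signed measure $\overline{\chi_{\bm h}(\bm w)}\,d\bm w$ forward under $\phi_C$ to a measure $\nu_{\bm h}$ on $[0,1)^d$: if $\nu_{\bm h}$ admits an $L^2$ density $\rho_{\bm h}$, then $\widehat{\Psi\circ\phi_C}(\bm h)=\langle\Psi,\overline{\rho_{\bm h}}\rangle$ and Parseval for the inner product of two $L^2$ functions delivers $\sum_{\bm k}\widehat\Psi(\bm k)P(\bm h,\bm k)$ with no absolute-convergence assumption. In either case, the regularity of $\Psi$ and $C$ already imposed for the earlier propositions is far more than enough to legitimize the step.
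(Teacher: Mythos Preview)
Your proof is correct and takes essentially the same approach as the paper: apply the known variance identity for randomly digitally shifted nets to $f=\Psi\circ\phi_C$ (the paper cites its appendix theorem for this), then expand $\Psi$ in a Walsh/Fourier series and integrate term by term to obtain the expression for the composed coefficient. The paper's own proof is in fact terser than yours---it justifies the interchange of sum and integral by a one-word appeal to Fubini's theorem, whereas you correctly flag that this step requires additional control (absolute summability of the $\widehat\Psi(\bm k)$, or boundedness of the copula density, or the pushforward argument you outline) and sketch sufficient conditions.
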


 \begin{proof}
   %We provide a proof for lattices and their variance expression based on a
   %Fourier series expansion: a similar approach can be used to obtain the Walsh
   %representation for digital nets.

   It is clear from Theorem \ref{thm:VarFourierWalsh} in the appendix and using
   Representation \eqref{eq:canonical_rep} that \eqref{eq:VarCompFctFourWalsh}
   holds and the condition ${\rm Var}(\Psi(\bm{U})) < \infty$ with
   $\bm{U}\sim C$ ensures it is finite. So what remains to be shown is the
   expression for the Walsh coefficient of the composed function
   $\Psi \circ \phi_C$. It is obtained as follows:
   \begin{align*}
     (\widehat{\Psi\circ\phi_C})(\bm{h})&= \int_{[0,1)^d} \Psi (\phi_C(\bm{w}))e^{-2\pi i \langle \bm{h}, \bm{w} \rangle_b}d\bm{w}\\
     &= \int_{[0,1)^d} \sum_{\bm{k} \in \mathbb{Z}^d} \hat{\Psi}(\bm{k}) e^{2\pi i \langle \bm{k},\phi_C(\bm{w}\rangle_b)}e^{-2\pi i \langle \bm{h},  \bm{w} \rangle_b} d\bm{w} \\
                                        &= \sum_{\bm{k} \in \mathbb{Z}^d} \hat{\Psi}(\bm{k})\int_{[0,1)^d}
                                          e^{2\pi i (\langle \bm{k}, \phi_C(\bm{w})\rangle_b-\langle \bm{h},\bm{w}\rangle_b)}d\bm{w}\\
                                          &=\sum_{\bm{k} \in \mathbb{Z}^d} \widehat{\Psi}(\bm{k})P(\bm{h},\bm{k}),
   \end{align*}
   where the third equality holds thanks to Fubini's theorem.
%where $P(\bm{h},\bm{k})=\int_{[0,1]^d}e^{2\pi
%  i(\bm{h}\cdot\bm{u}-\bm{k}\cdot\phi_C(\bm{u}))}d\bm{u}=\IE\left[e^{2\pi
 %   i(\bm{h}\cdot\bm{V}-\bm{k}\cdot\phi_C(\bm{U}))}\right]$, where $\bm{V}\sim
%\U[0,1]^d$ and $\bm{U}\sim C$, which may be analytically computed for some
%copula examples.
% An expression for the Walsh coefficient of this composed function  can similarly be obtained.
 \end{proof}
%We note that an expression for $P(\bm{h},\bm{k})$ can be obtained for some copulas, as shown in Example \ref{ex:P} in the appendix.

 By adding assumptions on the smoothness of $\Psi$ and thus on the behavior of its Walsh coefficients, one could obtain improved convergence rates for the variance given in \eqref{eq:VarCompFctFourWalsh} compared to the $O(1/n)$ we get with MC, something we plan to study in future work.

\subsection{Transforming the low-discrepancy samples}
\label{subsec:TransfSamples}

As mentioned in the introduction, we can think of $\phi_C$ as transforming the point set $P_n$ instead of being composed with $\Psi$. The integration error can then be analyzed via a generalized version of the Koksma--Hlawka inequality such as the one studied in \cite{dick2014}, which we now explain.

Similarly to the Lebesgue case we define the {\em copula-discrepancy function} with respect to a
copula-induced measure $P_C$ on an interval $B$ (i.e., $P_C(B)=\IP(\bm{U}\in
B)$ for $\bm{U}\sim C$) as
%\TODO[I introduced $P_C$ for the copula-induced
%measure here (instead of $C(B)$). Does this change of notation affect any other places?]
\begin{align*}
E_C(B;P_n)=\frac{A(B;P_n)}{n}-P_C(B).
\end{align*}
Let $\mathcal{J}$ be the set of intervals of $[0,1)^d$ of the form $[\bm{a},\bm{b})=\prod_{j=1}^d [a_j,b_j)$, where $0\leq a_j\leq b_j\leq 1$. The {\em copula-discrepancy} $D_C$ of $P_n$ is then defined as
\begin{align}\label{eq:copula_discrepancy}
	D_C(P_n)=\sup_{B \in \mathcal{J}} \vert E_C(B;P_n) \vert,
\end{align}
and similarly for $D_C^*(P_n)$, the {\em star-copula-discrepancy function} when the $\sup$ in~\eqref{eq:copula_discrepancy} is taken over $\mathcal{J}^*$ instead.

The generalization of the Koksma--Hlawka inequality studied in \cite[Theorem~1]{dick2014} then provides
 \begin{align*}
    \biggl|\frac{1}{n} \sum_{i=1}^n \Psi(\bm{u}_i) -\IE[\Psi(\bm{U})]\biggr|\le V(\Psi)D^*_C(\bm{u}_1,\dots,\bm{u}_n),
   % \label{eq:Koksma-Hlawka_Copulas}
  \end{align*}
  where we assume $\bm{u}_i=\phi_C(\bm{v}_i)$, $i\in\{1,\ldots,n\}$. To get some
  insight on this upper bound, we need to know how
  $D^*_C(\bm{u}_1,\ldots,$ $\bm{u}_n)$ behaves as a function of $n$. Unfortunately,
  in general we cannot prove that
  $D^*(\bm{v}_1,\ldots,\bm{v}_n) \in O(n^{-1} \log^d n)$ implies that
  $D^*_C(\bm{u}_1,\ldots,\bm{u}_n) \in O(n^{-1} \log^d n)$. Here are a few
  things we can say, though.

First, %if $\widetilde{P}_n=\{\bm{u}_1,\dots,\bm{u}_n\}\subseteq [0,1)^d$ such that $\bm{u}_i=\phi_C(\bm{v}_i)$ and
an obvious case for which discrepancy is preserved is when $\phi_C$ maps rectangles to rectangles, because then  $\phi_C(B)\in\mathcal{J}$ for all $B\in\mathcal{J}$, and thus
\begin{align*}
  D_C(\widetilde{P}_n)&\leq D(P_n),\\
  D_C^\ast(\widetilde{P}_n)&\leq D^\ast(P_n),
\end{align*}
where $\widetilde{P}_n=\{\bm{u}_1,\dots,\bm{u}_n\}$. However, this only happens when $C$ is the independence copula, and in this case the equality holds. This is not a very interesting case since our focus here is on dependence modelling.

%In general, there is however no guarantee that the copula-induced discrepancy is controlled by $O(n^{-1}\log^d n).$

For the more realistic setting where $\phi_C$ does not map rectangles to rectangles, the following result from~\cite{hlawkamuck1972}
%for which a proof
%\footnote{MH: Is this new? It reads as if this is known and so the reader may
 %wonder why we give a proof} is given in Appendix~\ref{sec:proof},
 holds and gives a much slower convergence rate for $D^*_C(\widetilde{P}_n)$.

\begin{proposition}\label{prop:bounded_transform_discrep}
  Let $C$ be such that the Rosenblatt transform $\phi_C^{-1}$ is Lipschitz continuous on $[0,1]^d$ w.r.t.\ the sup-norm $\|\cdot\|_{\infty}$, and $\{\bm{u}_i=\phi_C(\bm{v}_i)\}$ for
  some sequence of points $\{\bm{v}_i\}$
  in $[0,1]^d$. Then
  \begin{align*}
    D_C(\{\bm{u}_1,\dots,\bm{u}_n\})\le c(d)D(\{\bm{v}_1,\dots,\bm{v}_n\})^{1/d},
  \end{align*}
  for some function $c(d)$, constant in $n$.
\end{proposition}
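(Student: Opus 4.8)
The plan is to transport the copula-discrepancy of the image set $\widetilde{P}_n$ back into an ordinary discrepancy of the preimage set $P_n$, and then to bound the discrepancy of $P_n$ with respect to the (generally curved) regions $\phi_C^{-1}(B)$ by a grid approximation whose boundary is controlled by the Lipschitz hypothesis. Write $R:=\phi_C^{-1}$ for the Rosenblatt transform. Since $\phi_C(\bm{U}')\sim C$ for $\bm{U}'\sim\U[0,1]^d$, for any box $B\in\mathcal{J}$ we have $P_C(B)=\IP(\phi_C(\bm{U}')\in B)=\lambda(R(B))$, while $A(B;\widetilde{P}_n)=\#\{i:\phi_C(\bm{v}_i)\in B\}=\#\{i:\bm{v}_i\in R(B)\}$. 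Hence $E_C(B;\widetilde{P}_n)=\frac{1}{n}\#\{i:\bm{v}_i\in R(B)\}-\lambda(R(B))$ is exactly the discrepancy of $P_n$ evaluated on the set $R(B)$, so that $D_C(\widetilde{P}_n)=\sup_{B\in\mathcal{J}}|E_C(B;\widetilde{P}_n)|$. It therefore suffices to bound, uniformly over boxes $B$, the discrepancy of $P_n$ with respect to the Lipschitz-image region $R(B)$.

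Next I would fix an integer $m$ and partition $[0,1)^d$ into the $m^d$ grid cubes of side $1/m$. For $A:=R(B)$ let $A_-$ (resp.\ $A_+$) be the union of grid cubes contained in (resp.\ meeting) $A$. Because $A_-\subseteq A\subseteq A_+$ both for the point count and for $\lambda$, the discrepancy of $P_n$ over $A$ is squeezed between that over $A_-$ and that over $A_+$, up to the measure of the boundary layer $\lambda(A_+\setminus A_-)$. The key geometric input is that the Lipschitz hypothesis limits how many cubes the boundary can meet: since $\partial A\subseteq R(\partial B)$ and $\partial B$ is a union of $2d$ faces, each of $(d-1)$-volume at most $1$, covering every face by cubes of side $1/(Lm)$ (with $L$ the Lipschitz constant of $R$) and pushing them through $R$ produces sets of sup-diameter at most $1/m$; hence $R(\partial B)$ meets at most $C(d,L)\,m^{d-1}$ grid cubes. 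Consequently the number of boundary cells is $O(m^{d-1})$ and $\lambda(A_+\setminus A_-)=O(m^{d-1})\,m^{-d}=O(1/m)$.

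The crux — and the step I expect to be the main obstacle — is bounding the discrepancy of $P_n$ over the cube-union $A_-$ without paying a factor $m^d$ for its (up to $m^d$) constituent cubes; a naive cube-by-cube estimate fails, and since the non-convex region $R(B)$ may have axis-parallel slices that are not intervals, one cannot reduce $A_-$ to $O(m^{d-1})$ boxes directly. Instead I would write $\frac{1}{n}A(A_-;P_n)-\lambda(A_-)=\sum_{Q\subseteq A_-}\bigl(\frac{1}{n}A(Q;P_n)-\lambda(Q)\bigr)$, express each summand as the $d$-fold mixed difference of the discrepancy function $E([\bm{0},\cdot);P_n)$ at the corners of $Q$, and apply discrete summation by parts so that the differences are transferred onto the indicator $\I[Q\subseteq A_-]$. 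The mixed differences of this indicator are supported only on cells adjacent to $\partial A_-$, of which there are $O(m^{d-1})$ by the previous paragraph, and each surviving term is a value of $E$ bounded by $D(P_n)$ (up to a $2^d$ factor); the same reduction handles $A_+$ and the boundary cells. Collecting the bounds yields $D_C(\widetilde{P}_n)\le C(d,L)\bigl(m^{d-1}D(P_n)+1/m\bigr)$, and balancing the two terms, i.e.\ choosing $m\asymp D(P_n)^{-1/d}$, gives $D_C(\widetilde{P}_n)\le c(d)\,D(P_n)^{1/d}$, as claimed. The delicate points are the bookkeeping of the multidimensional summation by parts and the verification that it is the boundary-cell count, rather than slice connectivity, that controls the interior sum.
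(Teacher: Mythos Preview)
The paper does not supply its own proof of this proposition; it simply attributes the result to \cite{hlawkamuck1972} and states it without argument. So there is no in-paper proof to compare against directly.

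That said, your outline is essentially the classical Hlawka--M\"uck argument. The reduction $E_C(B;\widetilde{P}_n)=E(R(B);P_n)$ is correct, the grid approximation with inner and outer unions $A_-\subseteq A\subseteq A_+$ is standard, and the Lipschitz bound on the number of boundary cells via covering each face of $\partial B$ by $(Lm)^{d-1}$ small cubes is the right mechanism for getting $O(m^{d-1})$ cells in $A_+\setminus A_-$. Your concern about the interior sum is well placed, but the multidimensional Abel summation you describe does work: writing $E(Q_{\bm{k}};P_n)$ as the $d$-fold mixed difference of $g(\bm{z})=E([\bm{0},\bm{z});P_n)$ at the corners of $Q_{\bm{k}}$ and transferring the differences onto $a_{\bm{k}}=\I[Q_{\bm{k}}\subseteq A_-]$ leaves only terms at grid vertices whose $2^d$ adjacent cubes are not all in or all out of $A_-$; these vertices lie within a bounded neighbourhood of the cells meeting $\partial A$, hence there are $O_{d,L}(m^{d-1})$ of them, each contributing at most $2^d\sup|g|\le 2^d D(P_n)$. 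Balancing $m^{d-1}D(P_n)$ against $1/m$ gives the exponent $1/d$.

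One small point worth tightening when you write it out: the inclusion $\partial(R(B))\subseteq R(\partial B)$ requires that $R=\phi_C^{-1}$ be a homeomorphism onto its image (continuity alone of $R$ gives $R(\overline{B})\subseteq\overline{R(B)}$, not the boundary containment you need). This holds here because $\phi_C$ is the CDM transform, hence a bijection with continuous inverse, but it is worth stating explicitly.
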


Note that the above results fully depend on the properties of $\phi_C$. The aim would then be to choose $\phi_C$
such that a low-discrepancy sequence $\{\phi_C(\bm{v}_i)\}$ w.r.t.\ the copula measure
$P_C$ results whenever applied to a low-(Lebesgue-)discrepancy sequence $\{\bm{v}_i\}$. A more fundamental approach would be to directly produce a low-discrepancy sequence $\{\bm{u}_i\}$ where the discrepancy is measured w.r.t.\ the copula measure $C$. This is something we intend to study in future work.

% In this spirit, the following gives an extension of~\eqref{eq:Koksma-Hlawka} to a copula setting.

% \begin{theorem}\label{thm:Koksma-Hlawka_Copulas}
%   Let $\bm{U}\sim C$ and $\{\bm{u}_i\}$ a sequence of points in $[0,1]^d$. Then we have
%   \begin{align}\label{eq:Koksma-Hlawka_Copulas}
%     \biggl|\frac{1}{n} \sum_{i=1}^n \Psi(\bm{u}_i) -\IE[\Psi(\bm{U})]\biggr|\le V(\Psi)D_C^\ast(\bm{u}_1,\dots,\bm{u}_n).
%   \end{align}
% \end{theorem}
% \begin{proof}
%   See \cite[Theorem 1]{dick2014}.
% \end{proof}

Now, computing $D_C$ or $D^\ast_C$ is usually not feasible in practice. If we
replace the sup-norm by the $L_2$-norm, we obtain $L_2$-discrepancies which
are usually more practical to compute. Let $L_2$-discrepancies
$T_C(\bm{u}_1,\dots,\bm{u}_n)$ and $T^\ast_C(\bm{u}_1,\dots,\bm{u}_n)$ be
defined by
\begin{align*}
  &\phantom{{}={}}T_C(\bm{u}_1,\dots,\bm{u}_n)\\
  &=\!\biggl(\int_{\{(\bm{y},\bm{z})\in[0,1]^{2d};y_i<z_i\}}\!\biggl(\!
  \frac{A([\bm{y},\bm{z});P_n)}{n}-P_C([\bm{y},\bm{z}))
  \!\!\biggr)^2\!\!d\bm{y}d\bm{z}\biggr)^{1/2},
\end{align*}
and
\begin{align*}
  T^\ast_C(\bm{u}_1,\dots,\bm{u}_n)&=\biggl(\int_{[0,1]^{d}}\biggl(
  \frac{A( [\bm{0},\bm{z});P_n)}{n}-C(\bm{z})
  \biggr)^2d\bm{z}\biggr)^{1/2},
\end{align*}
respectively. %We first compute $T^\ast_C$. \TODO[The reader might wonder where
%$T_C$ is addressed here]
Proceeding similarly to~\cite{morokoffcaflisch94},  $T^\ast_C$ can be computed as
\begin{align*}
  &\phantom{{}={}}T^\ast_C(\bm{u}_1,\dots,\bm{u}_n)\\
  &=\frac{1}{n^2}\sum_{k=1}^n\sum_{l=1}^n\prod_{i=1}^d(1-\max(u_{k,i},u_{l,i}))+\int_{[0,1]^d}C(\bm{z})^2 d\bm{z}\\
  &\phantom{{}={}}-\frac{2}{n}\sum_{k=1}^n\int_{u_{k,1}}^1\cdots\int_{u_{k,d}}^1 C(\bm{z})d\bm{z}.
\end{align*}

%As in Example~\ref{ex:P},
If we consider a convex combination $C(u_1,\dots,u_d)=\lambda
 \prod_{i=1}^d u_i+(1-\lambda)\min(u_1,\dots,u_d)$, $\lambda\in(0,1)$, of the
independence copula and the upper Fr\'echet--Hoeffding bound, then one can
compute $T^\ast_C$ explicitly via
 \begin{align*}
&\phantom{{}={}} T^\ast_C(\bm{u}_1,\dots,\bm{u}_n)=\frac{1}{n^2}\sum_{k=1}^n\sum_{l=1}^n\prod_{i=1}^d(1-\max(u_{k,i},u_{l,i}))+\frac{\lambda^2}{3^d}\\
 &+\frac{2(1-\lambda)^2}{(d+1)(d+2)}+\frac{2\lambda(1-\lambda)d!}{\prod_{i=1}^d(2i+1)}
 -\frac{\lambda}{n 2^{d-1}}\sum_{k=1}^n \prod_{i=1}^d (1-u_{k,i}^2)\\
 &-\frac{2(1-\lambda)}{n}\sum_{k=1}^n\left(\sum_{i_1=1}^d\sum_{i_2\neq i_1}\sum_{i_d\neq i_1,\dots,i_{d-1}}\frac{1-u_{k,i_d}^{d+1}}{(d+1)!}\right.\\
&-\left.\sum_{l=1}^{d-1} \sum_{i_1=1}^d\dots\sum_{i_l\neq i_1,\dots,i_{l-1}}\frac{u_{k,i_{l}}^{l+1}(1-u_{k,i_{l+1}})}{(l+1)!}
 \right).
 \end{align*}

\section{Numerical results}
\label{sec:num}

% {\em
% \begin{enumerate}
% \item Options: basket, best-of, worst-of, each done with call and put and two different strike prices (in-the-money and out-of-the money), in dimensions 5, 10, and 20, with Monte Carlo, (scrambled or digitally shifted?) Sobol', and GHalton. Only for Clayton? For each copula, try two different dependence levels.  Note that this means 72 different functions to try for each copula...! We could try to reduce by not doing call and put and/or not all three types of options?
% \item Insurance example: similar to your IS paper. Try Clayton and t-copula. Two different dependence levels for each copula.
% \item Test functions. 1) Try
%   $f(\bm{u}) = (w_1(\alpha_1+1)u_1^{\alpha_1} + \ldots +
%   w_d(\alpha_d+1)u_d^{\alpha_d})$
%   with $\sum_{i=1}^d w_i =1$ and $\alpha_j\ge 0$, $j\in\{1,\dots,d\}$. 2) Try
%   $f(\bm{u}) = K(C(\bm{u}))+1/2$. 3) try .
%   Issue with typical test functions used in QMC studies is that their integral
%   is difficult to compute w.r.t.\ to copula. We could still use them but would have
%   to compute their ``exact'' integral numerically, e.g., with very large Sobol'
%   sequence.

%   Note: We could have also taken $\frac{1}{\binom{d}{2}}\sum_{1\le i<j\le
%     d}(4C(\bm{u}_{ij})-1)/\tau_C$ where $\tau_C$ is the
%   Kendall's tau corresponding to $C$ and $\bm{u}_{ij}$, the two-column matrix
%   consisting of the $i$th and $j$th column of $\bm{u}$, respectively.
% \end{enumerate}
% }

Through typical examples from the realm of finance and insurance and a few
test functions, we now illustrate in this section the efficiency of QRNG in
comparison to standard (P)RNG for copula sampling. More precisely, we compare
Monte Carlo sampling approaches with two types of QRNGs based on randomized low-discrepancy
sequences: The Sobol' sequence and the generalized Halton sequence, both randomized
with a digital shift. Variance/error estimates are obtained by using $B=25$
i.i.d.\ copies of the randomized sequence and comparisons are made with MC
sampling based on the same total number of replications. Each plot includes lines showing $n^{-0.5},n^{-1}$ and/or $n^{1.5}$ convergence rates. In addition, on top of each plot and for each QRNG method, we provide the regression estimate of $\alpha$ such that the variance/error  is in  $O(n^{-\alpha})$.
%\blue{
For PRNG, we only show the results with the CDM sampling algorithm, since the choice of method does not affect the error or variance very much. On the other hand, for QRNG we show the results both with  CDM and MO (when applicable), as this seems to sometimes make a difference. Understanding better why it is so and under what circumstances a sampling algorithm perform better when used in conjunction with QRNG will be a subject of further research.  %}

%\blue{
While the examples given in the next section illustrates the use of our proposed method in typical contexts where they might be used, the test functions results in the section that follows are meant to focus on assessing the performance of QRNG compared to PRNG  on the sole basis of generating copula samples $\bm{U}$ -- without including the effect of the marginal distributions -- and also to see if the sampling algorithm (CDM or MO) has an effect on the performance of QRNG.  %}

%\blue{
Finally, we note that QRNG based on Sobol' point sets is typically slightly faster than PRNG, while the generalized Halton sequence runs slower than PRNG.  %}

\subsection{Examples from the realm of finance and insurance}
Consider a random vector $\bm{X}=(X_1,\dots,X_d)$ modeling $d$ risks in a
portfolio of stocks or insurance losses. We assume that the $j$th marginal
distribution is either log-normal with $X_j\sim\LN(\log(100)+\mu-\sigma^2/2,\sigma^2)$,
$j\in\{1,\dots,d\}$, where $\mu=0.0001$ and $\sigma=0.2$, or Pareto distributed with the same mean and variance as in the log-normal case. The copula $C$ of
$\bm{X}$ throughout this numerical study is either a Clayton or an exchangeable
$t$ copula with three degrees of freedom. To allow a comparable degree
of dependence, we will use the same Kendall's tau for both models. This easily
translates to the parameter $\theta$ of a Clayton copula via the relationship
$\theta=2\tau(1-\tau)^{-1}$ and to the correlation parameter $\rho$ of an
exchangeable $t$ copula via $\rho=\sin(\pi\tau/2)$. We denote
$S=\sum_{j=1}^d X_j$ and consider the estimation of the following functionals
$\Psi(\bm X)$:
\begin{itemize}
\item the Best-Of Call option payoff $(\max X_i - K)^+$;
\item the Basket Call option payoff $(d^{-1}S - K)^+$;
\item the Value-at-Risk at level 0.99 on the aggregated risks
  \begin{align*}
    \VaR_{0.99}\left(S\right)=F_S^{-1}(0.99)=\inf\left\{x\in\IR:F_S(x)\ge 0.99\right\},
  \end{align*}
\item the expected shortfall at level 0.99 on the aggregated risks
  \begin{align*}
    \ES_{0.99}\left(S\right)=\frac{1}{1-0.99}\int_{0.99}^1 F_S^{-1}(u)\text{d}u;
  \end{align*}
\item the contribution of the first and middle margin to $\ES_{0.99}$ of the sum
  under the Euler principle, see \cite{tasche2008},
  \begin{align*}
    \IE[X_1\,|\,S>F^{-1}_S(\alpha)] \text{ and }\IE[X_{d/2}\,|\,S>F^{-1}_S(\alpha)].
  \end{align*}
  These two functionals are referred to as Allocation First and Allocation
  Middle, respectively. %Corresponding results for these functionals are given in Figures \ref{fig:clayton-t:allocations:first}  and \ref{fig:clayton-t:allocations:mid} in the appendix.
\end{itemize}

Figures~\ref{fig:clayton:options:basket}, \ref{fig:clayton:options:best}, \ref{fig:clayton-t:riskmeasures:var}, and~\ref{fig:clayton-t:riskmeasures:es}
(as well as Figures \ref{fig:clayton-t:allocations:mid} and
\ref{fig:clayton-t:allocations:first} in the online supplement) display selected variance estimates for
Clayton and $t$ copulas with Kendall's tau parameter equal to 0.2
and 0.5, using either lognormal or Pareto margins, in dimensions $d=5,10,20$ (displayed in different rows)
and sample sizes $n\in\{10\,000,15\,000,\dots,200\,000\}$. In the Clayton case,
the experiment uses both the MO and CDM sampling methods. For the $t$
copulas, while there is a sampling approach based on a stochastic representation (as seen in Section \ref{subsubsec:ExStochRep}), there is no version of the MO algorithm available, so we only use the CDM method. In addition, both the Sobol' and
generalized Halton QRNGs are used. In all cases, we see that
the variances associated with the Sobol' and generalized Halton quasi-random sequences are smaller
and converge faster than the Monte Carlo variance. It is not clearly determined whether one sampling method is performing
considerably better than the other. But we note that in some cases, such as the estimate of the
Basket Call with $\tau=0.2$ in $d=20$ dimensions
(Figure~\ref{fig:clayton:options:basket}, bottom) the MO sampling seems to perform
better than CDM.
%(supporting the
%idea that the MO algorithm does not deteriorate the performance of the
%quasi-random method).
%, while in others such as the estimate of the expected shortfall
%with $\tau=0.2$ in $d=5$ dimensions (Figure~\ref{fig:clayton-t:riskmeasures:es},
%top) the CDM shows a smaller variance than MO.

\begin{figure}[htbp]
\centering
\includegraphics[width=0.425\textwidth]{./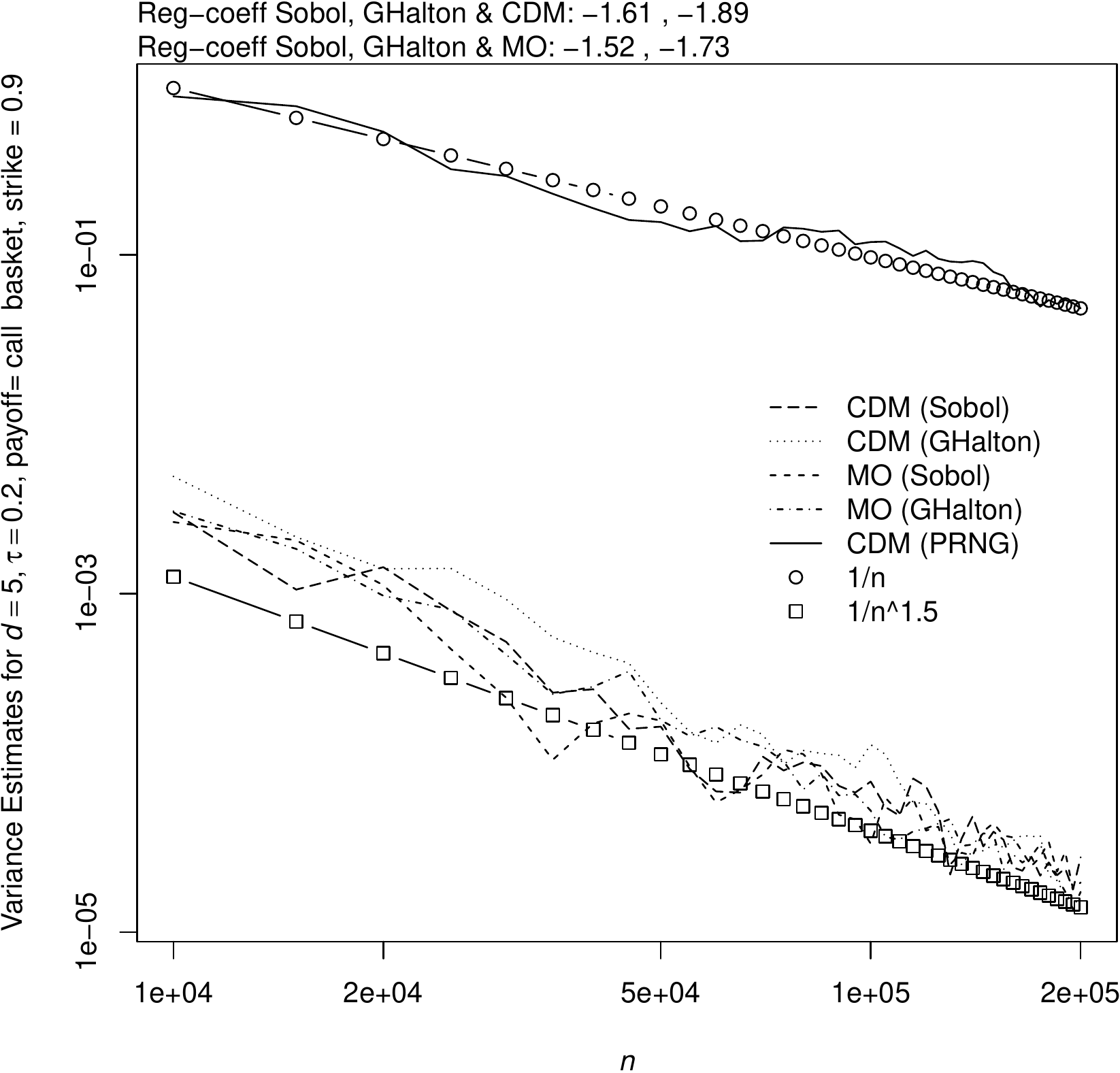}\\[2mm]
\includegraphics[width=0.425\textwidth]{./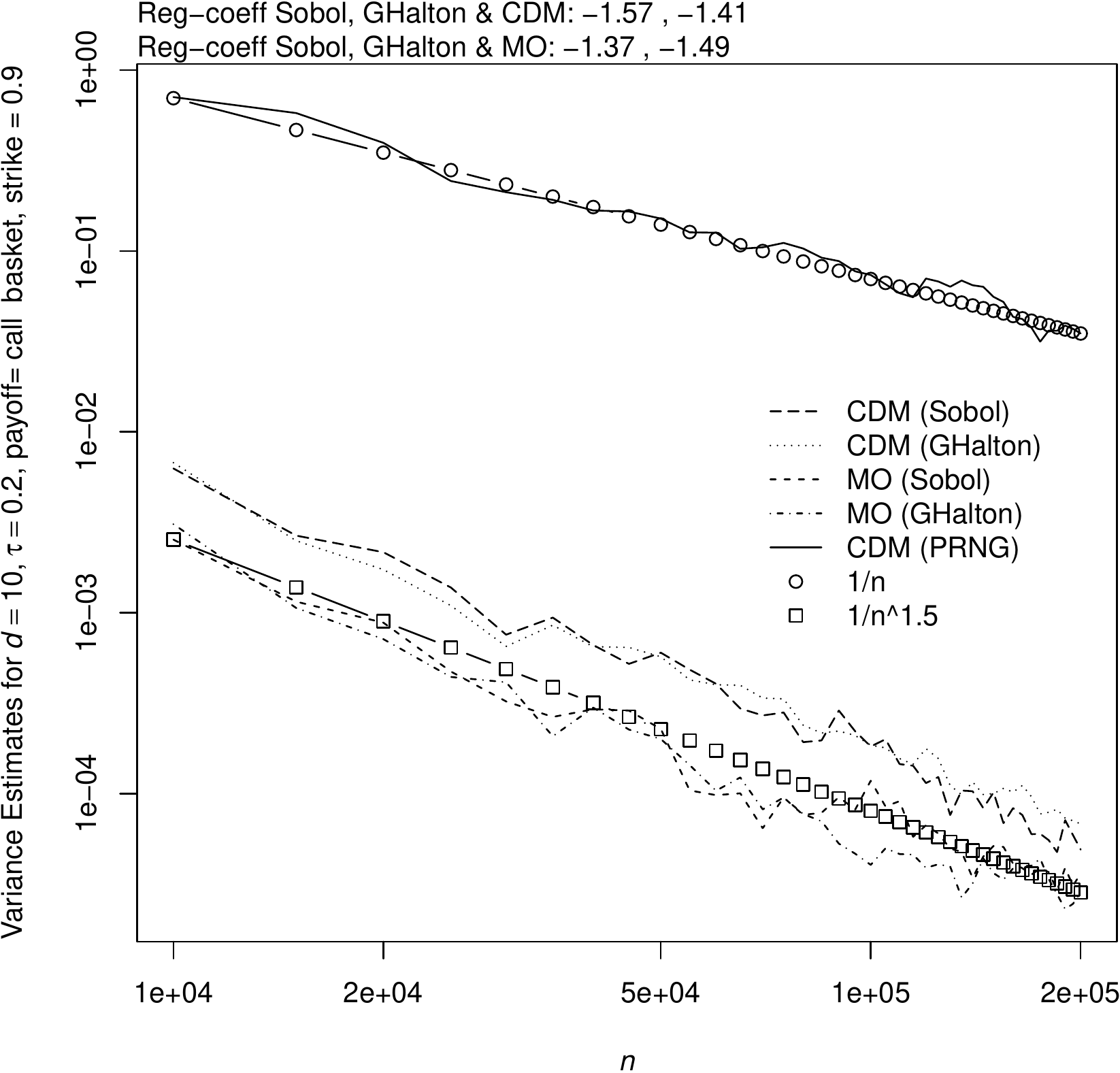}\\[2mm]
\includegraphics[width=0.425\textwidth]{./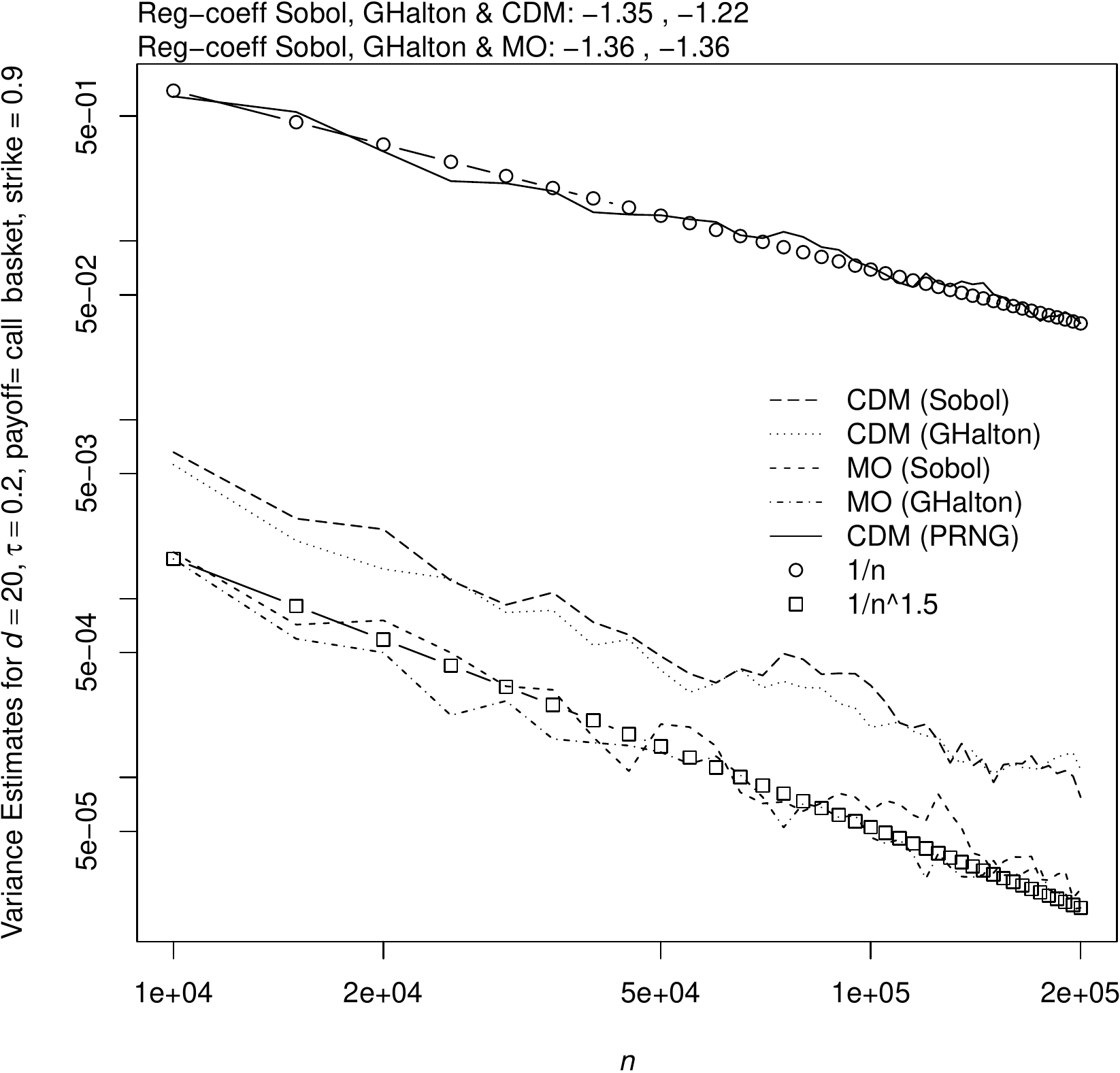}%
\caption{Variance estimates for the functional Basket Call with lognormal margins based on $B = 25$ repetitions for a Clayton copula with parameter such
  that Kendall's tau equals 0.2  for $d = 5$ (top), $d = 10$ (middle) and $d = 20$ (bottom).}
  %Convergence rates for QRNG: $n^{-\alpha}$ with $\alpha \in [1.22,1.89]$}
\label{fig:clayton:options:basket}
\end{figure}

\begin{figure}[htbp]
\centering
\includegraphics[width=0.425\textwidth]{./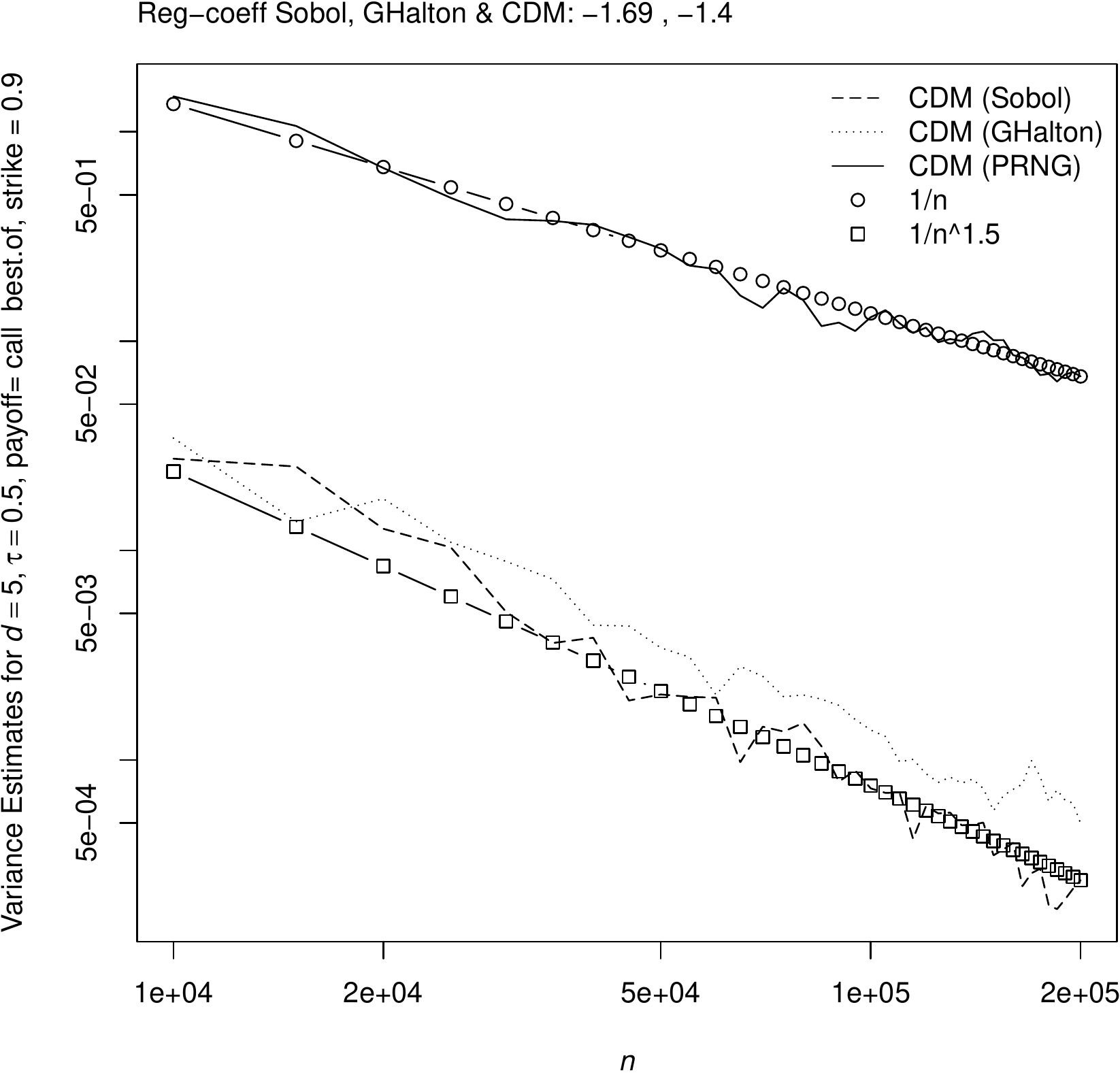}\\[2mm]
\includegraphics[width=0.425\textwidth]{./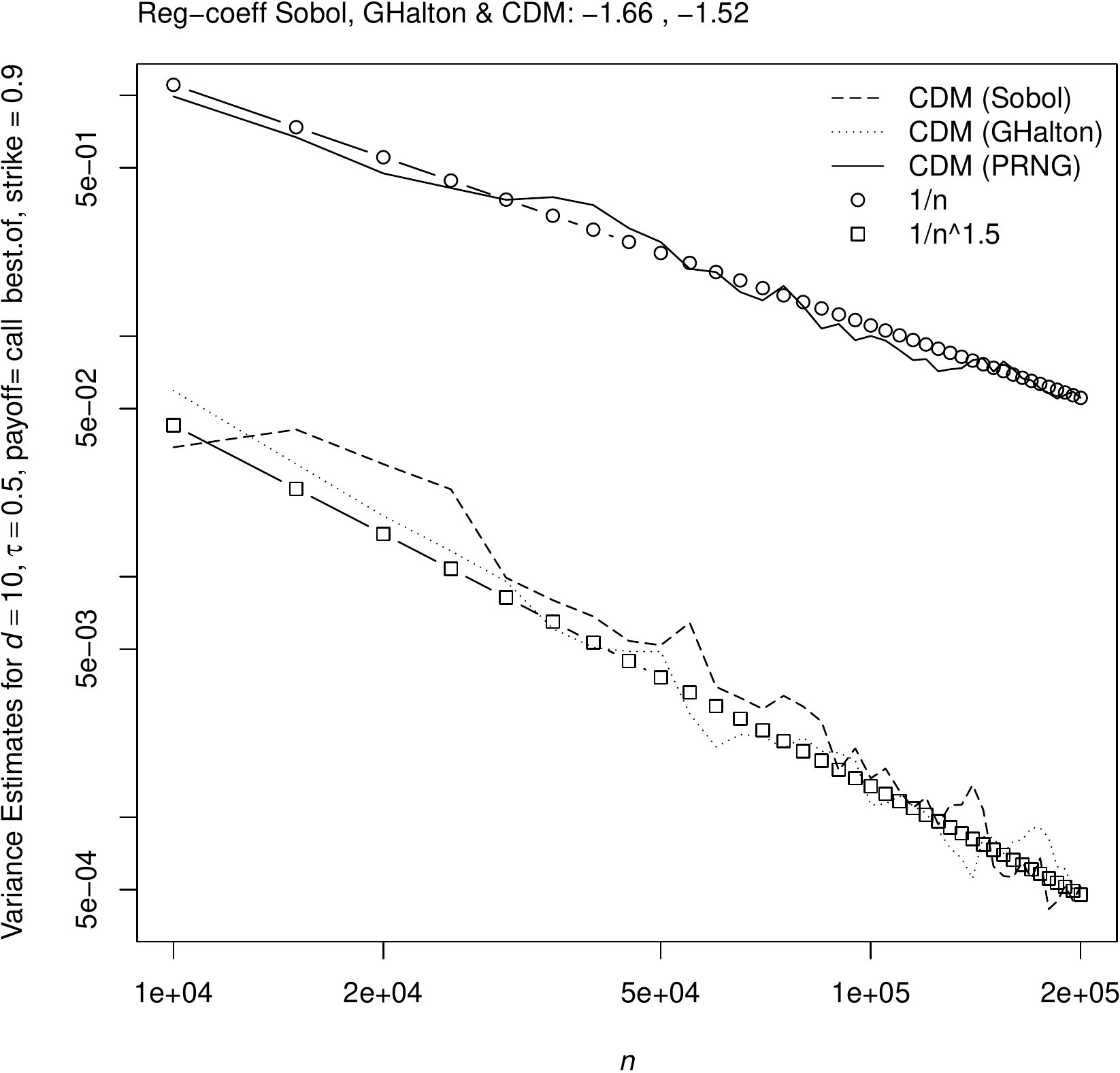}\\[2mm]
\includegraphics[width=0.425\textwidth]{./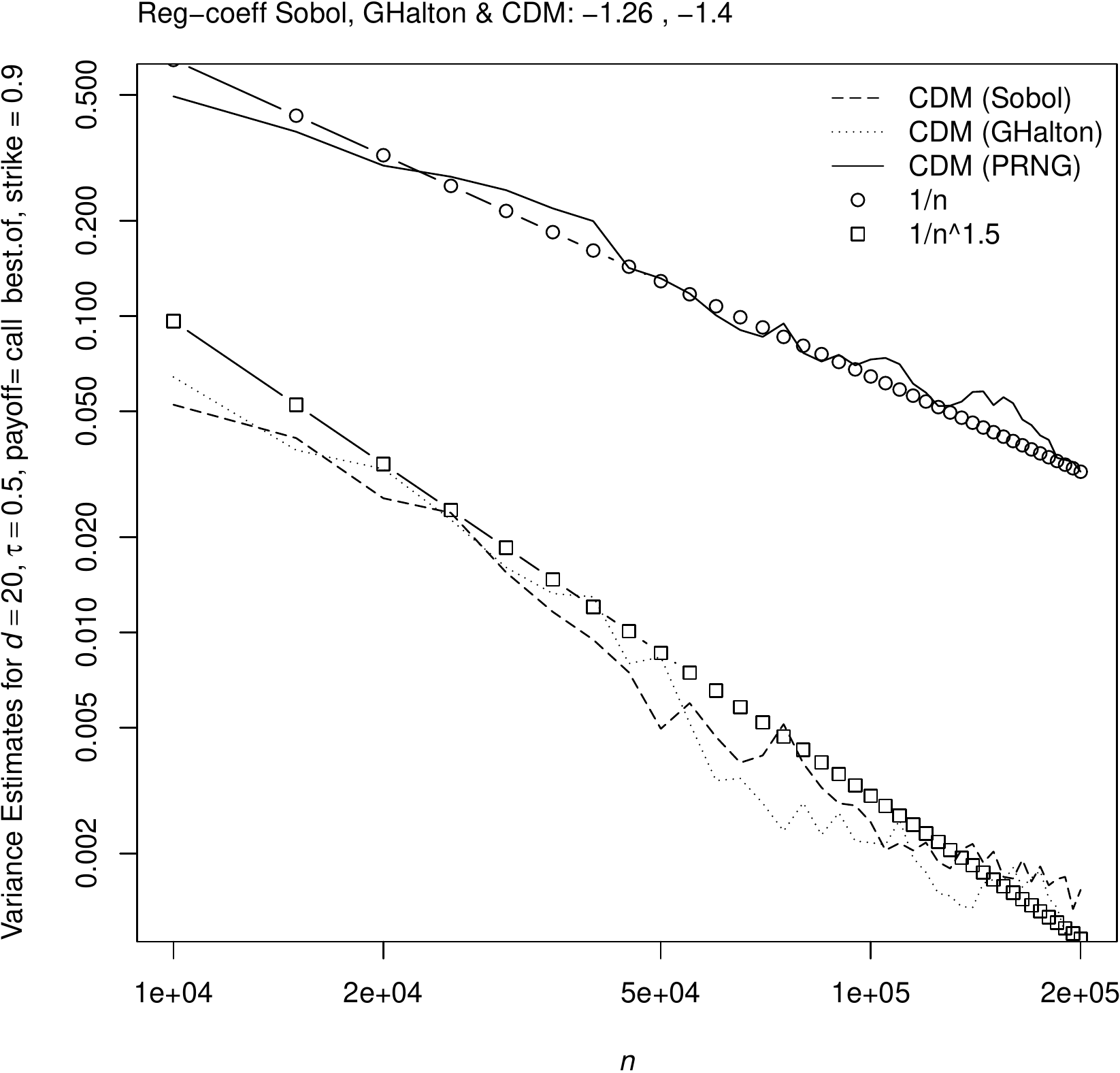}%
\caption{Variance estimates for the functional Best-Of
  Call with Pareto margins based on $B = 25$ repetitions for  an exchangeable $t$  copula with three degrees of freedom such that Kendall's tau
  equals 0.5 for $d = 5$ (top), $d = 10$ (middle) and $d = 20$ (bottom).}
  %Convergence rates for QRNG: $n^{-\alpha}$ with $\alpha \in [1.26,1.69]$}
\label{fig:clayton:options:best}
\end{figure}

\begin{figure}[htbp]
\centering
\includegraphics[width=0.425\textwidth]{./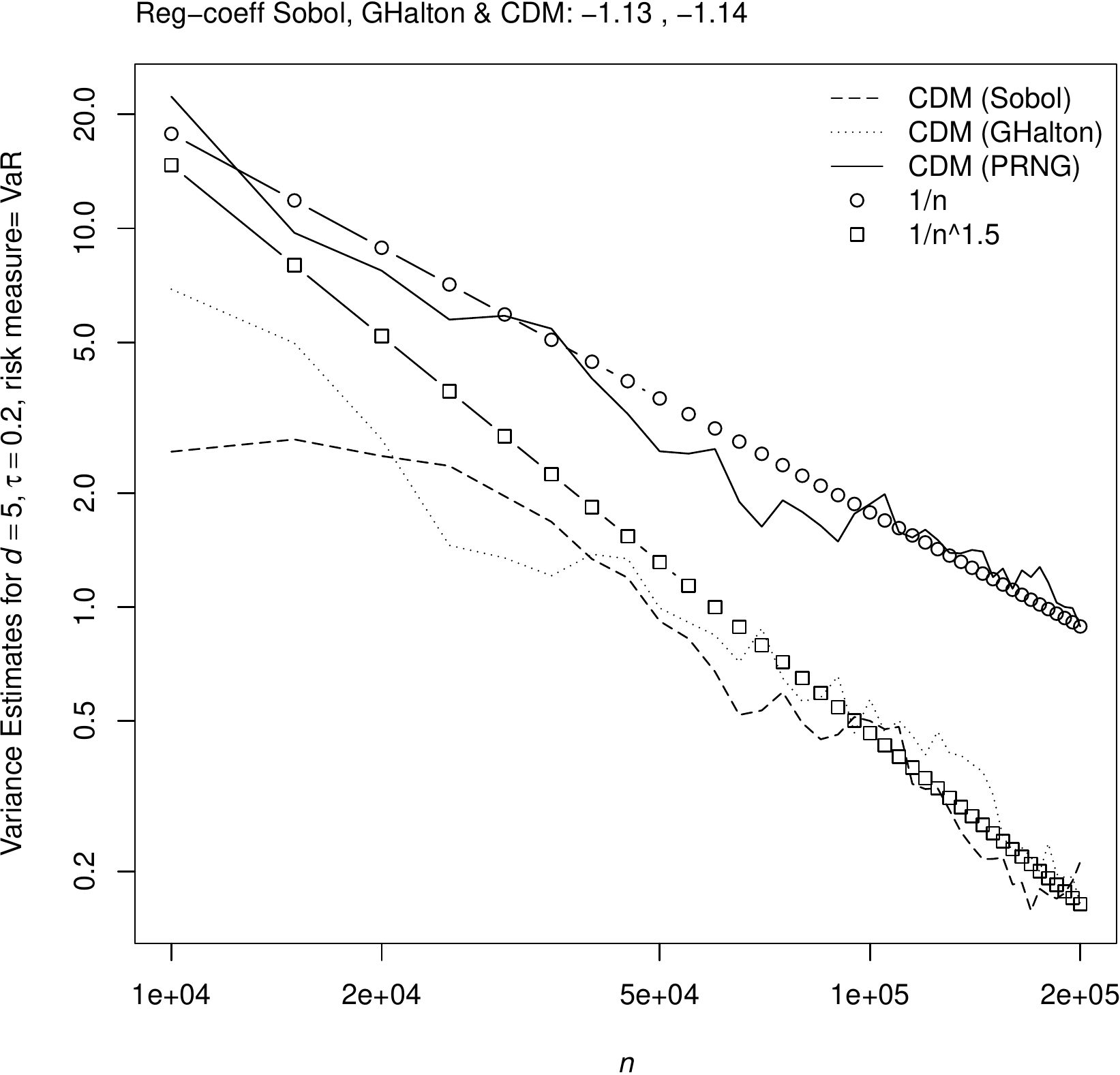}\\[2mm]
\includegraphics[width=0.425\textwidth]{./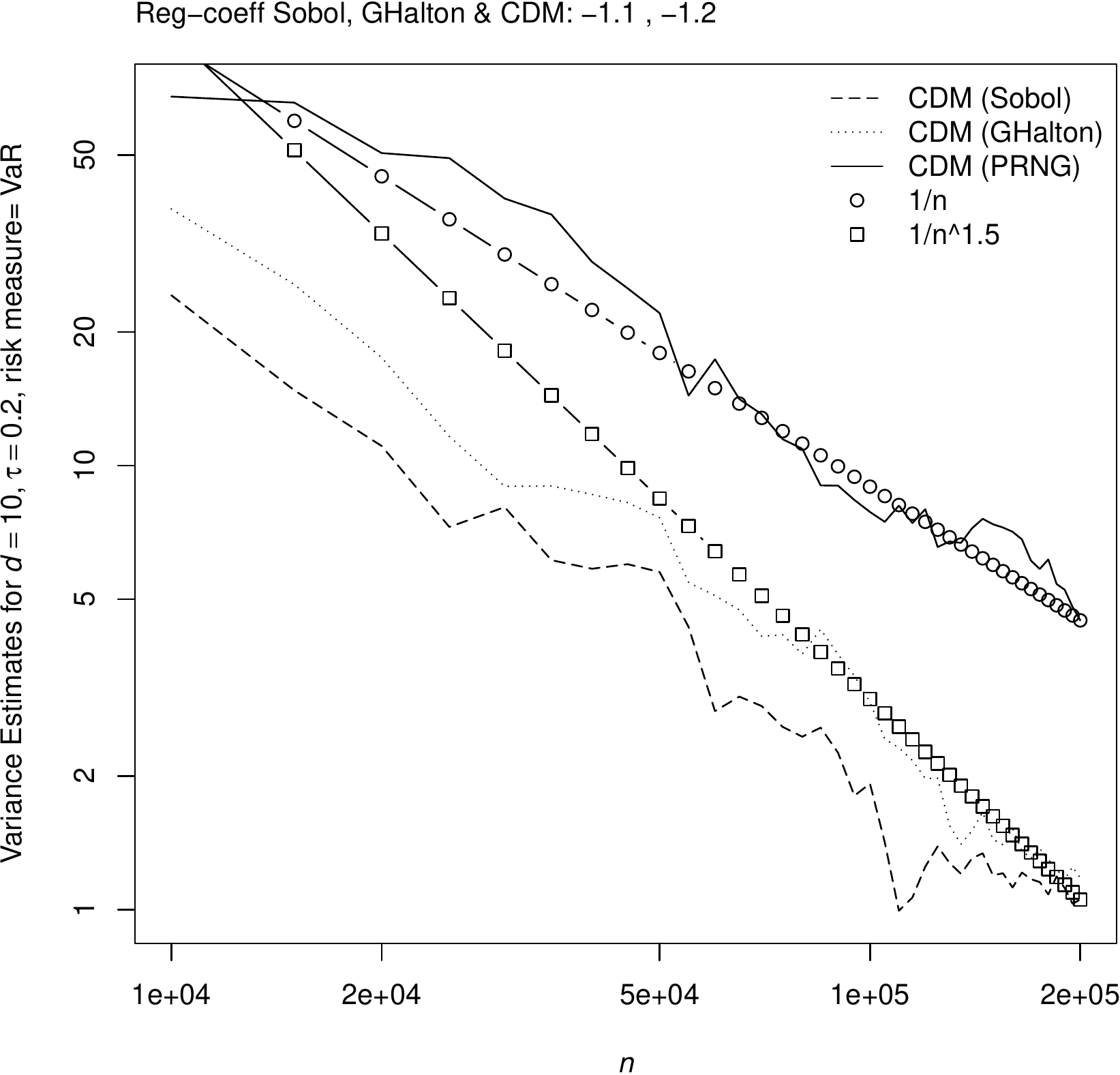}\\[3.2mm]
\includegraphics[width=0.425\textwidth]{./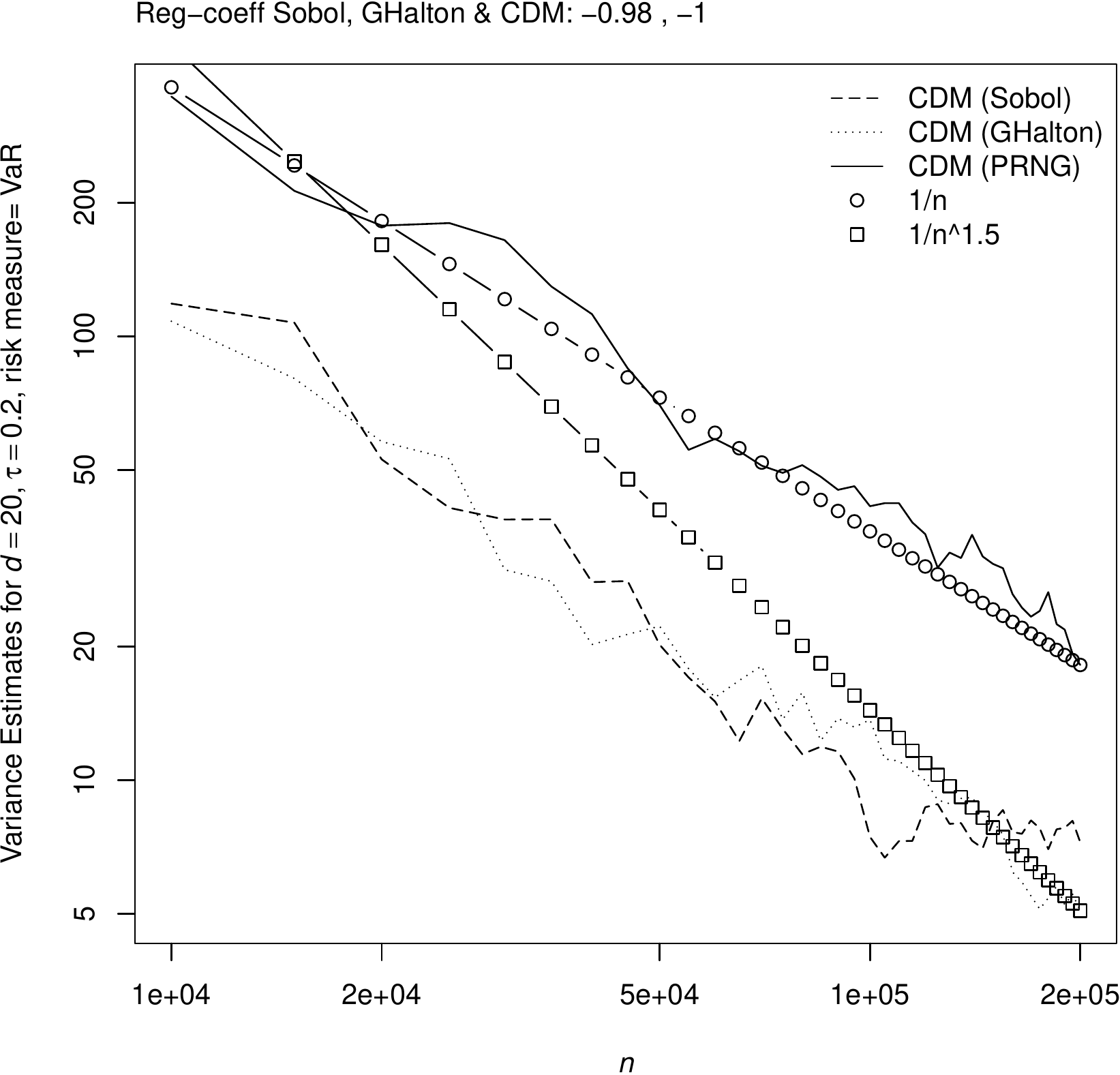}%
\caption{Variance estimates for the functional $\VaR_{0.99}$ with lognormal margins for an
  exchangeable $t$ copula with
  three degrees of freedom such that Kendall's tau equals 0.2 based on $B = 25$ repetitions  for $d=5$ (top), $d=10$ (middle) and
  $d=20$ (bottom).}
  %Convergence rates for QRNG: $n^{-\alpha}$ with $\alpha \in [0.98,1.13]$ for Sobol' and $\alpha \in [1.00,1.20]$ for GHalton}
\label{fig:clayton-t:riskmeasures:var}
\end{figure}

\begin{figure}[htbp]
\centering
\includegraphics[width=0.425\textwidth]{./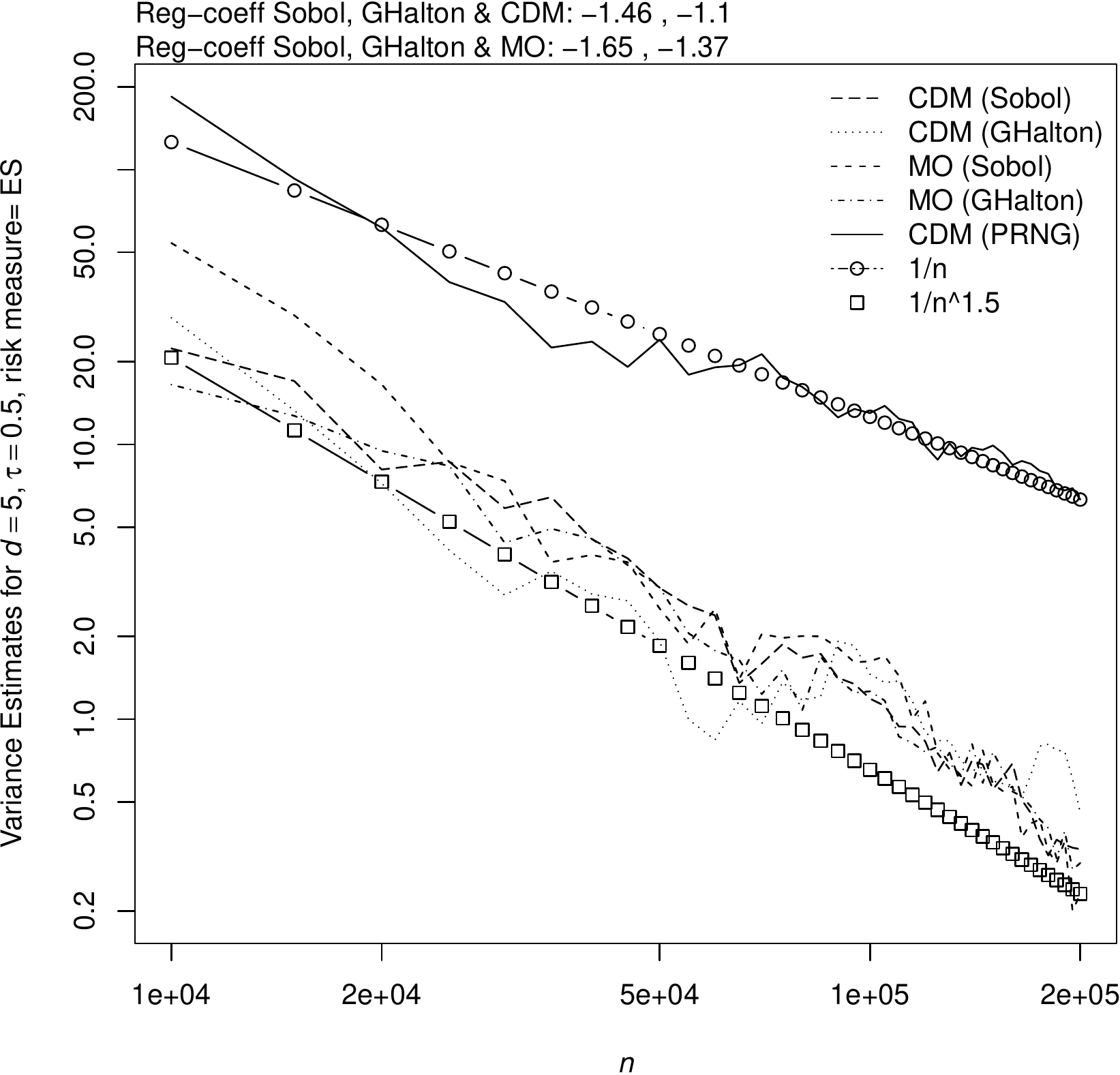}\\[2mm]
\includegraphics[width=0.425\textwidth]{./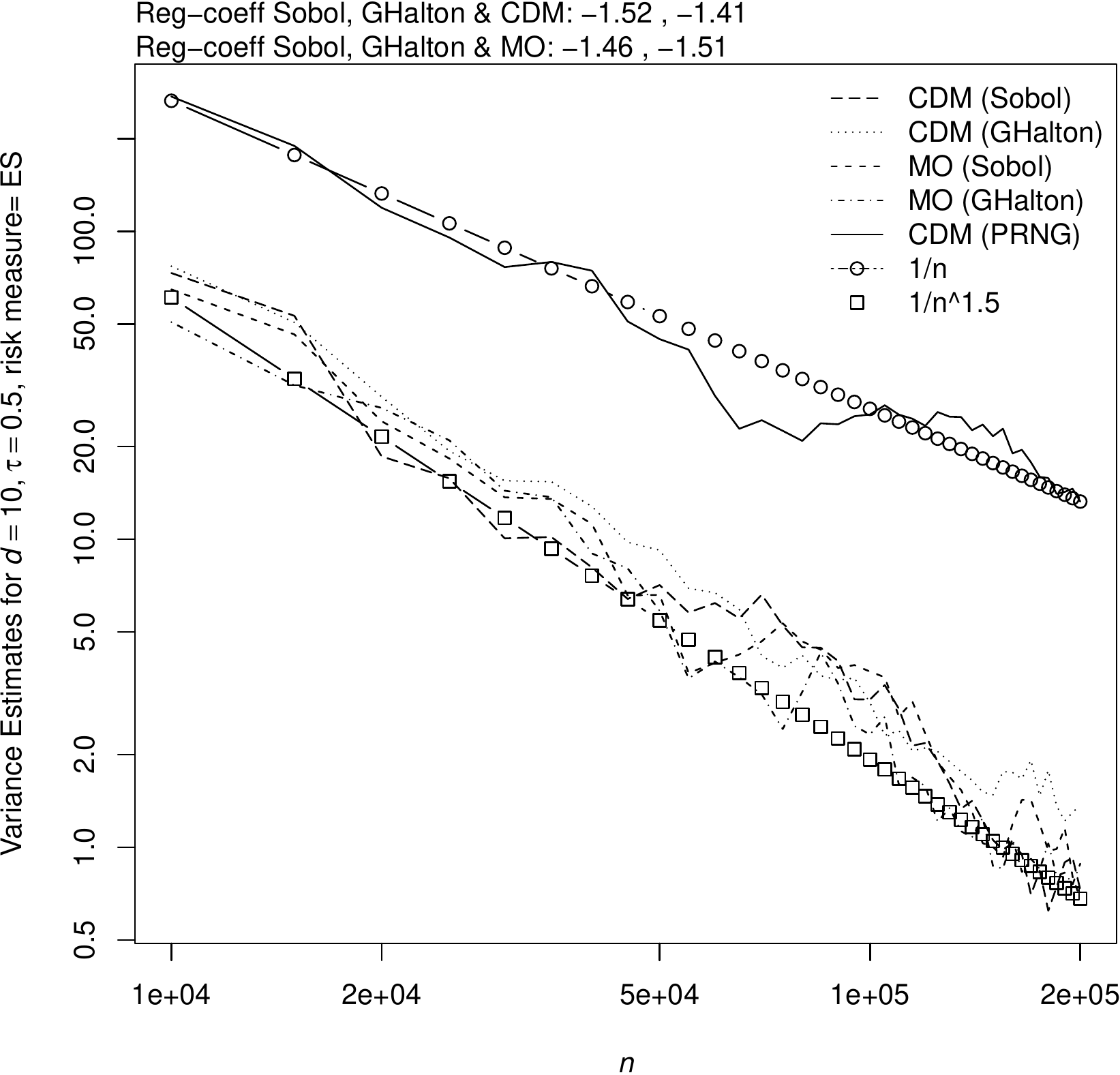}\\[2mm]
\includegraphics[width=0.425\textwidth]{./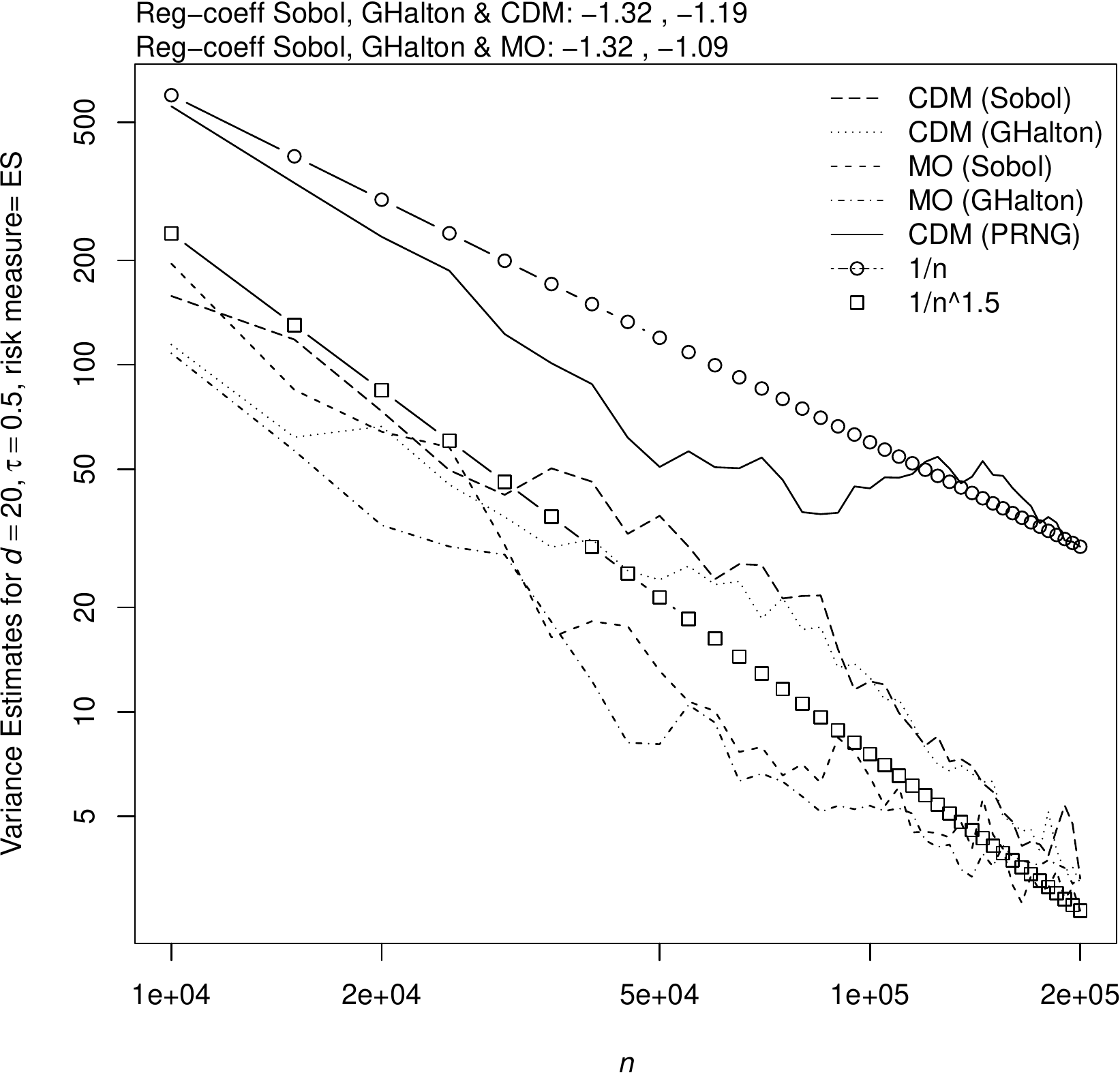}%
\caption{Variance estimates for the functional
  $\ES_{0.99}$ with Pareto margins for a Clayton copula with parameter such that Kendall's tau
  equals 0.5 based on $B = 25$ repetitions for $d=5$ (top), $d=10$ (middle) and
  $d=20$ (bottom).}
  %Convergence rates for QRNG: $n^{-\alpha}$ with $\alpha \in [1.32,1.65]$ for Sobol' and $\alpha \in [1.09,1.51]$ for GHalton}
\label{fig:clayton-t:riskmeasures:es}
\end{figure}

\subsection{Test functions}
We now consider integration results on two different test functions. The results are shown in Figures~\ref{fig:test:tau:0.2:d:5}, \ref{fig:test:tau:0.2:d:15}, \ref{fig:test:tau:0.5:d:5}, and
\ref{fig:test:tau:0.5:d:15} (the latter is in the online supplement), which are based on a Clayton (or $t$) copula with $\tau=0.2$
and $\tau=0.5$, respectively. The first
test function is given by
\begin{align*}
  \Psi_1(\bm{u}) = 3(u_1^2 + \ldots + u_d^2)/d,
\end{align*}
where the vector $\bm{u}$ is obtained after transforming the uniform points
$\bm{v}$ using either the CDM transform or the MO algorithm. Recall that the
former requires $d$-dimensional points (using either a PRNG or a QRNG), whereas
the latter requires $(d+1)$-dimensional points. Note that $\Psi_1$ integrates
exactly to 1 with respect to the copula-induced measure, since $U_j\sim\U[0,1]$,
$j\in\{1,\dots,d\}$. While we know the exact value of the integral in this case,
we still compare estimators based on $B$ i.i.d.\ copies of either MC or RQMC,
but we plot the average absolute error rather than the estimated
variance.

\iffalse
Our second test function is given by
\begin{align*}
  \Psi_2(\bm{u})=K_C(C(\bm{u}))+1/2,
\end{align*}
where $C$ denotes the copula the samples are drawn from and $K_C$ is the corresponding
Kendall distribution function. Here we have again that $\Psi_2$ integrates to
1 w.r.t.\ the copula-induced measure, that is $\IE[\Psi_2(\bm{U})]=1$ for
$\bm{U}\sim C$. %The results are given in Figures~\ref{fig:test:Kendall:0.2} and
%\ref{fig:test:Kendall:0.5} and summarized below.
\fi

The second test function is given by
\begin{align*}
  \Psi_2(\bm{u})=g_1((\phi^{\text{CDM}})^{-1}(\bm{u})),
\end{align*}
where
\begin{align*}
  g_1(\bm{v}) = \prod_{j=1}^d \frac{|4v_j-1|+\alpha_j}{1+\alpha_j},\quad\alpha_j=j,
\end{align*}
which is often used as a test function in the QMC literature; see, e.g.,
\cite{qFAU07a} and the references therein. So here we first apply the inverse of
the CDM transform to the copula-transformed points obtained either using the CDM
approach or MO, and then apply the $d$-dimensional function $g_1$. While this
has the effect of simply applying the standard test function $g_1$ to the
original sample points $\bm{v}_i$ in the case of the CDM, in the case of
the MO algorithm, we are not falling back on the original points $\bm{v}_i$. The
hope is that if MO does not preserve so well the low discrepancy of the original
points, this function would be able to detect this problem.
%The results are given in Figures~\ref{fig:test:Faure:0.2} and \ref{fig:test:Faure:0.5}.

While the second test function is mostly interesting for Archimedean copulas, the first one can be used more generally. This is why in the results reported in Figures \ref{fig:test:tau:0.2:d:5} and \ref{fig:test:tau:0.2:d:15}, we also consider an exchangeable $t$ copula with three degrees of freedom and Kendall's $\tau$ equal to 0.2. % (the online supplement also includes the case $\tau=0.5$ in Figure \ref{fig:test:t:tau:0.5}).

For both test functions, we see that the Sobol' and generalized Halton sequences always clearly outperform Monte Carlo, with an error often converging in $O(n^{-1})$ rather than the $O(n^{-0.5})$ corresponding to Monte Carlo. %, with often an advantage for Sobol' compared to generalized Halton.
%We also note that for the second function $\Psi_2$, when $d=15$, the MO method
%seems to do better than the CDM method, while for $d=5$, the two methods perform
%similarly. This advantage of MO in higher dimensions is not as clearly seen for
For the first function $\Psi_1$, both the CDM and MO methods perform about the same. We believe this might be due to the simplicity of
$\Psi_1$---a sum of univariate powers of the $u_j$'s---and the fact that both
methods perform equally well in the univariate case when combined with RQMC.
%This could be because $\Psi_1$ is a sum of one-dimensional functions w.r.t.\ the copula sample coordinates $u_j$'s, and both CDM and MO %We believe this might be due to the fact that MO requires a $(d+1)$-dimensional point set  and  the effect of working with an underlying  $(d+1)$-dimensional point set rather than $d$-dimensional is more pronounced when $d=5$ than when $d=10$. The same observation does not hold for the Sobol' sequence, as its error behaves a bit more erratically and an ordering between CDM and MO is harder to establish.
Looking at the results for $\Psi_2$, we see that with RQMC the CDM method performs better than MO, as there is no copula transformation performed in the case of CDM. However, RQMC with MO is still better than Monte Carlo, which suggests that the MO algorithm %does not really  ``destroy''
manages to preserve the low discrepancy of the original point set.

\begin{figure}[htbp]
\centering
\includegraphics[width=0.425\textwidth]{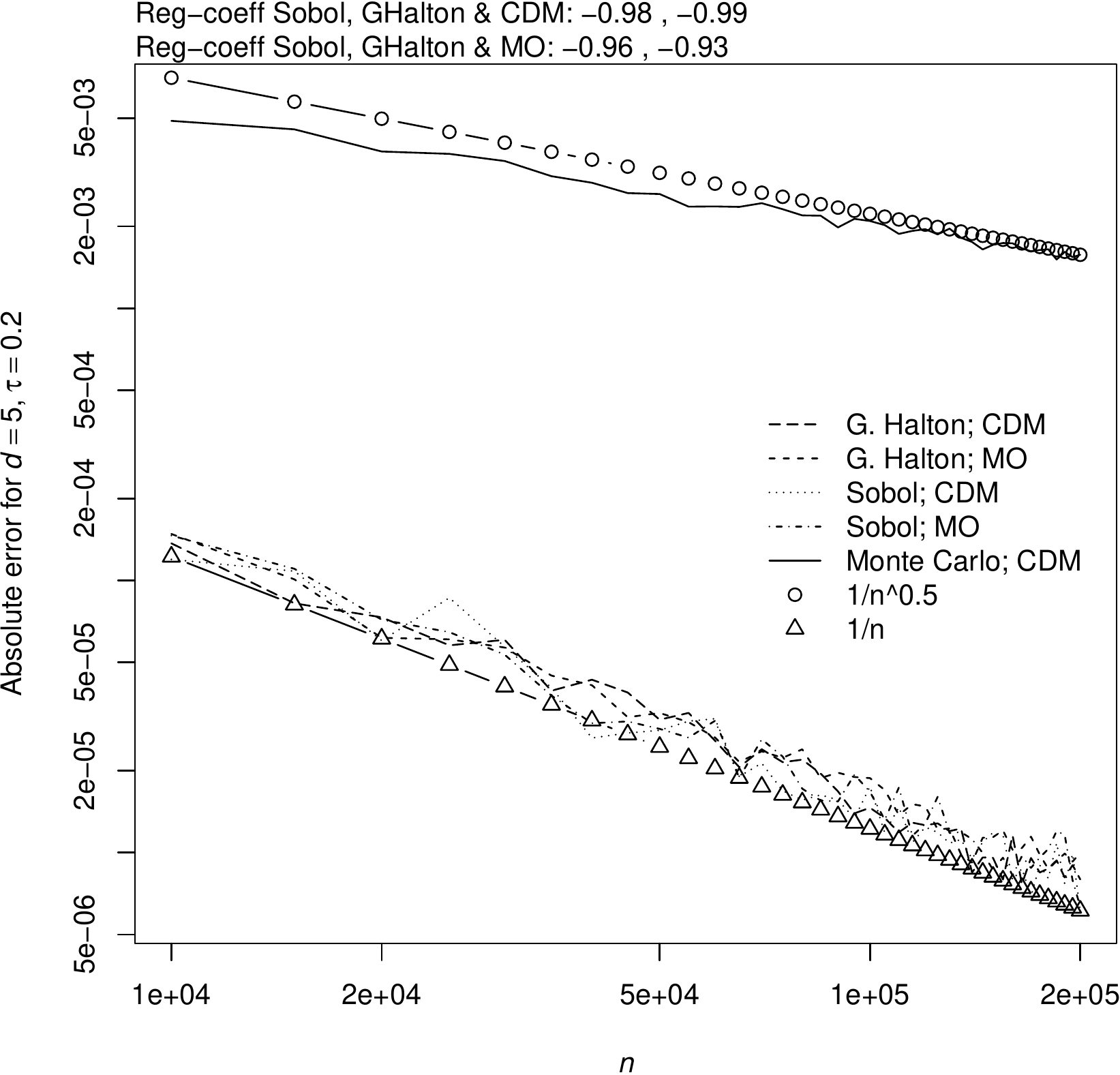}\\[2mm]
\includegraphics[width=0.425\textwidth]{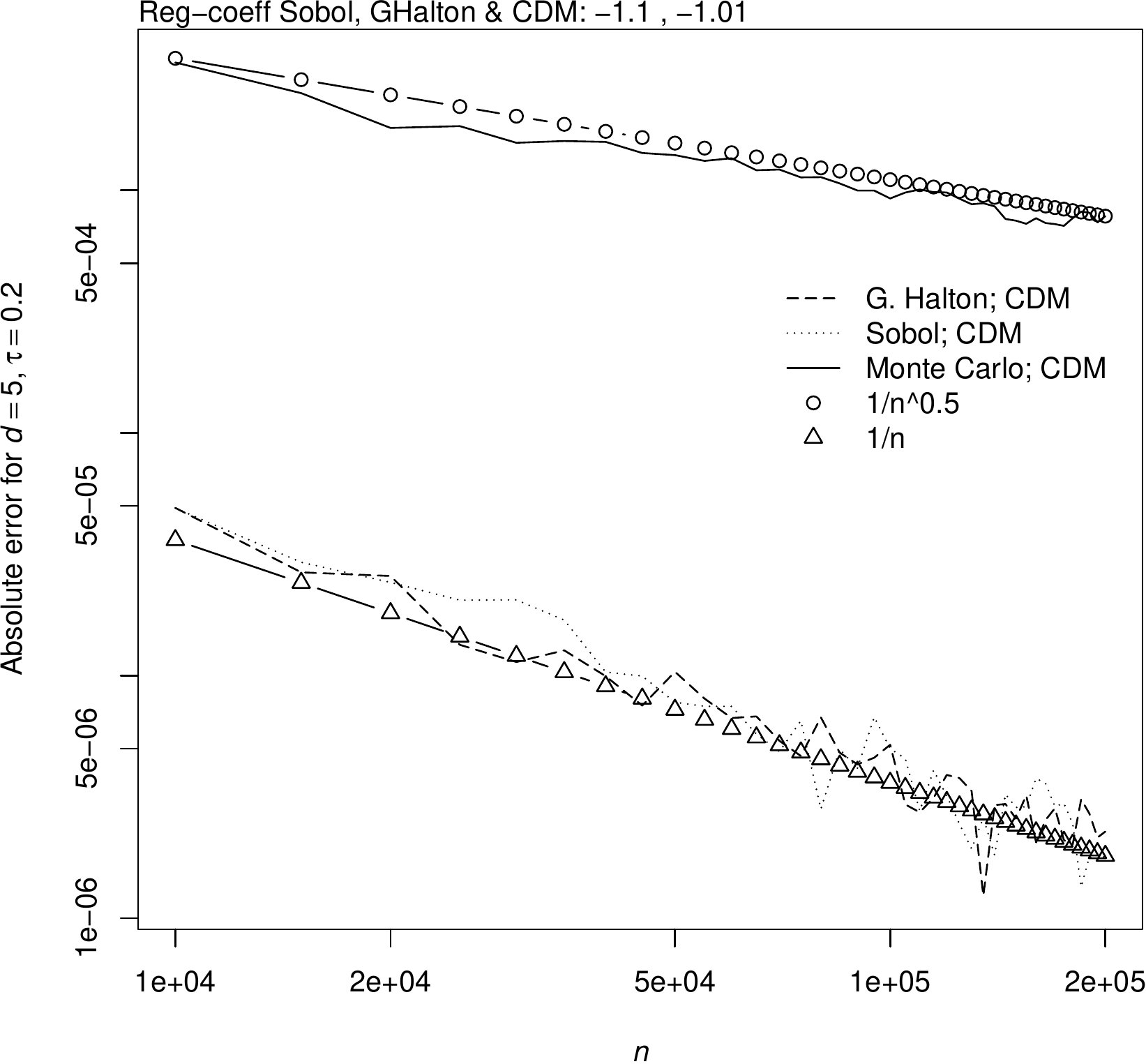}\\[-0.8mm]
\includegraphics[width=0.425\textwidth]{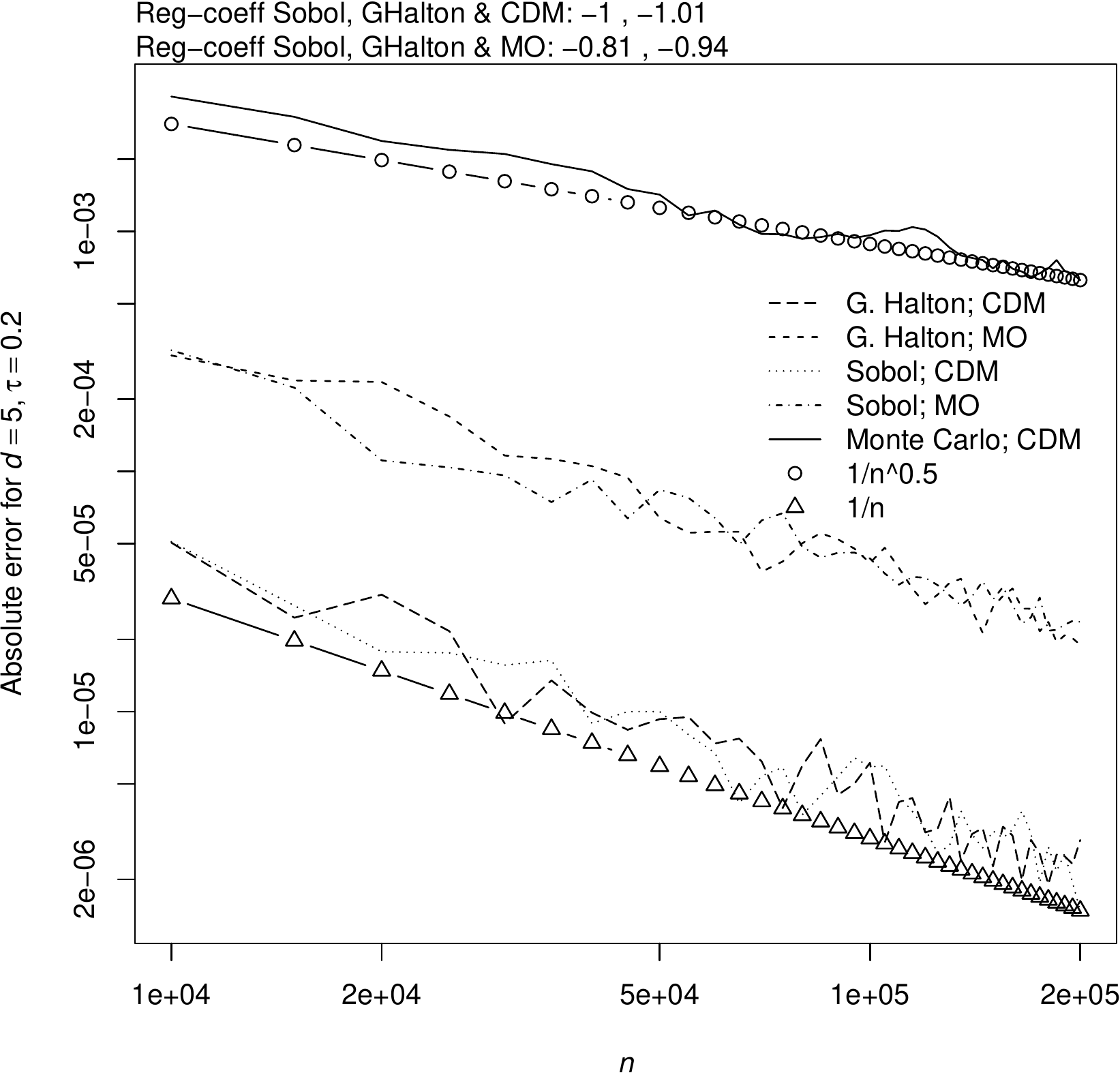}%
\caption{Average absolute errors for the test functions
    $\Psi_1(\bm{u})=3(u_1^2 + \ldots + u_d^2)/d$ (top) and $\Psi_2(\bm{u})=g_1((\phi^{\text{CDM}})^{-1}(\bm{u}))$ (bottom) for a Clayton copula with parameter such that Kendall's tau
  equals 0.2 based on $B=25$ repetitions  for $d=5$; the middle plot shows results for $\Psi_1(\bm{u})$ and an exchangeable $t$ copula with 3 degrees of freedom and Kendall's tau
  of 0.2.}
  %Error convergence rates for QRNG: $n^{-\alpha}$ with $\alpha \in [0.81,1]$ }
\label{fig:test:tau:0.2:d:5}
\end{figure}

\begin{figure}[htbp]
\centering
\includegraphics[width=0.425\textwidth]{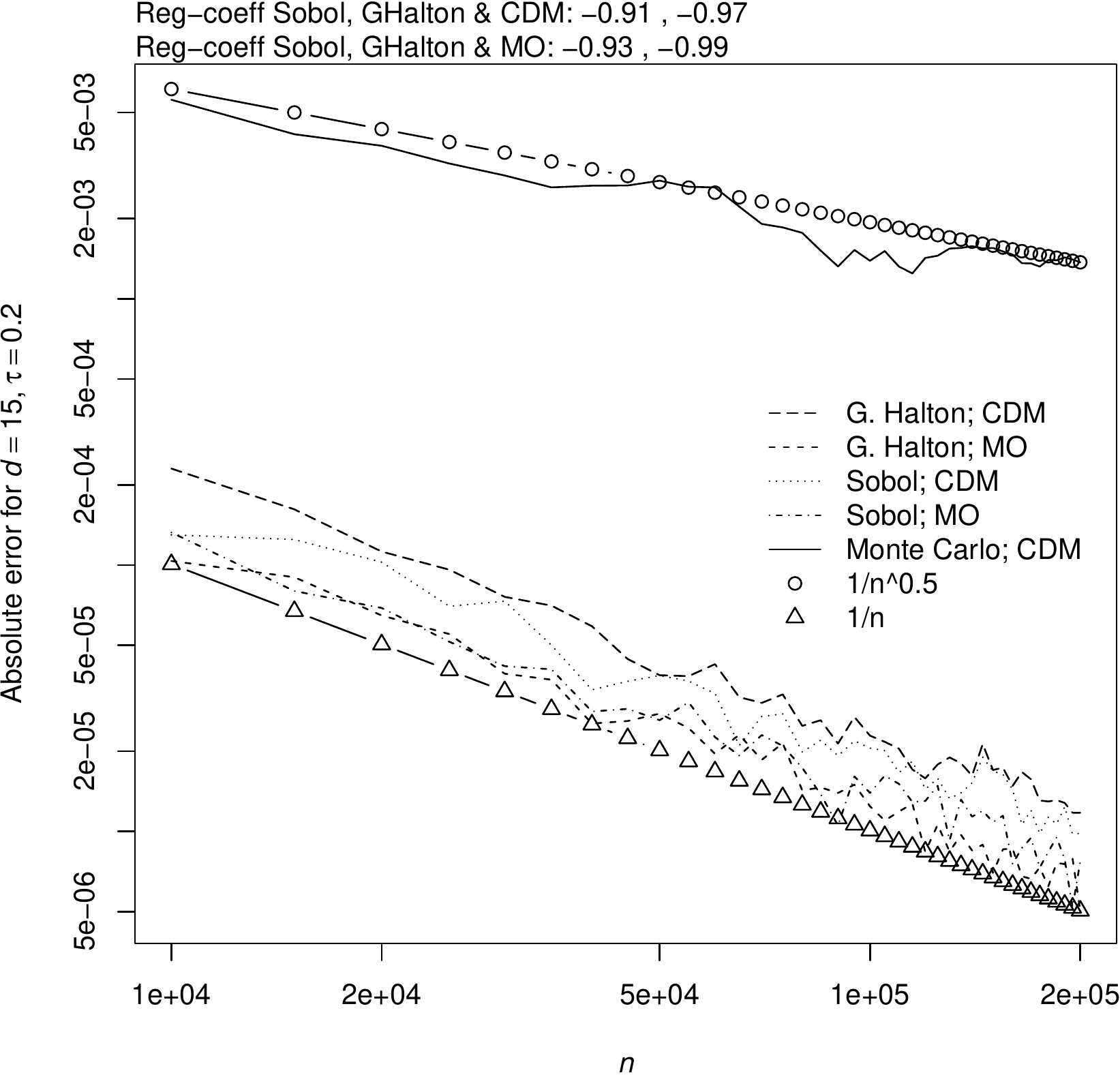}\\[2mm]
\includegraphics[width=0.425\textwidth]{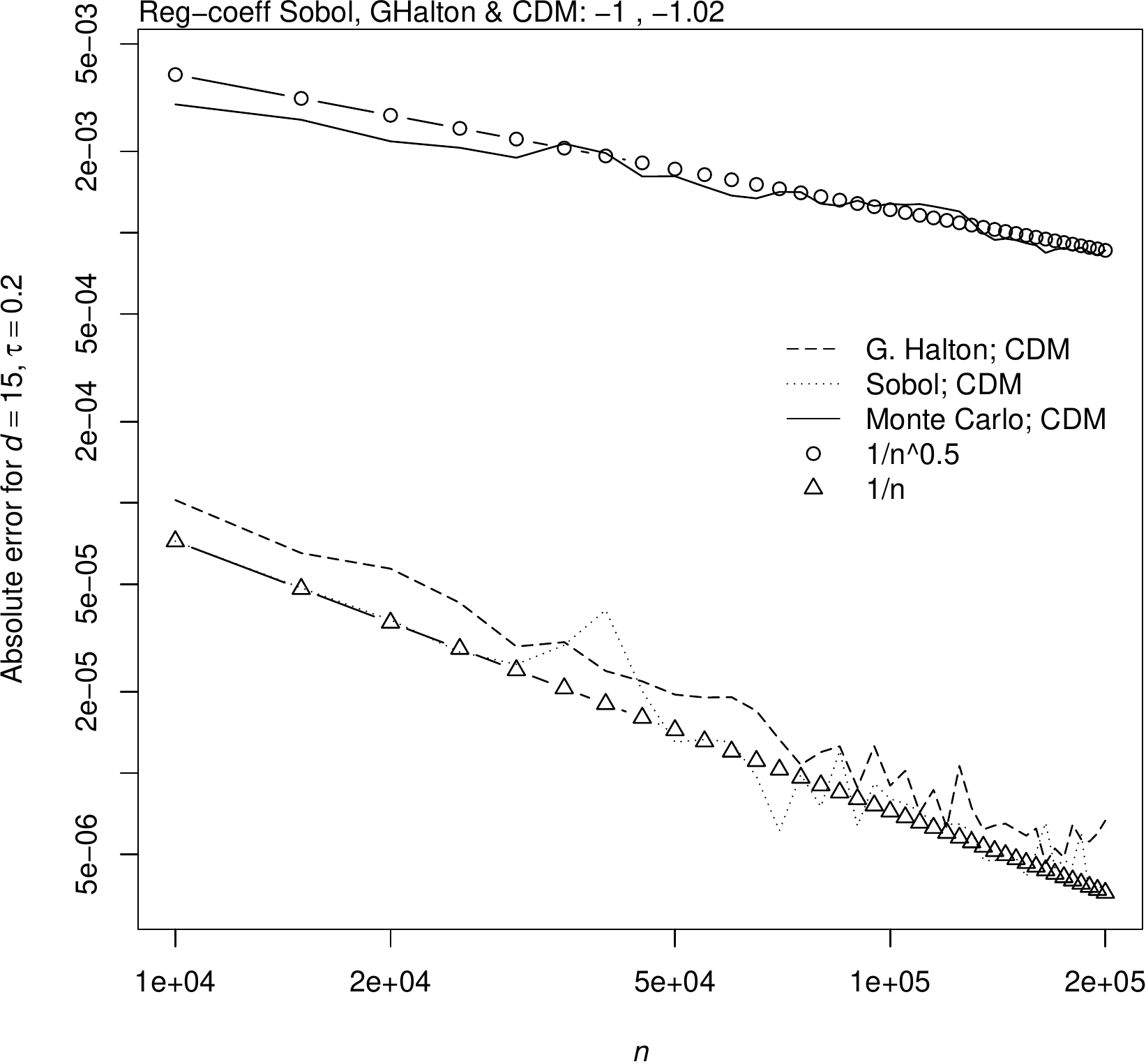}\\[-0.8mm]
\includegraphics[width=0.425\textwidth]{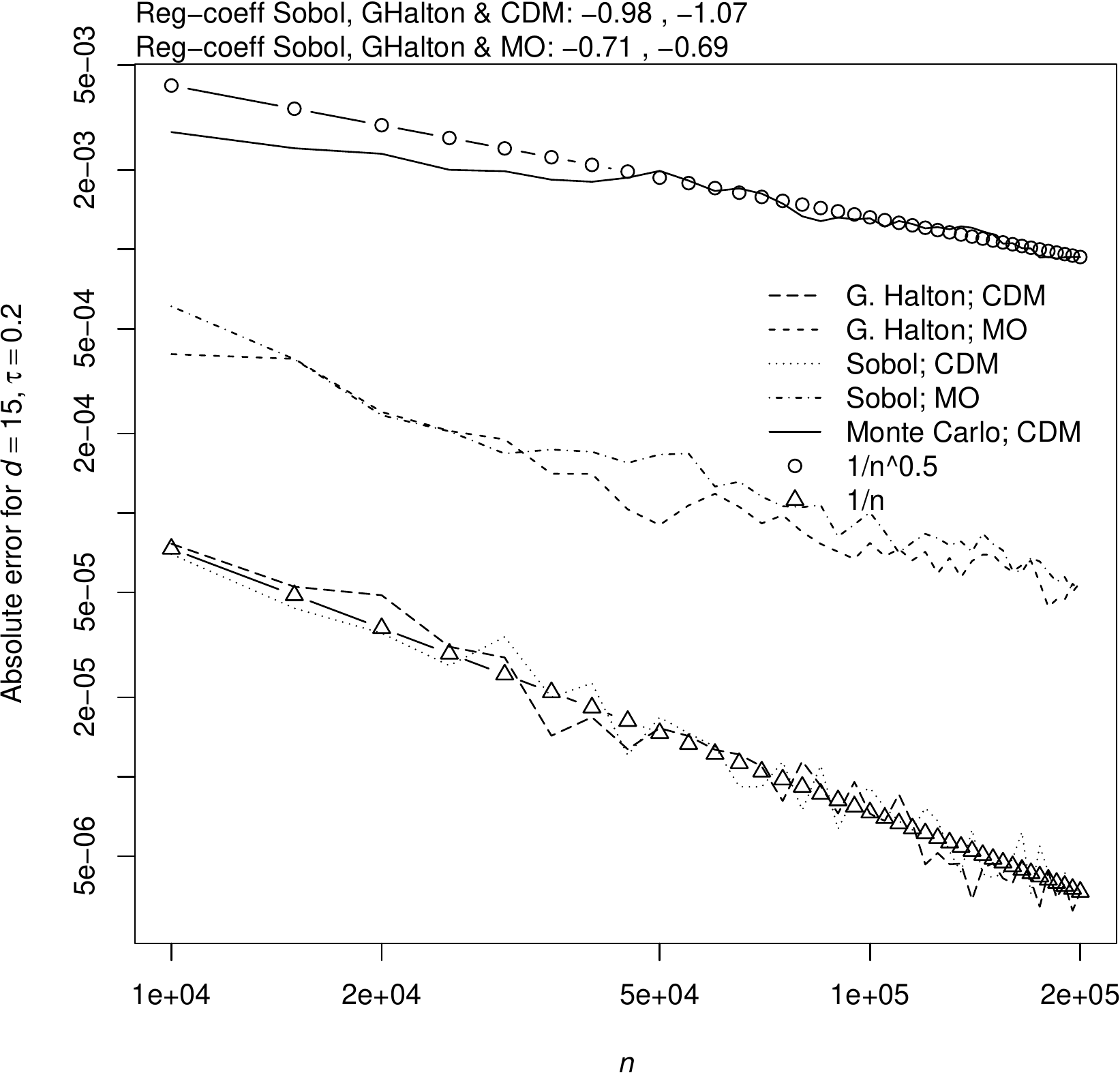}%
\caption{Average absolute errors for the test functions
    $\Psi_1(\bm{u})=3(u_1^2 + \ldots + u_d^2)/d$ (top), $\Psi_2(\bm{u}) = K_C(C(\bm{u}))+1/2$ (middle), and $\Psi_3(\bm{u})=g_1((\phi^{\text{CDM}})^{-1}(\bm{u}))$ (bottom) for a Clayton copula with parameter such that Kendall's tau
  equals 0.2 based on $B=25$ repetitions for $d=15$; the middle plot shows results for $\Psi_1(\bm{u})$ and an exchangeable $t$ copula with 3 degrees of freedom and Kendall's tau
  of 0.2.}
  %Error convergence rates for QRNG: $n^{-\alpha}$ with $\alpha \in ?$}
\label{fig:test:tau:0.2:d:15}
\end{figure}

\section{Conclusion and discussion}\label{sec:conclusion}
The main goal of this paper was to show how copula samples can be generated
using quasi-random numbers. This is of interest when replacing PRNGs by QRNGs in applications  involving dependent samples, possibly in higher dimensions. We
have seen that different sampling approaches can be used, with a focus on the
CDM approach and, additionally for Archimedean copulas, on the Marshall--Olkin algorithm. We have
studied the error behaviour for both methods and have seen that in order to prove that the
composed function $\Psi \circ \phi_C$ is smooth enough to satisfy the
Koksma--Hlawka bound for the error, sufficient conditions on the function $\Psi$ are that it must have
smooth higher order mixed partial derivatives. For the Marshall--Olkin algorithm,
we have shown that for several Archimedean copula families, the corresponding
transformation $\phi_C$ is smooth enough to guarantee the good behaviour of
the error bound. The superiority of QRNG over PRNG for copula sampling was illustrated on several examples, including a simulation addressing an application in the realm of finance and insurance. Most of the results in this paper are reproducible using the \R\
packages {\tt copula} and {\tt qrng}.

Some ideas for future work would be to follow-up on
Proposition~\ref{prop:VarFourComp} and to analyze the speed of decay of the
Walsh coefficients of the composed function $\Psi \circ \phi_C$, based on
assumptions on the speed of decay of the Walsh coefficients of $\Psi$ and the
properties of the sampling method $\phi_C$.
%Another possibility for future work on this
%topic would be to try to prove Lipschitz conditions on $\Psi \circ \phi_C$ such
%as those used in \cite{Owen97b} to analyze the variance of RQMC estimators.

Concerning the copula-induced discrepancy studied in
Section~\ref{subsec:TransfSamples}, a possible avenue for future research would
be to construct point sets that directly minimize this discrepancy, without
first transforming a (uniform-based) low-discrepancy sample. In addition,
proving error bounds based on the $L_2$-discrepancy would be useful, as this
discrepancy measure is easier to compute. Finally, numerically computing the
copula-induced discrepancy for samples transformed either using the CDM or the MO
algorithm would give us some insight as to how conservative the bound given in
Proposition~\ref{prop:bounded_transform_discrep} is. \nocite{mariusphdthesis}

%\printbibliography[heading=bibintoc]
%\clearscrheadfoot

\begin{acknowledgements}
We thank the Associate Editor and the two anonymous reviewers for their helpful comments, which helped us improve this paper. The first author wishes to thank SCOR for their financial support. The second and third authors acknowledge the support of NSERC through grants \#5010 and \#238959, respectively.   
\end{acknowledgements}

%\bibliographystyle{spbasic}  
%\bibliography{qrng} 

\appendix
\section*{Appendix}

%\subsection*{Proofs}

\iffalse
We first provide a review of integration lattices, focusing on the special case of {\em rank-1
  lattices}, which are the most commonly used in practice. They are obtained by
choosing a generating vector $\bm{z} = (z_1,\ldots,z_d)$ where
$z_j \in \{1,\ldots,n-1\}$, $j\in\{1,\dots,d\}$, and then the lattice $P_n$ is
given by
\begin{align*}
  P_n = \left\{ \frac{i-1}{n} \bm{z} \bmod{1},\ i=1,\ldots,n \right\},
\end{align*}
where $\bmod{\,1}$ denotes the coordinate-wise modulo operation. The $z_j$ are
typically chosen so that $\gcd(z_j,n)=1$, which implies that each
one-dimensional projection of the lattice corresponds to the $n$-point grid
$\{0,1/n,\ldots,(n-1)/n\}$ (if $\gcd(z_j,n)>1$ then the corresponding projection
is mapped to $n/\gcd(z_j,n) < n$ distinct points). A {\em Korobov lattice} is a
special form of a rank-1 lattice in which $z_j =a^{j-1} \bmod{n}$ for a so-called
generator $a \in \{1,\ldots,n-1\}$.
\fi

For all the randomization schemes mentioned in Section \ref{sec:qrng}, in addition to having a
simple method to estimate the variance of the corresponding RQMC estimator, %(based on one instance of a randomized point set),
results giving exact expressions for this variance are also known and typically
rely on using a well-chosen series expansion of the function $\Psi$ of
interest. The following result recalls this variance expression in the case of
randomly digitally shifted net; see \cite{Lemieux2009} for a detailed
proof. This result  is used in the proof of Proposition \ref{prop:VarFourComp} in Section \ref{subsec:composition}.

\begin{theorem}[Variance for randomly digitally shifted nets]\label{thm:VarFourierWalsh}
 Let $\tilde{P}_n = \{\tilde{\bm{v}}_1,\ldots,\tilde{\bm{v}}_n\}$ be a
 randomly digitally shifted net in base $b$
 with corresponding RQMC estimator $\widehat{\mu}_n$ given by
 \begin{align*}
 \widehat{\mu}_n = \frac{1}{n} \sum_{i=1}^n \Psi(\tilde{\bm{v}}_i)
 \end{align*}
 and assume
 ${\rm Var}(\Psi(\bm{U})) < \infty$ for $\bm{U} \sim U[0,1)^d$. Then we have that
 \begin{align*}
   {\rm Var}(\widehat{\mu}_n) = \sum_{\bm{0} \neq \bm{h} \in {\cal L}_d^*} |\widehat{\Psi}(\bm{h})|^2,
 \end{align*}
 where
  $\widehat{\Psi}(\bm{h})$ is the Walsh
 coefficient of $\Psi$ at $\bm{h}$, given by
 \[
 \widehat{\Psi}(\bm{h})  =%= \begin{cases}
% \int_{[0,1)^d} f(\bm{u}) e^{-2\pi i \bm{h} \cdot \bm{u}}d\bm{u} & \mbox{ (Fourier)}\\
  \int_{[0,1)^d} f(\bm{u}) e^{-2\pi i  \langle \bm{h}, \bm{u} \rangle_b}d\bm{u} %& \mbox{ (Walsh)}
%  \end{cases}
\]
where $\langle \bm{h}, \bm{u} \rangle_b = \frac{1}{b} \sum_{j=1}^d \sum_{l=0}^{\infty} h_{j,l}u_{j,l+1}$ with $h_{j,l}$ and $u_{j,l}$ obtained from the base $b$ expansion of $h_j$ and $u_j$, respectively, and ${\cal L}_d^*=\{\bm{h} \in \mathbb{Z}^d: \langle \bm{h}, \bm{v}_i \rangle_b \in \mathbb{Z}, \forall i=1,\ldots,n \}$
 is the dual net of the deterministic net
 $P_n = \{\bm{v}_i,i=1,\ldots,n\}$ that has been shifted to get $\tilde{P}_n$.
\end{theorem}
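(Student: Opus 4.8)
The plan is to expand $\Psi$ in the Walsh system $\{\bm u\mapsto e^{2\pi i\langle\bm h,\bm u\rangle_b}:\bm h\in\IZ^d\}$ attached to the digital group structure of $[0,1)^d$, and to exploit that a digital shift acts on each Walsh function by a multiplicative constant. Since $\Var(\Psi(\bm U))<\infty$ for $\bm U\sim\U[0,1)^d$, we have $\Psi\in L^2([0,1)^d)$; the Walsh functions form a complete orthonormal system of this space, so
\begin{align*}
  \Psi(\bm u)=\sum_{\bm h\in\IZ^d}\widehat{\Psi}(\bm h)\,e^{2\pi i\langle\bm h,\bm u\rangle_b}
\end{align*}
with convergence in $L^2$, and Parseval gives $\sum_{\bm h}|\widehat{\Psi}(\bm h)|^2=\int_{[0,1)^d}|\Psi|^2<\infty$ (hence also $\sum_{\bm 0\neq\bm h}|\widehat\Psi(\bm h)|^2=\Var(\Psi(\bm U))<\infty$, which will guarantee finiteness of the claimed variance).

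Two identities will drive the computation. Writing $\tilde{\bm v}_i=\bm v_i\oplus_b\bm\Delta$ for the random digital shift $\bm\Delta\sim\U[0,1)^d$, the first is the \emph{character property}
\begin{align*}
  e^{2\pi i\langle\bm h,\bm v_i\oplus_b\bm\Delta\rangle_b}=e^{2\pi i\langle\bm h,\bm v_i\rangle_b}\,e^{2\pi i\langle\bm h,\bm\Delta\rangle_b},
\end{align*}
which follows at once from the definition of $\langle\cdot,\cdot\rangle_b$ together with the fact that reducing a digit modulo $b$ alters the exponent only by an integer multiple of $2\pi i$. The second is the \emph{dual-net character sum}
\begin{align*}
  \frac{1}{n}\sum_{i=1}^n e^{2\pi i\langle\bm h,\bm v_i\rangle_b}=\I[\bm h\in\mathcal{L}_d^*];
\end{align*}
the case $\bm h\in\mathcal{L}_d^*$ is immediate because then $\langle\bm h,\bm v_i\rangle_b\in\IZ$, while for $\bm h\notin\mathcal{L}_d^*$ one uses that the points of a digital net form an $\mathbb{F}_b$-linear group under $\oplus_b$, so $i\mapsto e^{2\pi i\langle\bm h,\bm v_i\rangle_b}$ is a \emph{nontrivial} additive character and its sum over the group vanishes (see \cite{DiPi10,Lemieux2009}).

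I would then substitute the Walsh expansion into $\widehat\mu_n=\frac1n\sum_{i=1}^n\Psi(\tilde{\bm v}_i)$, pull out $e^{2\pi i\langle\bm h,\bm\Delta\rangle_b}$ by the character property, and collapse the average over $i$ with the dual-net identity, obtaining
\begin{align*}
  \widehat\mu_n=\sum_{\bm h\in\mathcal{L}_d^*}\widehat\Psi(\bm h)\,e^{2\pi i\langle\bm h,\bm\Delta\rangle_b}.
\end{align*}
As $\bm 0\in\mathcal{L}_d^*$ and $\langle\bm 0,\bm\Delta\rangle_b=0$, the $\bm h=\bm 0$ term is $\widehat\Psi(\bm 0)=\IE[\Psi(\bm U)]$. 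Taking expectation over $\bm\Delta$ and using the orthonormality relation $\IE[e^{2\pi i\langle\bm h,\bm\Delta\rangle_b}\,\overline{e^{2\pi i\langle\bm h',\bm\Delta\rangle_b}}]=\I[\bm h=\bm h']$ yields $\IE[\widehat\mu_n]=\widehat\Psi(\bm 0)$ (unbiasedness) and then
\begin{align*}
  \Var(\widehat\mu_n)=\IE\bigl|\widehat\mu_n-\widehat\Psi(\bm 0)\bigr|^2=\sum_{\bm 0\neq\bm h\in\mathcal{L}_d^*}|\widehat\Psi(\bm h)|^2,
\end{align*}
which is the assertion.

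The main obstacle is the rigorous justification of the two interchanges of the infinite Walsh series with the finite average over the net and with the expectation over $\bm\Delta$: under the single hypothesis $\Psi\in L^2$ the series converges only in $L^2$, so pointwise evaluation at the net points is delicate. I would resolve this by first proving the identity for finite Walsh partial sums $\Psi_N$ (where every sum is finite and all manipulations are elementary) and then passing $N\to\infty$, controlling the error through $\|\Psi-\Psi_N\|_2\to 0$ (Parseval), the fact that each $\tilde{\bm v}_i$ is marginally $\U[0,1)^d$ so that $\IE|\Psi(\tilde{\bm v}_i)-\Psi_N(\tilde{\bm v}_i)|^2=\|\Psi-\Psi_N\|_2^2$, and the orthonormality that makes the variance functional continuous in $L^2$; the finiteness established in the first paragraph then ensures the limiting sum is exactly the stated one.
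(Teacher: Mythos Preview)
The paper does not actually prove this theorem: it is stated in the appendix with the remark ``see \cite{Lemieux2009} for a detailed proof,'' and is then used as a black box in the proof of Proposition~\ref{prop:VarFourComp}. Your argument is correct and is precisely the standard proof one finds in that reference: expand $\Psi$ in the Walsh basis, use that the digital shift acts as a character (so the Walsh coefficients of $\Psi(\cdot\oplus_b\bm\Delta)$ are $e^{2\pi i\langle\bm h,\bm\Delta\rangle_b}\widehat\Psi(\bm h)$), collapse the sample average via the group-character sum over the net to the indicator of the dual net, and then read off the variance by orthonormality of the Walsh system under the law of $\bm\Delta$. Your handling of the $L^2$-only convergence via finite Walsh truncations and the marginal uniformity of each $\tilde{\bm v}_i$ is also the right way to make the limit rigorous under the sole hypothesis $\Var(\Psi(\bm U))<\infty$.
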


\subsection*{Proofs}
\label{sec:proof}
\iffalse
\begin{proof}[Proof of Corollary~\ref{cor:cond_cop}]
  By Sklar's Theorem,
  $h(x_1,\dots,x_j)=c(F_1(x_1),\dots,F_j(x_j))\prod_{k=1}^jf_k(x_k)$. Therefore,
  \begin{align*}
    H(x_j\,|\,x_1,\dots,x_{j-1})&=\int_{-\infty}^{x_j}h(z_j\,|\,x_1,\dots,x_{j-1})\,dz_j\\
    &=\frac{\int_{-\infty}^{x_j}h(x_1,\dots,x_{j-1},z_j)\,dz_j}{h(x_1,\dots,x_{j-1})}\\
    &=\frac{\int_{-\infty}^{x_j}c(F_1(x_1),\dots,F_{j-1}(x_{j-1}),F_j(z_j))f_j(z_j)\,dz_j}{c(F_1(x_1),\dots,F_{j-1}(x_{j-1}))}\\
    &=\frac{\int_0^{F_j(x_j)}c(F_1(x_1),\dots,F_{j-1}(x_{j-1}),y_j)\,dy_j}{c(F_1(x_1),\dots,F_{j-1}(x_{j-1}))}.
  \end{align*}
  Letting $x_k=F_k^-(u_k)$, $k\in\{1,\dots,j\}$, the result follows from \eqref{eq:CDM:dens}.
\end{proof}
\fi

\begin{proof}[Proof of Proposition~\ref{prop:cond_t_copula}]
  % be *extremely* careful with the notation here... P^{-1}_{1:(j-1),1:(j-1)}
  % is the ``upper-left'' entry of P^{-1} but in general not the same as (P_{1:(j-1),1:(j-1)})^{-1}
  Assume $P=\Bigl(\begin{smallmatrix}
    P_{1:(j-1),1:(j-1)} & P_{1:(j-1),j} \\ P_{j,1:(j-1)} & P_{j,j}
  \end{smallmatrix}\Bigr)$ and
  $P^{-1}=\Bigl(\begin{smallmatrix}
    P^{-1}_{1:(j-1),1:(j-1)} & P^{-1}_{1:(j-1),j} \\ P^{-1}_{j,1:(j-1)} & P^{-1}_{j,j}
  \end{smallmatrix}\Bigr)$ to be positive definite
  matrices. Corollary~\ref{cor:cond_cop} implies that
  \begin{align*}
    C(u_j\,|\,u_1,\dots,u_{j-1})=\frac{\int_{-\infty}^{x_j}h_{\nu, P}(x_1,\dots,x_{j-1},z_j)\,dz_j}{h_{\nu, P_{1:(j-1),1:(j-1)}}(x_1,\dots,x_{j-1})},
  \end{align*}
  where
  \begin{align}\label{eq:tmult_dens}
    h_{\nu, P}(x_1,\dots,x_j)=\frac{\Gamma\bigl(\frac{\nu+j}{2}\bigr)}{\Gamma\bigl(\frac{\nu}{2}\bigr)(\nu\pi)^{j/2}\sqrt{|P|}}\Bigl(1+\frac{\bm{x}\T P^{-1} \bm{x}}{\nu}\Bigr)^{-\frac{\nu+j}{2}}
  \end{align}
  is the density function of $t_{\nu,P}$. Using the block matrix equality
  \begin{align*}
    P^{-1}_{1:(j-1),1:(j-1)}-P^{-1}_{1:(j-1),j}(P^{-1}_{j,j})^{-1}P^{-1}_{j,1:(j-1)}=\bigl(P_{1:(j-1),1:(j-1)}\bigr)^{-1},
  \end{align*}
  we have that
  \begin{align*}
    &\phantom{{}={}}\bm{x}\T P^{-1}\bm{x}\\
    &=\bm{x}_{1:(j-1)}\T
    P^{-1}_{1:(j-1),1:(j-1)}\bm{x}_{1:(j-1)}+x_j^2 P^{-1}_{j,j}+2x_j\bm{x}_{1:(j-1)}\T
    P^{-1}_{1:(j-1),j}\\
    &=\bm{x}_{1:(j-1)}\T\bigl(P_{1:(j-1),1:(j-1)}\bigr)^{-1}\bm{x}_{1:(j-1)}+x_j^2 P^{-1}_{j,j}+2x_j\bm{x}_{1:(j-1)}\T
    P^{-1}_{1:(j-1),j}\\
    &\phantom{={}}+\bm{x}_{1:(j-1)}\T P^{-1}_{1:(j-1),j}(P^{-1}_{j,j})^{-1}P^{-1}_{j,1:(j-1)}\bm{x}_{1:(j-1)}\\
    &=g+k(x_j)^2,
  \end{align*}
  where
  \begin{align*}
    g&=\bm{x}_{1:(j-1)}\T\bigl(P_{1:(j-1),1:(j-1)}\bigr)^{-1}\bm{x}_{1:(j-1)},\\
    k(x_j)^2&=\Bigl(x_j\sqrt{P^{-1}_{j,j}}+s_2\Bigr)^2,\\
    s_2&=\bm{x}_{1:(j-1)}^{\top}P^{-1}_{1:(j-1),j}/\sqrt{P^{-1}_{j,j}}.
  \end{align*}
  We can thus rewrite \eqref{eq:tmult_dens} as
  \begin{align*}
    h_{\nu, P}(x_1,\dots,x_j)=a\Bigl(1+\frac{g}{\nu}\Bigr)^{-\frac{\nu+j}{2}} h_{\nu+j-1}(l(x_j)),
  \end{align*}
  where $h_{\nu+j-1}$ is the density of $t_{\nu+j-1}$ and
  \begin{align*}
    a=\frac{\Gamma\bigl(\frac{\nu+j-1}{2}\bigr)\sqrt{(\nu+j-1)\pi}}{\Gamma\bigl(\frac{\nu}{2}\bigr)(\nu\pi)^{j/2}\sqrt{|P|}},\ l(x_j)=k(x_j)s_1,\ s_1=\sqrt{\frac{\nu+j-1}{\nu+g}}.
  \end{align*}
  Using a change of variable argument, we compute
  \begin{align*}
    \int_{-\infty}^{x_j}h_{\nu, P}(x_1,\dots,x_{j-1},z_j)\,dz_j=a\left(P^{-1}_{j,j}\frac{\nu+j-1}{\nu+g}\right)^{-1/2}t_{\nu+j-1}(l(x_j)).
  \end{align*}
  Consequently,
  \begin{align*}
    C(u_j\,|\,u_1,\dots,u_{j-1})=\sqrt{\frac{|P_{1:(j-1),1:(j-1)}|}{|P| P^{-1}_{j,j}}}t_{\nu+j-1}(l(x_j))=t_{\nu+j-1}(l(x_j)),
  \end{align*}
  where the last equality can be seen as follows. Let
  $P_{j|1:(j-1)}$ be as in \eqref{eq:cond:P}. Since $|P| = |P_{[1:(j-1),1:(j-1)]}| |P_{j|1:(j-1)}|$, and $P_{j|1:(j-1)} = (P^{-1}_{j,j})^{-1}$ by \cite[(0.7.3.1)]{matrixhorn}, we then have
  \begin{align*}
    |P|=|P_{1:(j-1),1:(j-1)}|\cdot|P_{j|1:(j-1)}|=|P_{1:(j-1),1:(j-1)}|/P^{-1}_{j,j}.
  \end{align*}
\end{proof}

\iffalse
\begin{proof}[Proof of Proposition~\ref{prop:Koksma-Hlawka_variable}]
  Follows from the inequality \eqref{eq:Koksma-Hlawka}, equality~\eqref{eq:canonical_rep} and the fact that $\Psi(\bm{u}_i)=\Psi(\phi_C(\bm{v}_i))$.
\end{proof}
\fi

\begin{proof}[Proof of Proposition \ref{prop:BdErrorArchMO}]
  We start by providing more details on the expression \eqref{eq:FaaDiBruno}, which is given by:
 \begin{align*}
& \frac{\partial^l \Psi \circ \phi_C(v_{\alpha_1},\ldots,v_{\alpha_l},\bm{1})}{\partial v_{\alpha_1} \cdots \partial v_{\alpha_l}}
=\\
& \sum_{1 \le |\bm{\beta}| \le l} \frac{\partial^{|\bm{\beta}|} \Psi }{\partial^{\beta_1} u_{1} \ldots \partial^{\beta_d} u_{d}}
 \sum_{s=1}^l \sum_{(\bm{k},\bm{\gamma}) \in p_s(\bm{\beta},\bm{\alpha})}
 c_{\gamma} \prod_{j=1}^s \frac{\partial^{|\gamma_j|} \phi_{C,k_j}(v_{\alpha_1},\ldots,v_{\alpha_l},\bm{1})}{\partial^{\gamma_{j,1}} v_{\alpha_1}\ldots \partial^{\gamma_{j,l}} v_{\alpha_l}}
 % \sum_{k=1}^l
 %\sum_{v_1 \in \bm{\alpha}} \ldots \sum_{v_k \in \bm{\alpha}} \left| \frac{\partial \Psi(u_{\alpha_1},\ldots,u_{\alpha_l},\bm{1}}{\partial u_{v_1}\ldots \partial u_{v_k}} \right||P_l(\partial \phi_{C,v_1}\ldots\phi_{C,v_k})|
 \end{align*}
 where $\bm{\beta} \in \mathbb{N}_0^d$,   $|\bm{\beta}| = \sum_{j=1}^d \beta_j$, and the set  $ p_s(\bm{\beta},\bm{\alpha})$ includes pairs $(\bm{k},\bm{\gamma})$ such that $\bm{k}$ is an $s$-dimensional vector $\bm{k}=(k_1,\ldots,k_s)$ where each $k_j\in\{1,\ldots,d\}$, and $\bm{\gamma}$ is an $sl$-dimensional vector $\bm{\gamma}=(\bm{\gamma}_1,\ldots,\bm{\gamma}_s)$ where each $\bm{\gamma}_j$ is an $l$-dimensional vector whose entries are either 0 or 1, and $\sum_{j=1}^s {\gamma}_{j,i} = 1$ for $i\in\{1,\ldots,l\}$. Finally, the $c_{\bm{\gamma}}$ are constants, which  are defined in detail in \cite{const1996}, along with further information on the precise definition of $p_s(\bm{k},\bm{\gamma})$.  %where $P_l$ is a polynomial of degree $l$ in the partial derivative $\partial \phi_{C,v_1},\ldots,\partial \phi_{C,v_k}$.

As mentioned in Section \ref{subsec:composition}, a sufficient condition to show that $ \|\Psi \circ \Phi_C \|_{d,1} < \infty$ is to establish that all products  of the form \eqref{eq:PartDerivL1Phi_C_Gen} are in $L_1$, which we recall is given by
 \begin{align*}
% \label{eq:PartDerivL1Phi_C_Gen}
  \frac{\partial^{|\bm{\beta}|} \Psi }{\partial^{\beta_1} u_{1} \ldots \partial^{\beta_d} u_{d}}
 %\sum_{s=1}^l \sum_{(\bm{k},\bm{\gamma}) \in p_s(\bm{\beta},\bm{\alpha})}
 \prod_{j=1}^s \frac{\partial^{|\gamma_j|} \phi_{C,k_j}(v_{\alpha_1},\ldots,v_{\alpha_l},\bm{1})}{\partial^{\gamma_{j,1}} v_{\alpha_1}\ldots\partial^{\gamma_{j,l}} v_{\alpha_l}},
 % \frac{\partial^{|\bm{\beta}|} \Psi }{\partial^{\beta_1} u_{1} \ldots \partial^{\beta_d} u_{d}}
 %\prod_{j=1}^d \frac{\partial^{|\gamma_j|} \phi_{C,j}(v_{\alpha_1},\ldots,v_{\alpha_l},\bm{1})}%{\partial^{\gamma_{j,1}} v_{\alpha_1}\ldots\partial^{\gamma_{j,l}} v_{\alpha_l}}
%% \left| \frac{\partial^l \phi_{C,v}}{\partial u_{j_1} \ldots \partial u_{j_l}}\right| \le K(d,l)
% %\quad 1 \le l \le d; 1 \le j_i \le d
 \end{align*}
 for $s\in\{1,\ldots,l\}$ and $(\bm{k},\bm{\gamma}) \in p_s(\bm{\beta},\bm{\alpha})$.
% valid for all points in $[0,1)^d$ \christiane{check}, where $K(d,l)$ is a constant that depends only on $d$ and $l$.

Recall also that for the MO algorithm,  $\phi_{C,l}$ is a function of $v_1$ and $v_{l+1}$ only, for $l=1,\ldots,d$. Hence the only partial derivatives of $\phi_{C,l}$ that are nonzero are those with respect to variables in $\{v_1,v_{l+1}\}$.

%We need to show that the conditions given by  \eqref{eq:PartDerivL1Phi_C_Gen} for a general transformation $\phi_C$ hold in the case of the Marshall--Olkin algorithm, under the conditions stated in the proposition.
Now, since we assume that \eqref{eq:CondPartPhiMO} holds, then it means we just need to show that the product found in \eqref{eq:PartDerivL1Phi_C_Gen}  is in $L_1$, under the conditions stated in the proposition. In turn, we first show that this holds if the following bounds hold for the mixed partial derivatives of $\phi_C$:
 \begin{align}
 &\int_0^1 \left| \frac{\partial \phi_{C,l}(v_1=1,v_{l+1})}{\partial v_{l+1}}\right| dv_{l+1} < \infty,
 \label{cond1}\\
 &\int_{[0,1)^{l}} \left|\frac{\partial^2 \phi_{C,1}(v_1,v_2)}{\partial v_1 \partial v_2} \prod_{j=2}^{l-1} \frac{\partial \phi_{C,j}(v_1,v_{j+1})}{\partial v_{j+1}}\right| dv_1 dv_2 \ldots dv_{l}
<  \infty,  \mbox{ and } \label{cond2}\\
& \int_{[0,1)^{l-1}} \left| \frac{\partial \phi_{C,r}(v_1,v_{r+1}=1)}{\partial v_1} \right| \!\!\left(\prod_{j=1, j\neq r}^{l-1} \left|\frac{\partial \phi_{C,j}(v_1,v_{j+1})}{\partial v_{j+1}}\right| dv_j \right) \!\!dv_{l}
 < \infty \label{cond3}
 \end{align}
 for all $l \le d+1$,

%First, we assume w.l.o.g.\  that $\alpha_j=j$ for $j=1,\ldots,l$, that is, $I=\{1,\ldots,l\}$. (We can do this because the copula is exchangeable.)
We have three cases to consider.

\noindent {\em Case 1: $1 \notin I$.} Then  the product in \eqref{eq:PartDerivL1Phi_C_Gen}  is given by
\begin{align*}
\prod_{j=1}^l\left| \frac{\partial \phi_{C,j}(v_1=1,v_{j+1})}{\partial v_{j+1}}\right|,
\end{align*}
where we assumed w.l.o.g.\  that $I = \{2,\ldots,l+1\}$, $s=l$ and $k_j=j+1$ for $j\in\{1,\ldots,s\}$.
Since each term in the product depends on a distinct variable, the product
 is in $L_1$ if \eqref{cond1} holds.

\noindent {\em Case 2: $1\in I$ and $j$ such that $\gamma_{j,1}=1$ has $k_j+1 \notin I$.} This case can be analyzed w.l.o.g.\  by assuming $I$ is of the form $I=\{1,\ldots,r,r+2,\ldots,l+1\}$ for some $r\ge 1$. In that case,  the products in \eqref{eq:PartDerivL1Phi_C_Gen} are of the form
\begin{align*}
\left| \frac{\partial \phi_{C,r}(v_1,v_{r+1}=1)}{\partial v_1} \prod_{j=1, j\neq r}^{l-1} \frac{\partial \phi_{C,j}(v_1,v_{j+1})}{\partial v_{j+1}}\right|
\end{align*}
and is thus in $L_1$ as long as \eqref{cond3} holds.

\noindent {\em Case 3: $1\in I$ and $j$ such that $\gamma_{j,1}=1$ has $k_j+1 \in I$.}
In this case, we can assume w.l.o.g.\  that $I=\{1,\ldots,l\}$ and therefore the products in \eqref{eq:PartDerivL1Phi_C_Gen} are of the form
\begin{align*}
\left| \frac{\partial^2 \phi_{C,r}(v_1,v_{r+1})}{\partial v_1\partial v_{r+1}} \prod_{j=1, j\neq r}^{l-1} \frac{\partial \phi_{C,j}(v_1,v_{j+1})}{\partial v_{j+1}}\right|
\end{align*}
and is thus in $L_1$ as long as \eqref{cond2} holds.

 The last part of the proof is to show that %when the partial derivatives involved in
 \eqref{cond1}, \eqref{cond2}, and \eqref{cond3}  hold. First we study the partial derivatives involved in these expressions and find they are given by:
 %
 %For \eqref{cond1} and \eqref{cond3}, do not change sign over the domain of their respective integral, then these integrals simplify and amount to take differences/sums of  $\phi_{C,r}(\cdot,\cdot)$ at different values over its domain, which obviously yields a finite value since  $\phi_{C,r}(\cdot,\cdot)$ always takes values in $[0,1]$.
 %
 %These partial derivatives are given by:
 \begin{align*}
 \frac{\partial \phi_{C,1}(v_1,v_2)}{\partial v_1} &= \psi'\left(-\frac{\log(v_2)}{x_1}\right) \frac{\log v_2}{x_1^2} \frac{\partial x_1}{\partial v_1}, \\
  \frac{\partial \phi_{C,1}(v_1,v_2)}{\partial v_2} &= -\psi'\left(-\frac{\log(v_2)}{x_1}\right) \frac{1}{x_1 v_2},\\
   \frac{\partial^2 \phi_{C,1}(v_1,v_2)}{\partial v_1 \partial v_2} &=
   \frac{\partial x_1}{\partial v_1} \frac{1}{v_2x_1^2}\left[\psi'\left(-\frac{\log v_2}{x_1}\right)-\frac{\log v_2}{x_1} \psi''\left(-\frac{\log v_2}{x_1}\right)\right],
 \end{align*}
 where $x_1 = F^{-1}(v_1)$ and $\frac{\partial x_1}{\partial v_1}= 1/f(x_1)$, where $f$ is the pdf corresponding to $F$, which exists since we assumed $F$ was continuous. Now, the partial derivatives with respect to either $v_1$ or $v_2$ are clearly non-negative for all $v_1$ and $v_2$.
 Hence it is easy to see that  \eqref{cond1} and  \eqref{cond3} hold, because we can remove the absolute value inside the integrals and therefore, these integrals amount to take differences/sums of  $\phi_{C,r}(\cdot,\cdot)$ at different values over its domain, which obviously yields a finite value since  $\phi_{C,r}(\cdot,\cdot)$ always takes values in $[0,1]$.

 As for the mixed partial derivative with respect to $v_1$ and $v_2$, our assumption on $\psi'(t)+t\psi''(t)$ implies we have at most one sign change over the domain of the integral. If there is no sign change, the argument used in the previous paragraph to handle \eqref{cond1} and \eqref{cond3} can be used to show \eqref{cond2} is bounded. If there is one sign change, then we let  $t^*$ be  such that
\begin{align*}
  \psi'(t)+t \psi''(t) \le 0\quad\text{for } 0 \le t \le  t^*\text{ and }\psi'(t)+t \psi''(t) \ge 0\quad\text{for } t^* \ge  t.
\end{align*}
Then let $q(v)$ be a function such that $-\log q(v)/F^{-1}(v) = t^*$. For instance, one can verify that for the Clayton copula, $q(v)=e^{-\theta F^{-1}(v)}$. When integrating the absolute value of the mixed partial derivative $\partial^2 \phi_{C,1}(v_1,v_2)/\partial v_1 \partial v_2$, we get
\begin{align}
&\int_0^1\!\!\frac{\partial x_1}{\partial v_1} \frac{1}{x_1^2}\!\left[\int_0^{q(v_1)}\!\!\frac{1}{v_2}\!\left(\!\psi'\left(-\frac{\log v_2}{x_1}\right)-\frac{\log v_2}{x_1}\psi''\left(-\frac{\log v_2}{x_1}\right)\!\right)dv_2 \right.\notag \\
&\left.+ \int_{q(v_1)}^1\frac{1}{v_2}\left(-\psi'\left(-\frac{\log v_2}{x_1}\right)+\frac{\log v_2}{x_1}\psi''\left(-\frac{\log v_2}{x_1}\right)\right) dv_2\right]dv_1 \notag\\
=&\,2\int_0^1 \frac{\partial x_1}{\partial v_1} \frac{1}{x_1^2} \left[\psi'(-\log q(v_1)/x_1) \log q(v_1)\right]dv_1 \notag \\
=&-2t^* \psi'(t^*) \int_0^1 \frac{1}{F^{-1}(v_1)} \frac{\partial F^{-1}(v_1)'}{\partial v_1} dv_1
=-2t^* \psi'(t^*)  \log F^{-1} (v_1)\bigl|_0^1. \label{eq:Finv}
\end{align}
Now, in most cases $F^{-1}(1)$ is not bounded, and thus we cannot prove that
$\Psi \circ \phi_C$ has bounded variation. However, from there we can still get
the upper bound on the error given in the result, by using a technique initially
developed by  \cite{Sobol73} to handle improper integrals, and later by
\cite{HartKeinTichy2004} to deal with unbounded integration problems taken
w.r.t.\ to a measure that is not necessarily uniform (as studied in Section \ref{subsec:TransfSamples}). Note that to apply their approach more easily, we need to make a small change and assume that rather than generating $V$ as $F^{-1}(v_1)$, we use $F^{-1}(1-v_1)$, so that in our study of the variation above (via the integral of the absolute value of the mixed partial derivatives), the boundedness condition fails at   $v_1=0$ instead of $v_1=1$. Following the approach in \cite{HartKeinTichy2004} (see their Equation (24)) and taking $\bm{c}=(1/pn,0,\ldots,0)$, the integration error satisfies
\begin{align*}
  &\phantom{{}={}}\biggl|\frac{1}{n} \sum_{i=1}^n \Psi(\bm{u}_i) -\IE[\Psi(\bm{U})]\biggr|\\
  &\le \frac{1}{pn} \Psi(1,\ldots,1) +D^*(P_n) V_{[\bm{c},\bm{1}]}(\Psi \circ \phi_C) + I_{rest}
\end{align*}
where $V_{[\bm{c},\bm{1}]}(\Psi \circ \phi_C)$ denotes the variation of $\Psi \circ \phi_C$ over $[\bm{c},\bm{1}]$ and
\begin{align*}
I_{rest} = \left|\int_{\bm{0}}^{\bm{1}} \Psi \circ \phi_C(\bm{v}) d\bm{v} - \int_{\bm{c}}^{\bm{1}} \Psi \circ \phi_C(\bm{v}) d\bm{v}  \right| \le \frac{M}{pn} \mbox{ for some } M>0,
\end{align*}
since we assumed $|\psi(\bm{u})|$ was bounded. As for $V_{[\bm{c},\bm{1}]}(\Psi \circ \phi_C)$, we can infer from the steps that led to \eqref{eq:Finv} that it is bounded by a constant times $\log F^{-1}(1-1/pn) \le a \log n + \log c$ by assumption. Therefore there exists a constant $K^{(d)}$ such that $V_{[\bm{c},\bm{1}]}(\Psi \circ \phi_C) \le K^{(d)} \log n$.
\end{proof}

\begin{proof}[Proof of Proposition \ref{prop:MOdisc}]
Let $p_l$ be such that $P(V=l)=p_l$, for $l \ge 1$. Let $P_l = \sum_{k=1}^l p_k$
for $l \ge 1$ and $P_0=0$. We also let $\phi_C^l(v_2,\ldots,v_{d+1})$ $= \phi_C(P_{l-1},v_2,\ldots,v_{d+1})$ for $l \ge 1$ (transformation $\phi_C$ when $v_1$ generates the value $l$ for $V$).
Consider a given value of $n$ and low-discrepancy point set $P_n$. If we use inversion to generate $V$, then we have that the subset $P_n^l = \{\bm{v}_i: P_{l-1} < v_{i,1} \le P_l\}$ will be used to produce copula samples with $V=l$. Let  $\tilde{n}_l = |P_n^l|$ and $n_l = np_l$. It is clear that if $l$ becomes too large, then $\tilde{n}_l$ will eventually be 0. Let $L(n)$ be the largest value of $l$ such that $\tilde{n}_{l} >0$, and let $\tilde{p}_l = \tilde{n}_l/n$. Then we can write
{\allowdisplaybreaks\begin{align*}
&\biggl| \int_{[0,1)^{d+1}} \Psi \circ \phi_C(\bm{v})d\bm{v} - \frac{1}{n} \sum_{i=1}^n \Psi \circ \phi_C(\bm{v}_i)\biggr| \\
\le&\biggl| \sum_{l=1}^{L(n)} p_l \left(\int_{[0,1)^d} \Psi \circ \phi_C^l (\bm{v}) dv_2 \ldots dv_{d+1}
- \frac{1}{n_l}\sum_{P_n^l}\Psi \circ \phi_C(\bm{v}_i) \right)\biggr| \\
&+ \sum_{l=L(n)+1}^{\infty} p_l \biggl|\int_{[0,1)^d} \Psi \circ \phi_C^l (\bm{v}) dv_2 \ldots dv_{d+1} \biggr| \\
\le& \sum_{l=1}^{L(n)} \tilde{p}_l \biggl| \int_{[0,1)^d} \Psi \circ \phi_C^l (\bm{v}) dv_2 \ldots dv_{d+1}
- \frac{1}{\tilde{n}_l}\sum_{P_n^l}\Psi \circ \phi_C(\bm{v}_i) \biggr|\\
&+  \sum_{l=L(n)+1}^{\infty} p_l \biggl|\int_{[0,1)^d} \Psi \circ \phi_C^l (\bm{v}) dv_2 \ldots dv_{d+1} \biggr|  \\
&+ \sum_{l=1}^{L(n)} \biggl|(p_l-\tilde{p}_l) \int_{[0,1)^d} \Psi \circ \phi_C^l (\bm{v}) dv_2 \ldots dv_{d+1} \biggr| \\
\le&  \sum_{l=1}^{L(n)}( \tilde{p}_l A(n,d)) + B(n,d)+C(n,d),
\end{align*}}%
where $A(n,d)$, $B(n,d)$, and $C(n,d)$ are bounds such that
\begin{align*}
\biggl|  \int_{[0,1)^d} \Psi \circ \phi_C^l (\bm{v}) dv_2 \ldots dv_{d+1}
- \frac{1}{\tilde{n}_l}\sum_{P_n^l}\Psi \circ \phi_C(\bm{v}_i) \biggr| &\le A(n,d) \\
\sum_{l=L(n)+1}^{\infty} p_l \biggl|\int_{[0,1)^d} \Psi \circ \phi_C^l (\bm{v}) dv_2 \ldots dv_{d+1} \biggr|
&\le B(n,d) \\
\sum_{l=1}^{L(n)} \biggl|(p_l-\tilde{p}_l) \int_{[0,1)^d} \Psi \circ \phi_C^l (\bm{v}) dv_2 \ldots dv_{d+1}\biggr|
&\le C(n,d).
\end{align*}

First, by definition of $D^*(P_n)$ we have $|\tilde{n}_l - n_l| \le 2nD^*(P_n)$ and thus
$|\tilde{p}_l-p_l| \le 2D^*(P_n)$. Hence we can take $C(n,d) = 2\E(|\Psi(\bm{U})|)D^*(P_n)$. Similarly, we can show that $\sum_{l=L(n)+1}^{\infty} p_l  \le D^*(P_n)$ and can therefore take $B(n,d) = \E(|\Psi(\bm{U})|) D^*(P_n)$. The analysis of the expression to be bounded by $A(n,d)$ is more complicated.
First, we note that under the assumption we have on $\Psi$ and its partial derivatives, we need to show that the product in \eqref{eq:PartDerivL1Phi_C_Gen} is in $L_1$, but where each $\phi_{C,k_j}$ is replaced by $\phi_{C,k_j}^l$ for a given $l$. Since  $\phi_{C,k_j}^l$ is solely a function of $v_{k_j+1}$, then it means that the only relevant products to consider are of the form
\begin{equation}
\label{eq:ProdDiscMO}
\prod_{j=1}^r \frac{\partial \phi_{C,k_j}^l(v_{k_j+1})}{\partial v_{k_j+1}}
\end{equation}
in which each term is of the form
$
-\psi^{'} \left(\frac{-\log v_{k_j+1}}{l}\right) \frac{1}{l v_{k_j+1}}
$
which is non-negative for any $v_{k_j+1}$. Using a similar reasoning to the one used in the proof of
Proposition \ref{prop:BdErrorArchMO} (to conclude that \eqref{cond1} and \eqref{cond3}  hold), it is easy to see that \eqref{eq:ProdDiscMO}  is in $L_1$.
%%%
%combined with and its proof, it is fairly easy to see that the function $\Psi \circ \phi_C^l$ has bounded variation in the sense of Hardy and Krause for each $l \ge 1$.
%%%

What remains to be done is to analyze the discrepancy of $P_n^l$. That is, here we are looking for a bound on $\sup_{\bm{z} \in {\cal J}^*} |E(\bm{z};P_n^l)|$, where we recall that $ {\cal J}^*$ is the set of intervals of $[0,1)^d$ of the form $\bm{z} = \prod_{j=1}^d [0,z_j)$, where $0 < z_j \le 1$. %, where ${\cal J}^*$ includes all intervals of the form $\prod_{j=2}^{d+1} [0,z_j)$.
So consider a given $\bm{z} \in [0,1)^d$. Then $E(\bm{z};P_n^l)  = A(\bm{z};P_n^l)/\tilde{n}_l - \lambda(\bm{z})$. Let $\bm{z}_1 = (P_l,\bm{z})$ and $\bm{z}_2 = (P_{l-1},\bm{z})$, which are both in   $[0,1)^{d+1}$. Note that $A(\bm{z}_1;P_n) - A(\bm{z}_2;P_n) = A(\bm{z};P_n^l)$. By definition of $D^*(P_n)$, it is not hard to see that
\begin{align*}
\biggl|\frac{(A(\bm{z}_1;P_n)-A(\bm{z}_2;P_n))}{n}-p_l \lambda(\bm{z})\biggr| \le 2D^*(P_n)
\end{align*}
and therefore
\begin{align*}
\biggl|\frac{A(\bm{z};P_n^l)}{\tilde{n}_l} -\frac{n_l}{\tilde{n}_l}\lambda(\bm{z})\biggr| \le 2D^*(P_n) \frac{n}{\tilde{n}_l}.
\end{align*}
Using the fact that $|\tilde{n}_l-n_l| \le 2nD^*(P_n)$, after some further simplifications we get that
\begin{align*}
\biggl|\frac{A(\bm{z};P_n^l)}{\tilde{n}_l} -\lambda(\bm{z})\biggr| \le 4D^*(P_n) \frac{n}{\tilde{n}_l}.
\end{align*}
Hence we can take $A(n,d) = 4D^*(P_n)\frac{n}{\tilde{n}_l}$ and then get $\sum_{l=1}^{L(n)} \tilde{p}_l A(n,d) \le 4 L(n)D^*(P_n)$. To show that the overall bound for the integration error is of the form $(\log n)D^*(P_n)$ times a constant, we simply need to show that $L(n) \in O(\log n)$. But this follows from our assumptions on $P_n$ and $F$, since by definition,  $L(n)$ is the largest integer such that
  $1-F(L(n)) > 1/pn$ but we also have $1-F(L(n)) \le cq^{L(n)}$, hence
  \[
  1/pn < cq^{L(n)} \Leftrightarrow  L(n) \log (1/q) - \log c <\log p + \log n
  \]
  and thus $L(n) \le (\log n + \log p + \log c)/\log(1/q)$,
  %\Leftrightarrow
%\log p + \log n > \log (1/(1-F(L(n)))).
%\]
  %Now $1-F(L(n)) \le cq^{L(n)}$ and $\log (1/(cq^l)) = -\log c + l \log (1/q)$ so $L(n) <( \log n+ \log p+\log c)/\log (1/q)$,
  as required.
\end{proof}

%%%%%%%%%%%%%%%%%%%%%%%%%%%%%%%%%%%%%%%%%%%%%%
\appendix
\section*{Online supplement}

\subsection*{Additional numerical results}
Here we provide a few additional results for the experimental setup described in
Section \ref{sec:num}, namely for the finance and insurance examples (see
Figures \ref{fig:clayton-t:allocations:mid} and
\ref{fig:clayton-t:allocations:first}) and then for the test functions (see
Figures \ref{fig:test:tau:0.5:d:5} and \ref{fig:test:tau:0.5:d:15}).

\begin{figure}[htbp]
\centering
\includegraphics[width=0.425\textwidth]{./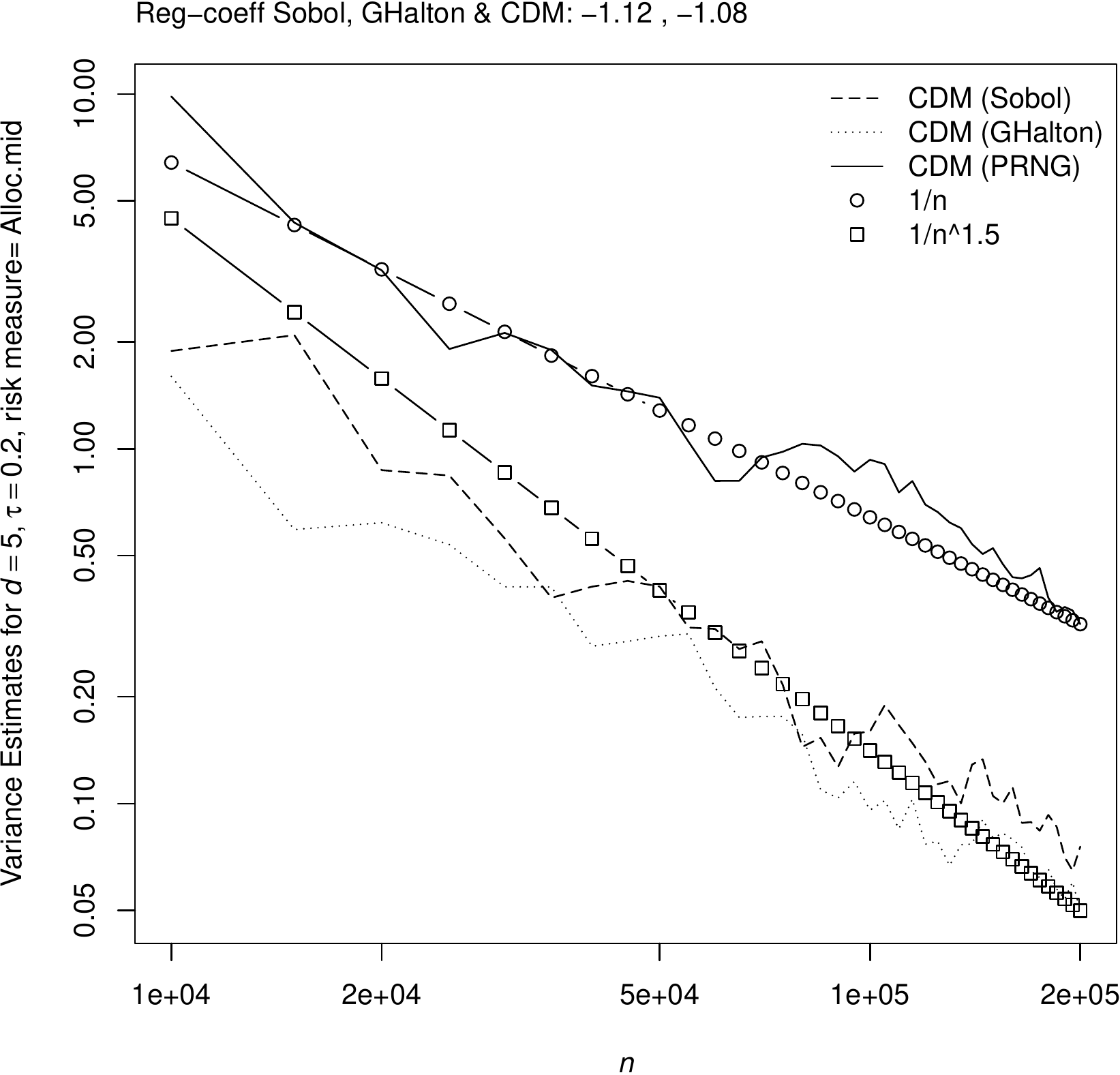}\\[2mm]
\includegraphics[width=0.425\textwidth]{./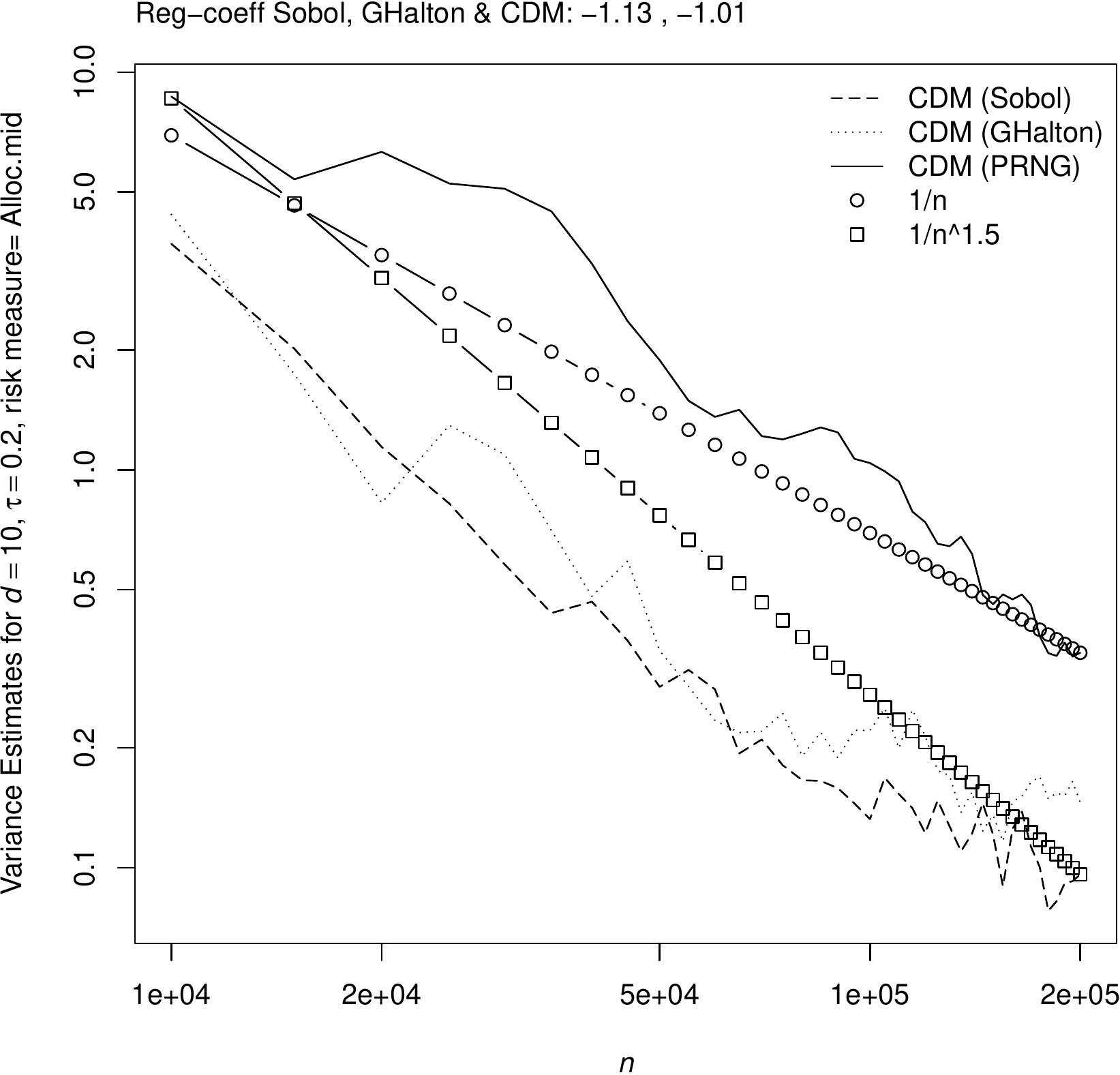}\\[2mm]
\includegraphics[width=0.425\textwidth]{./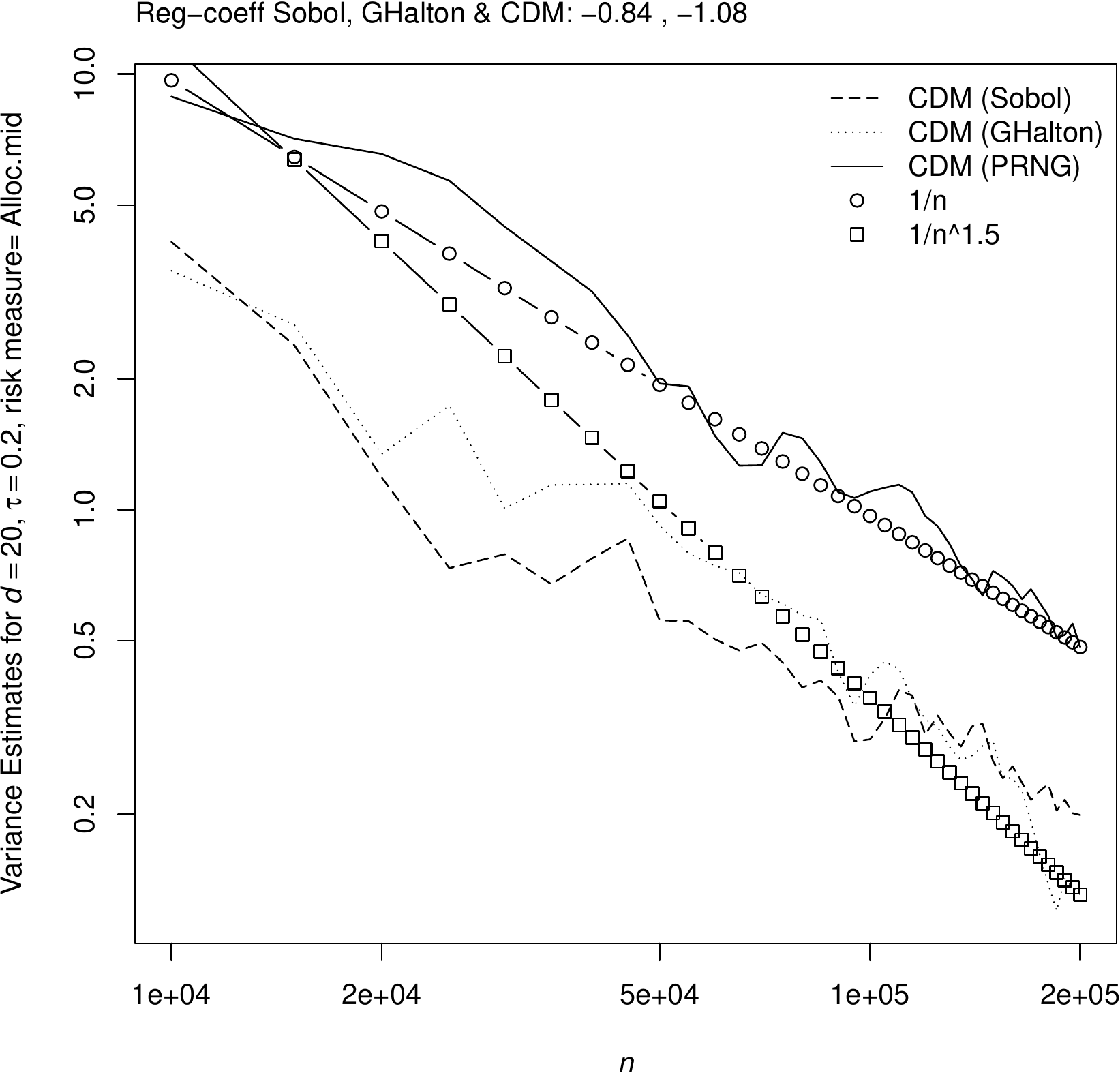}%
\caption{Variance estimates for the functionals Allocation First for lognormal margins and an
  exchangeable $t$ copula with three degrees of freedom such that Kendall's tau equals 0.2 based on $B = 25$ repetitions for $d=5$ (top), $d=10$ (middle) and
  $d=20$ (bottom).}
\label{fig:clayton-t:allocations:mid}
\end{figure}

\begin{figure}[htbp]
\centering
\includegraphics[width=0.425\textwidth]{./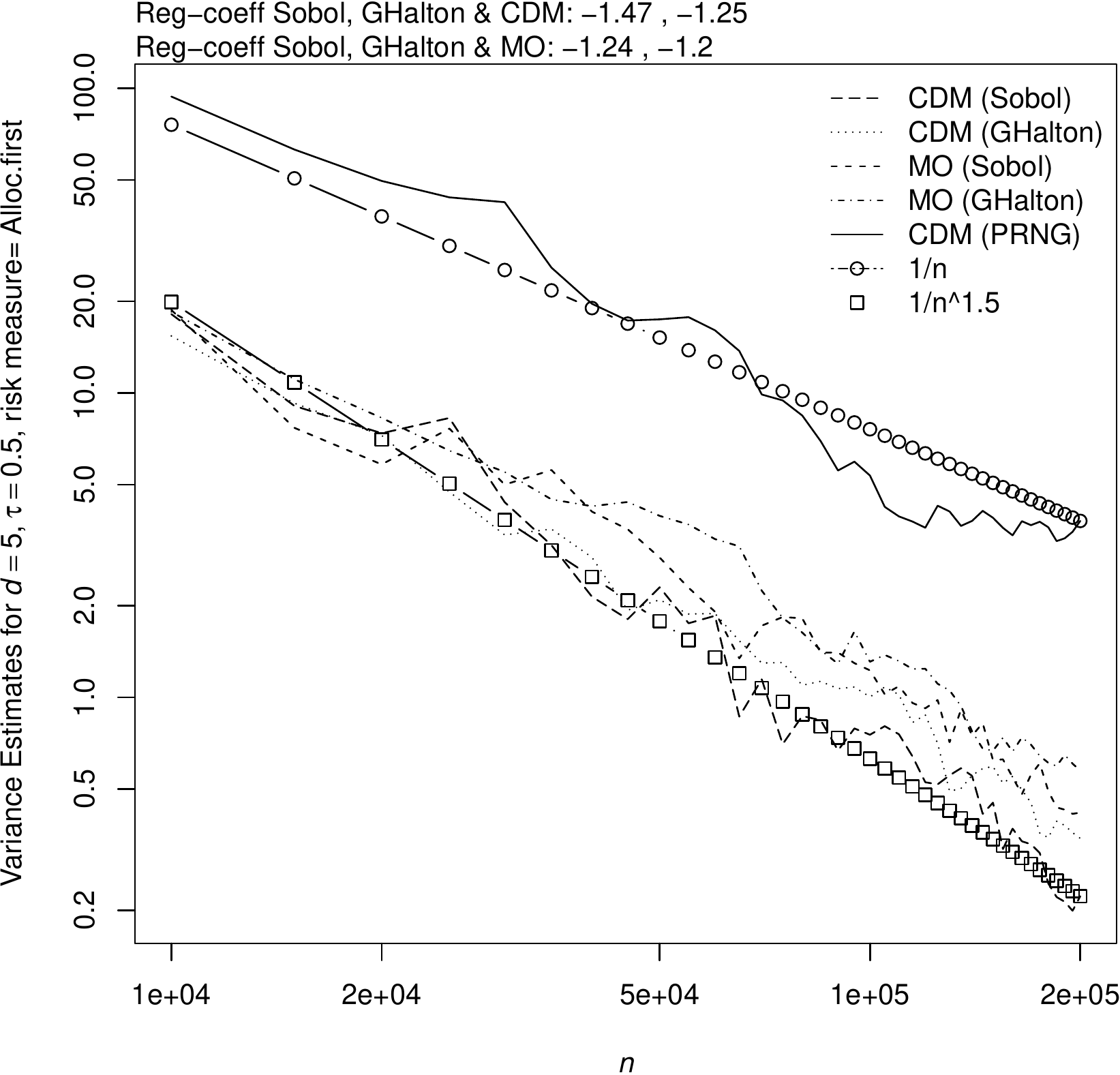}\\[2mm]
\includegraphics[width=0.425\textwidth]{./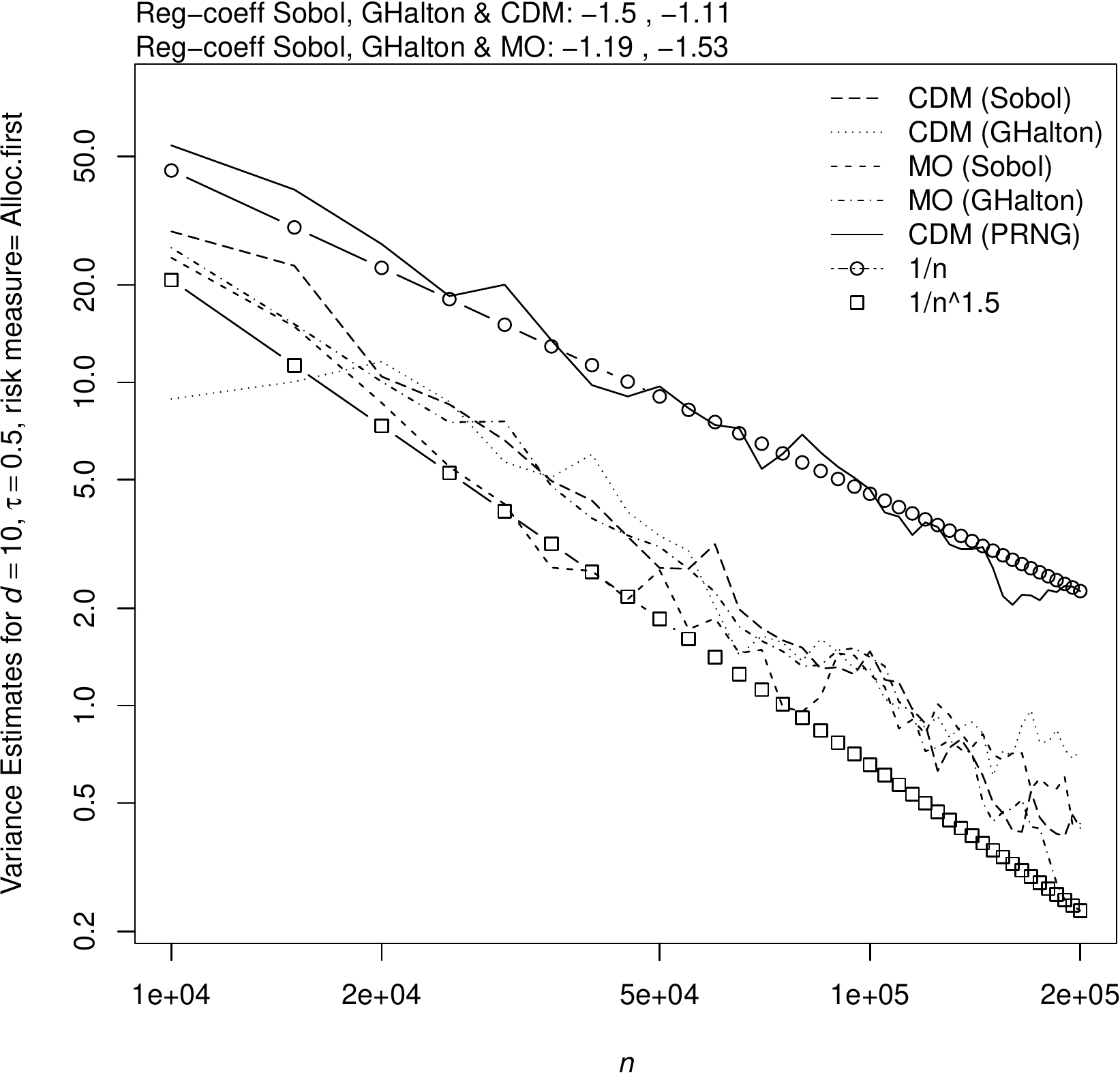}\\[-0.8mm]
\includegraphics[width=0.425\textwidth]{./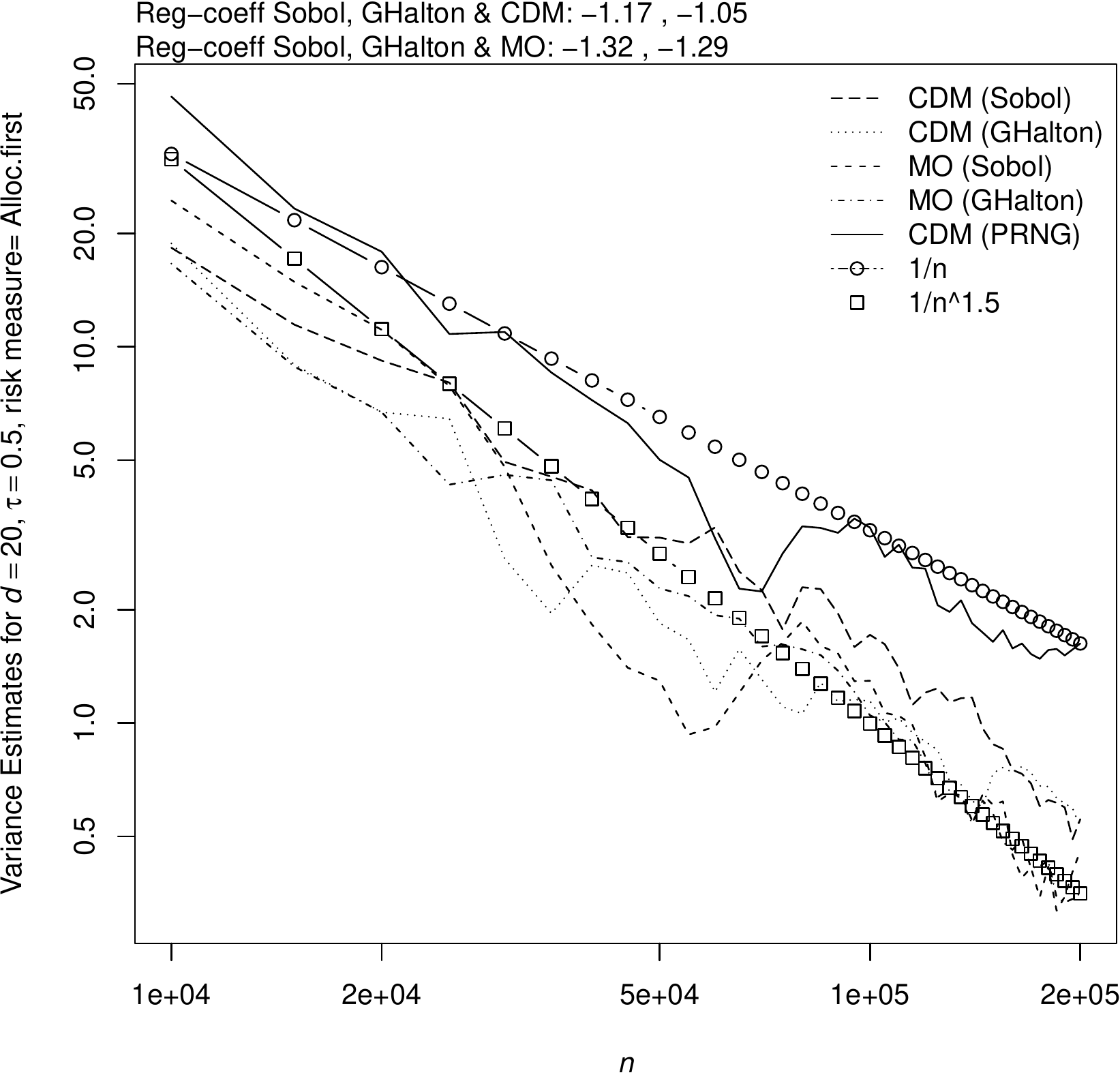}%
\caption{Variance estimates for the functionals Allocation Middle with Pareto margins and for a Clayton copula with parameter such that Kendall's tau
  equals 0.5 based on $B = 25$ repetitions  for $d=5$ (top), $d=10$ (middle) and
  $d=20$ (bottom).}
\label{fig:clayton-t:allocations:first}
\end{figure}

\begin{figure}[htbp]
\centering
\includegraphics[width=0.425\textwidth]{./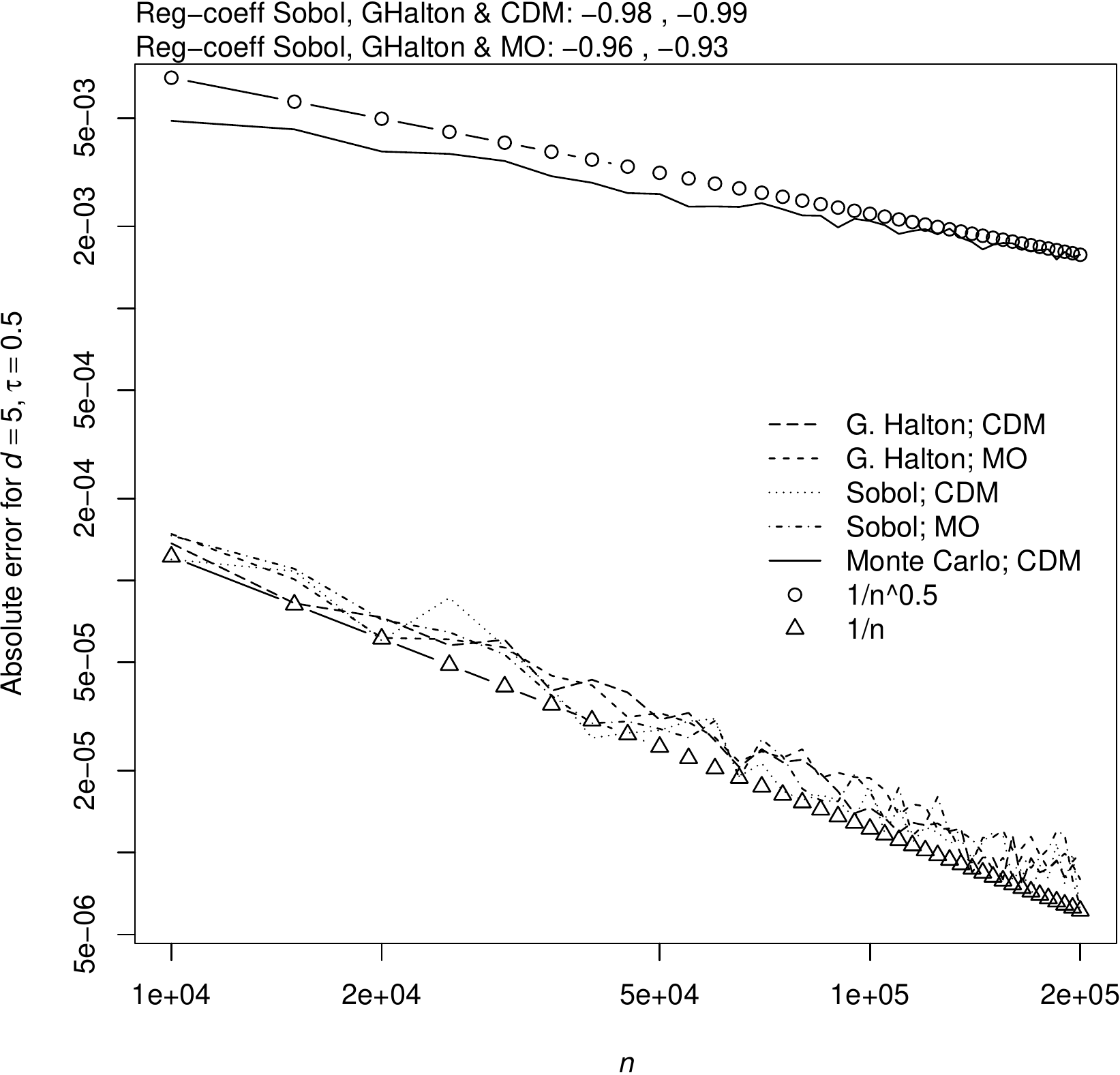}\\[3.3mm]
\includegraphics[width=0.425\textwidth]{./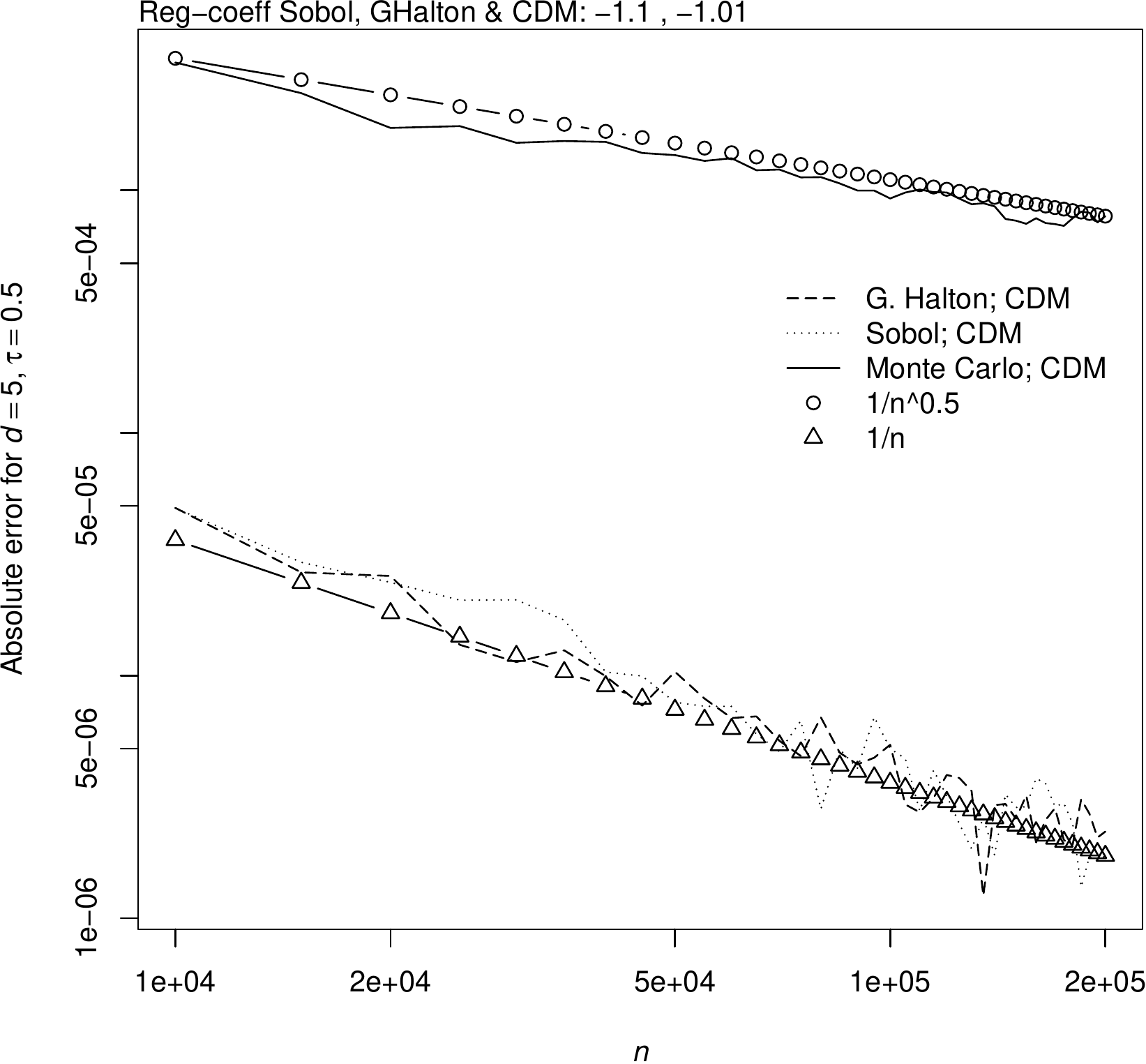}\\[-0.8mm]
\includegraphics[width=0.425\textwidth]{./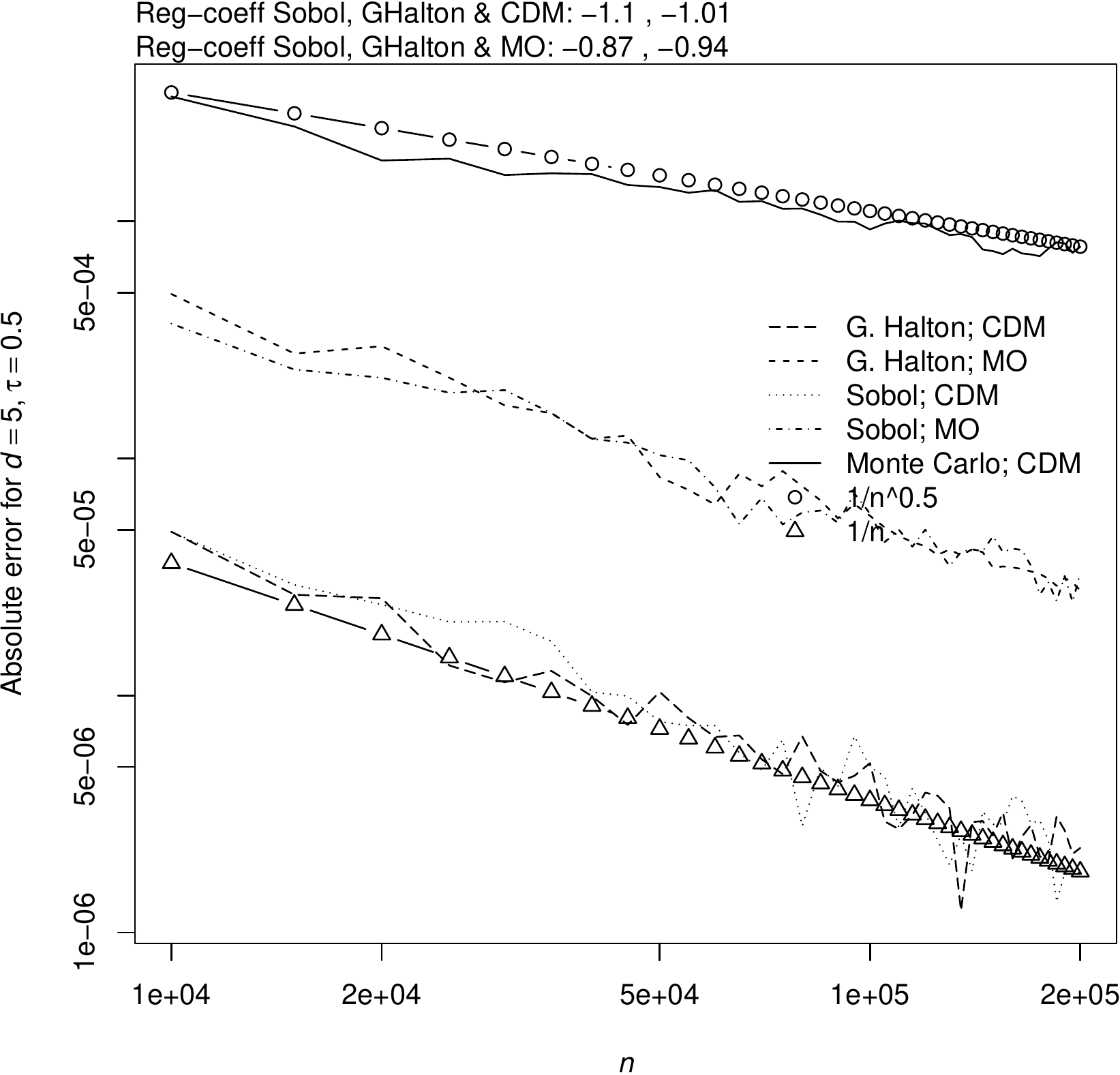}%
\caption{Average absolute errors for the test functions
    $\Psi_1(\bm{u})=3(u_1^2 + \ldots + u_d^2)/d$ (top) and $\Psi_2(\bm{u})=g_1((\phi^{\text{CDM}})^{-1}(\bm{u}))$ (bottom) for a Clayton copula with parameter such that Kendall's tau
  equals 0.5 based on $B = 25$ repetitions  for $d=5$: the middle plot shows results for $\Psi_1(\bm{u})$ and an exchangeable $t$ copula with three degrees of freedom and Kendall's tau of 0.2.}
\label{fig:test:tau:0.5:d:5}
\end{figure}

\begin{figure}[htbp]
\centering
\includegraphics[width=0.425\textwidth]{./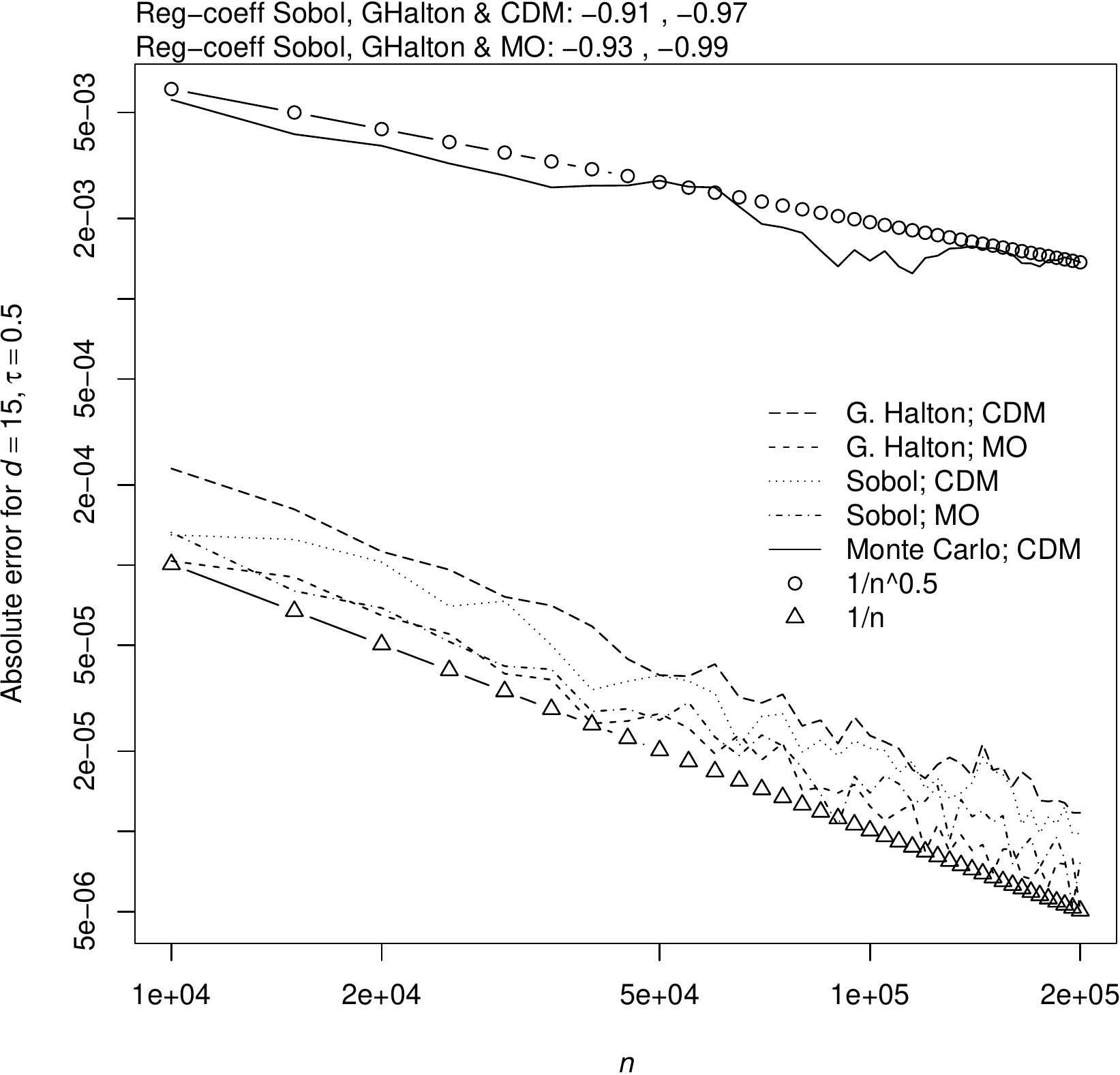}\\[2mm]
\includegraphics[width=0.425\textwidth]{./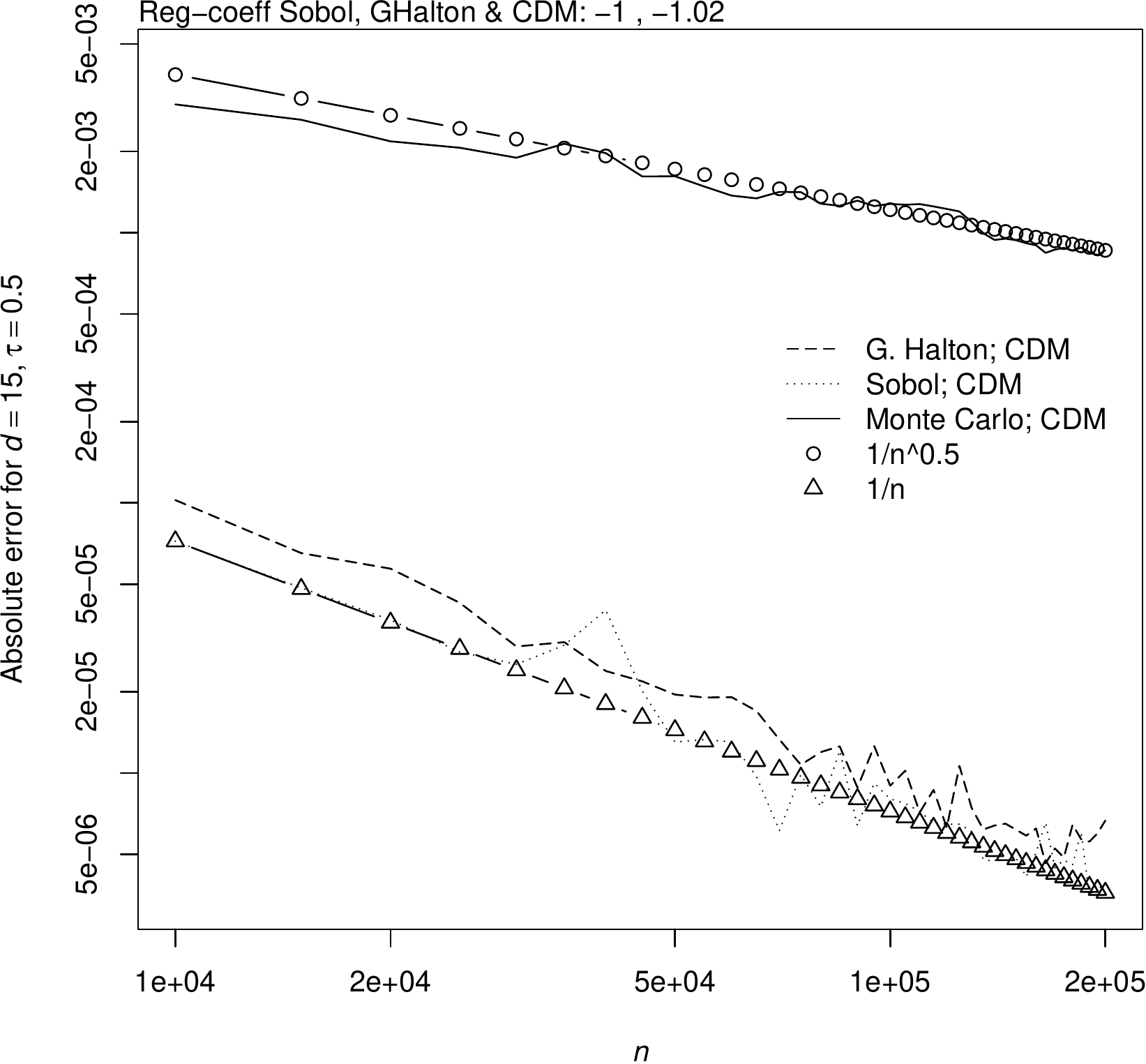}\\[-0.8mm]
\includegraphics[width=0.425\textwidth]{./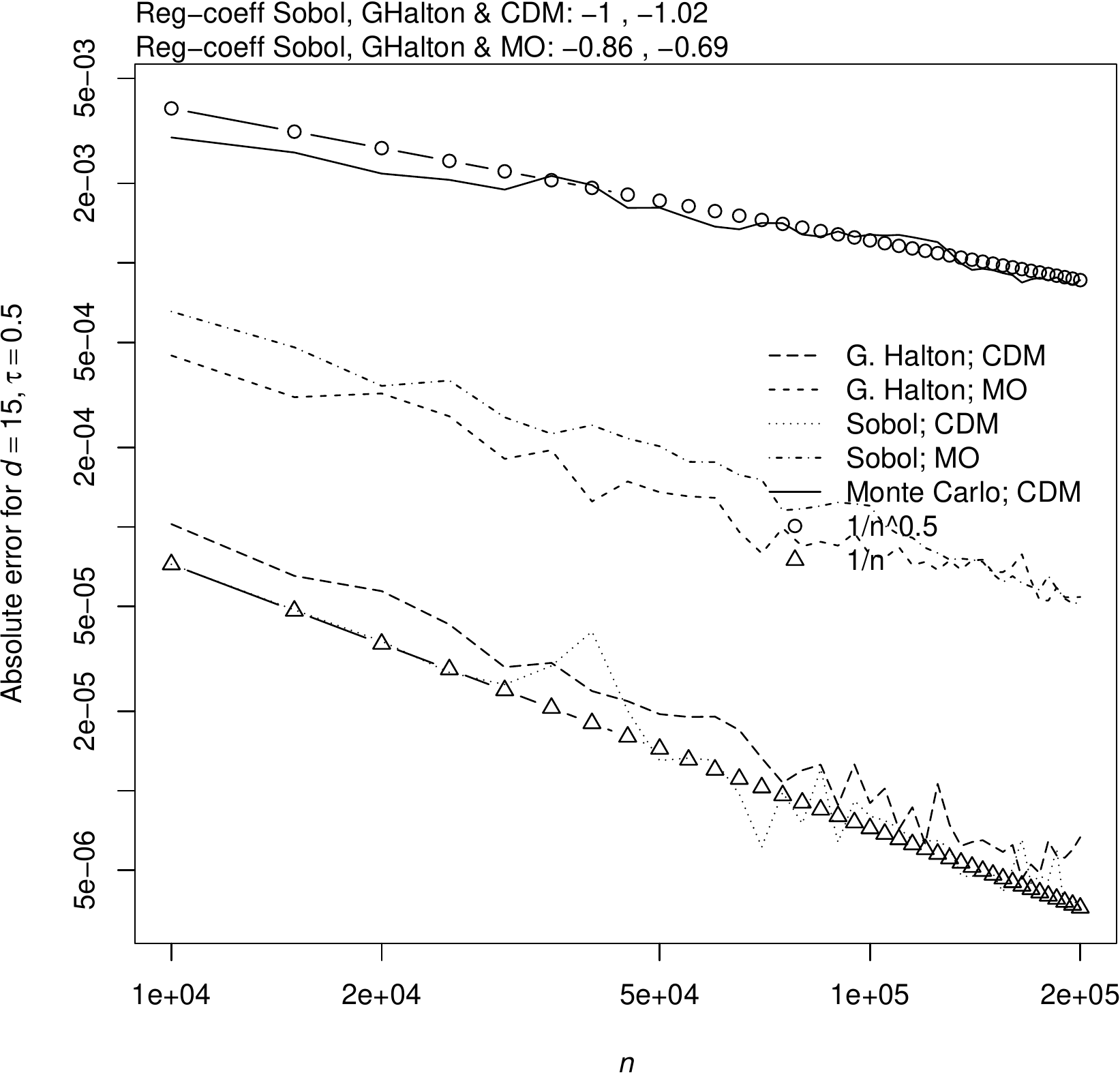}%
\caption{Average absolute errors for the test functions
    $\Psi_1(\bm{u})=3(u_1^2 + \ldots + u_d^2)/d$ (top), $\Psi_2(\bm{u}) =g_1((\phi^{\text{CDM}})^{-1}(\bm{u}))$ (bottom) for a Clayton copula with parameter such that Kendall's tau
  equals 0.5 based on $B = 25$ repetitions  for $d=15$: the middle plot shows results for $\Psi_1(\bm{u})$ and an exchangeable $t$ copula with three degrees of freedom and Kendall's tau of 0.2.}
\label{fig:test:tau:0.5:d:15}
\end{figure}

\end{document}

%%% Local Variables:
%%% mode: latex
%%% TeX-master: t
%%% End: